\begin{document}

\title[Is Integer Linear Programming All You Need for Deletion Propagation?]{Is Integer Linear Programming All You Need \\for  Deletion Propagation?}
\titlenote{
Inspired by the 2017 attention paper~\cite{NIPS2017_3f5ee243},
an increasing number of research papers promise that ``X is all you need (for Y).''
Similarly, our conjecture is that \emph{Integer Linear Programs can be designed to solve all PTIME cases of deletion propagation in guaranteed PTIME} and, hence, there is no more need for specialized combinatorial algorithms. 
We give strong evidence of this conjecture by showing that it holds for all currently known tractable cases. 
However, since it is a conjecture, we phrase our title as a question.
}

\newcommand\sbtitle{A Unified and Practical Approach for Generalized Deletion Propagation}
\subtitle{\sbtitle}

\author{Neha Makhija}
\orcid{0000-0003-0221-6836}
\affiliation{%
    \orcidicon{0000-0003-0221-6836}
	Northeastern University\country{USA}}
\email{makhija.n@northeastern.edu}

\author{Wolfgang Gatterbauer}
\orcid{0000-0002-9614-0504}
\affiliation{%
    \orcidicon{0000-0002-9614-0504}
	Northeastern University\country{USA}}
\email{w.gatterbauer@northeastern.edu}

\begin{abstract}

Deletion Propagation (DP) refers to a family of database problems rooted in the classical view-update problem: 
how to propagate intended deletions in a view (query output) 
back to the source database while satisfying constraints and minimizing side effects. 
Although studied for over 40 years, DP variants, their complexities, and practical algorithms have been typically explored in isolation.

This work presents a unified and generalized framework for DP with several key benefits: 
(1) It \emph{unifies and generalizes} all previously known DP variants, 
effectively subsuming them within a broader class of problems, including new, well-motivated variants.
(2)~It comes with a practical and general-purpose algorithm that is ``\emph{coarse-grained instance-optimal}'': 
it runs in PTIME for all known PTIME cases and can \emph{automatically exploit structural regularities} 
in the data,
i.e.\ it does not rely on hints about such regularities as part of the input.
(3) It is \emph{complete}: 
our framework handles all known DP variants in all settings (including those involving self-joins, unions, and bag semantics), 
and allows us to provide new complexity results. 
(4) It is \emph{easy to implement} and, in many cases, outperforms prior variant-specific solutions, sometimes by orders of magnitude. 
We provide the first experimental results for several DP variants previously studied only in theory.

\end{abstract}

\maketitle
\setcounter{page}{1}

\begingroup
\renewcommand\thefootnote{}\footnote{\noindent
This an extended version of the paper with the same title “Is Integer Linear Programming All You Need for Deletion Propagation? A Unified and Practical Approach for Generalized Deletion Propagation” published in Proceedings of the VLDB Endowment, Vol 18, No. 8, (VLDB), 2025.
DOI: \url{https://doi.org/10.14778/3742728.3742756}. 
Our code is available online: \url{https://github.com/northeastern-datalab/generalized-deletion-propagation}. 
This work is licensed under a Creative Commons Attribution International 4.0 License. \\
}\addtocounter{footnote}{-1}\endgroup

\section{Introduction}

Deletion Propagation (DP) was proposed as early as 1982~\cite{Dayal82} and corresponds to a basic view-update problem: 
\emph{Given a tuple we want to delete from a view, which tuples from the source database should we to delete  to accomplish this goal?}
Because multiple view tuples may depend on the same base tuple in the source database, 
deleting that base tuple
can result in unintended side effects beyond the requested deletion.
The challenge is then to delete just enough tuples from the source to achieve the intended view deletion while avoiding unnecessary side effects. This forms a combinatorial optimization problem.
Different optimization goals and different choices about what constitutes a side effect lead to several well-motivated variants of DP that have been studied over the last 40+ years.
Some variants are used for query explainability,
where both \emph{contrastive} or \emph{abductive} explanations~\cite{Marques-Silva2023} can be obtained with 
appropriate choices of side effects and optimization goals.

``Side effects'' are usually measured in the number of tuples affected by a modification. 
Two important types of side effects that have been studied are source side effects~\cite{Buneman:2002,Dayal82} and view side effects~\cite{Buneman:2002,KimelfeldVW12,Kimelfeld12}:
\emph{Source side effects} ($\dpss$) measure the number of tuples deleted from the source database to fulfill the user request, 
while \emph{view side effects} ($\dpvs$) measure the number of unintended tuples deleted from the same view.
A recent variant on the DP-SS problem is the \emph{aggregated deletion propagation} ($\adp$) problem~\cite{ADP} in which a certain number of tuples should be deleted from the view, but it is not specified which tuples.
A different, seemingly unrelated
problem is the recently proposed \emph{smallest witness problem} ($\swp$)~\cite{miao2019explaining,hu_et_al:LIPIcs.ICDT.2024.24}, where a user would like to preserve the view as is, but delete as many tuples from the source as possible.
Although SWP has so far not been understood to be a variant of DP, we show that is problem shares the same structure as other DP problems, can be solved using the same techniques, and -- when combined with other DP problems -- opens up a new space of natural DP variants.

Despite the long history of Deletion Propagation, at least 3 challenges remain. 
This paper shows is that these 3 challenges can be largely addressed by casting the existing problems as special cases of a unified ``General Deletion Propagation'' framework.

\introparagraph{Challenge 1: Countless well-motivated variants}
DP has been studied in many forms over the last 40+ years.
However, one can imagine many more variants that are all well-motivated, and that have not yet been studied. 
These variants can arise from different definitions of side effects, different constraints on allowed side effects, and different optimization goals. 
\cref{ex:intro} gives just one such example of DP that has not been described by prior work
(we explore the wider space of variants more thoroughly in \cref{sec:generalized-deletion-propagation}).

\begin{example}
\label{ex:intro}
An airline company wants to cut costs by reducing the number of flights it offers, and reduce its total operational expenditure by at least $2 \%$.
There are various types of costs incurred by the flight company, such as the fuel cost of the flight and the airport fee at the locations they operate at.
While cutting costs, the airline wants to ensure that it minimizes the effect on its connectivity network i.e.\ pairs of locations that are connected directly or via $1$ layover called ``$(0,1)$-hop connections.''
Additionally, the airline would like to ensure that it maintains a profitable service and so it would like to preserve of all of its most popular connections.

This problem has all the ingredients of a Deletion Propagation problem: the source database is the set of all flights and airports; the view is the set of all location pairs that have direct or 1-hop connection between them. 
The airline would like to delete a certain amount of flight and airport costs (corresponding to cancelling flights, and not having service to an airport) - but it would like to minimize the side effect on the view (the connectivity network) and preserve output tuples of a different view (that shows the most popular connections).
This problem is a mixture of Aggregated Deletion Propagation (which involves deleting an arbitrary fraction of a view), and the Smallest Witness Problem (which involves preserving a view), but is also any extension in many ways (discussed further in \cref{sec:ilp-framework}). For example, the side effects are not measured in the original source or view, but \emph{in a different view (!)}.
\label{ex:introexample}
\label{EX:INTROEXAMPLE}
\end{example}

\introparagraph{Challenge 2: Dissimilar algorithms for similar problems}
Since DP variants have been studied in isolation, the algorithms used to solve these problems are often dissimilar.
Even for one variant, different queries currently require different algorithms.
Thus, new variants are often solved ``from scratch'' and algorithmic insights are not carried over.
DP variants are $\NPC$ (NP-complete) in general, but are $\PTIME$ for certain queries. 
To solve DP for a query optimally, one needs to know the algorithm that can correctly solve the problem variant for the given query in $\PTIME$ (if such an algorithm exists), 
and know that the query and database fulfill the requirements that allow applying the specialized algorithm.
Since algorithms that are specific to the variant and query, they are not generalizable, easily implementable, or extensible to new variants and query classes. 
We are inspired by recent work~\cite{makhija2024unified} that showed that for a particular DP variant called \emph{resilience} (i.e., DP with source side effects for a Boolean query), such a unified framework exists and is guaranteed to terminate in PTIME for all known PTIME cases.
In contrast, we propose a unified ``\emph{coarse-grained instance-optimal}'' framework\footnote{Notice 
that we use \emph{instance-optimal} in a more \emph{coarse-grained} sense than is more common in complexity theory~\cite{Roughgarden_2020}. Our focus is on solving all known PTIME cases in PTIME, but not necessarily using the fastest possible specialized algorithm in each case. In other words, we ignore \emph{fine-grained} complexity that differentiates between different classes within PTIME. We discuss this distinction further in \cref{sec:related-work}.}
which includes \emph{all} previously studied DP variants, including even problems that were not previously phrased as DP ($\swp$), and new variants as well.

\introparagraph{Challenge 3: Algorithms and tractability criteria are unknown for many real-world queries and scenarios} 
DP problems are typically studied for self-join free conjunctive queries under set semantics, because queries with self-joins are known to be notoriously difficult to analyze, and several complexity boundaries have been open for over a decade~\cite{KimelfeldVW12}.
In practice, however, queries often contain unions, are not self-join free, and are executed under bag semantics.
Only very few algorithms and tractability results are known for these more complicated settings,  
such as for queries with self-joins~\cite{DBLP:conf/pods/FreireGIM20,Kimelfeld12},
unions of conjunctive queries~\cite{DBLP:conf/lics/BodirskySL24},
and queries under bags semantics~\cite{makhija2024unified}. 
The overall tractability criterion for queries for such ``real-world'' queries is overall ill-understood.

\introparagraph{Contributions and Outline} 
We solve the challenges outlined above by introducing a unified framework for Deletion Propagation (DP) problems.
We define Generalized Deletion Propagation (GDP), show that this definition encapsulates existing variants as well many natural new variants, 
and give a unified algorithm to solve GDP.
In the process, we recover known tractability results, derive new theoretical results, and provide an experimental validation.

\circled{1}
We define Generalized Deletion Propagation (GDP) in \cref{sec:generalized-deletion-propagation}.
    This definition not only covers all known DP variants, but also 
    includes the Smallest Witness Problem (SWP, which has so far been treated as completely different), and
    covers new well-motivated variants. 
    Our definition allows us to reason about the many DP variants systematically, thus addressing \textbf{Challenge 1}.

\circled{2}
We present an Integer Linear Programming (ILP) formulation for the GDP problem in \cref{sec:ilp-framework}. 
    This formulation allows us to \emph{use one solution for all variants} of DP, thus providing the first step in addressing \textbf{Challenge 2}. 
    The ILP formulation can cover queries with unions and self-joins, and both the set and bag semantics settings, thus giving a valuable tool to address \textbf{Challenge 3}.

\circled{3}
While providing ILP formulations is a typical approach for solving NPC optimization problems, our key technical contribution addressing \textbf{Challenge 3} is proposing an ILP \emph{with the right algorithmic properties}: 
    We show in \cref{sec:tractibility-results} that for \emph{all known PTIME cases}, our ILP formulation is solvable in PTIME via an LP relaxation. 
    Thus, we do not need dedicated PTIME algorithms for special cases; our theory shows that standard ILP solvers default to solving these cases in PTIME.
    This means that the ILP framework can be directly used to solve all tractable instances of DP, thus resolving \textbf{Challenge 2} for all known PTIME cases.
    Notice that it is \emph{not trivial} to come up with the right ILP formulation. We show that a more obvious ILP formation does not have the desired PTIME guarantees, and can be over $2$ orders of magnitude slower in practice.

\circled{4}
    We uncover a new tractable case for well-known variants of the DP problem, thus showing that our framework is a powerful tool to address \textbf{Challenge 3}, the long-standing challenge of capturing the exact tractability boundary.
    Concretely, we prove in 
    \cref{SEC:TRACTABILITYNEW}
    that the ILP formulation 
    of a query with union and self-join that 
    can be solved in PTIME under bag semantics.
    
\circled{5}    
    We experimentally evaluate the efficiency of our contributions in \cref{sec:expts}. Our approach performs comparably and sometimes even better than specialized algorithms for particular DP variants, and can solve new tractable cases that were not previously known.

We only provide a proof intuition for each theorem in the main text, and make full proofs, additional details and experiments available in the appendix.

\section{Preliminaries}
\label{sec:prelims}

\introparagraph{Standard database notations}
A \emph{conjunctive query} (CQ) is a first-order formula $Q(\vec y)$ $= \exists
\vec x\,(g_1 \wedge \ldots \wedge g_m)$ 
where the variables $\vec x = \langle x_1, \ldots, x_\ell \rangle$ 
are called existential variables,
$\vec y$ are called the head or free variables,
and each atom $g_i$ represents a relation 
$g_i= R_{j_i}(\vec x_i)$ where $\vec x_i \subseteq \vec x \cup \vec y \cup U$, with $U$ being a universe of constant values. 
$\var(X)$ denotes the variables in a given relation/atom.
Notice that a query has at least one output tuple iff the Boolean variant of the query 
(obtained by making all the free variables existential) is true.
A \emph{self-join-free CQ} (SJ-free CQ) is one where no relation symbol occurs more than once and thus every atom represents a different relation. 
A \emph{union over conjunctive queries} (UCQ) is given by $Q(\vec y) \datarule \bigcup_{i \in [1, l]} \exists \vec x_i (g^i_1 \wedge \ldots \wedge g^i_m)$ where for each $i$, $Q(\vec y) \datarule \vec x_i (g^i_1 \wedge \ldots \wedge g^i_m)$ is a CQ.
We write $\mathcal{D}$ for the database, i.e.\ the set of tuples in the relations.
When we refer to bag semantics, we allow $\D$ to be a multiset of tuples in the relations.
Unless otherwise stated, a query in this paper refers to a UCQ, and a database instance $\mathcal{D}$ can be considered to a multiset. 
However, we may fudge notation and represent $\D$ as a set of tuples if all the multiplicities are 1.

We write $[\vec w / \vec x]$ as a valuation (or substitution) of query variables $\vec x$ by $\vec w$.
A view tuple or an output tuple $v$ is a valuation of the head variables $\vec y$ that is permitted by $\D$.
Similarly, a \emph{witness} $w$ is a valuation of all variables $\vec x$ that is permitted by $\mathcal{D}$\footnote{Note that this differs from an alternate notion of witness~\cite{Buneman:2002,hu_et_al:LIPIcs.ICDT.2024.24} which defines the witness of an output tuple as a subset of input tuples such that running a query over the subset produces the output tuple}.
We can alternately describe a witness as an output tuple for the \emph{full version} of the query $Q$, which is obtained by making all the free variables existential.
We denote the set of views tuples obtained by evaluating a query $Q$ over a database $\D$ simply as $Q(\D)$.
For example, consider the 2-chain query 
$Q^\infty_2(x) \datarule R(x, y), S(y, z)$
over the database 
$\mathcal{D} = \{r_{12}{:\,}R(1,2), r_{2,2}{:\,}R(2,2), s_{2,3}{:\,}S(2,3)\}$.
Then $Q(\D) = \{Q(1), Q(2)\}$ and
 $\witnesses(Q^\infty_2, D) =$ $\{(1, 2, 3), (2, 2, 3)\}$.

\introparagraph{Linear Programs (LP)}
Linear Programs are standard optimization problems \cite{aardal2005handbooks, schrijver1998theory}
in which the objective function and the constraints are linear.
A standard form of an LP is $\min \vec c^\transpose \vec x$ s.t. $\vec W \vec x \geq \vec b$, where $\vec x$ 
denotes the variables, the vector $\vec c^\transpose$ denotes weights of the variables in the objective, the matrix $\vec W$ denotes the weights of $\vec x$ for each constraint, and $\vec b$ denotes the right-hand side of each constraint.
The objective function $f=\vec c^\transpose \vec x$ may also be referred to as a \emph{soft constraint}. 
We use $f^*$ to denote the optimal value of the objective function.
If the variables are constrained to be integers, the resulting program is called an Integer Linear Program (ILP).
The \emph{LP relaxation} of an ILP program is obtained by removing the integrality constraint for all variables.

\section{Related Work}
\label{sec:related-work}

We will discuss in \cref{sec:generalized-deletion-propagation} related work on problems that fit within the umbrella of DP. 
This section covers additional related work concerning broader themes that are discussed in this paper.

\introparagraph{Reverse Data Management (RDM)}
DP can be seen as a type of reverse data management problem~\cite{DBLP:journals/pvldb/MeliouGS11}. 
RDM problems search for optimal interventions in the input data that would lead to a desired output.
RDM problems are useful in many applications, such as intervention-based approaches for explanations~\cite{DBLP:journals/pvldb/HerschelHT09,DBLP:journals/pvldb/0002M13,SudeepaSuciu14,glavic2021trends}, fairness~\cite{salimi2019interventional,galhotra2017fairness}, 
causal inference~\cite{galhotra2022hyper}, and data repair~\cite{wang2017qfix}.
The Tiresias system~\cite{Meliou2012_tiresias} solves how-to problems, a type of reverse data management problem using Mixed Integer Linear Programming (MILP). 
However, its focus is on building the semantics of a query language for how-to problems that can be translated to an MIP, and not on building a unified method that can recover tractable cases.

\introparagraph{Intervention-Based Explanations} 
Formal Explainability in AI (FXAI)~\cite{Marques-Silva2023} 
distinguishes between two types of explanations:
\emph{Abductive} explanations (or locally sufficient reasons \cite{pmlr-v235-bassan24a}) identify a minimal subset of features that, when fixed to their original values, are sufficient to \emph{guarantee the original prediction}.
They are also known as `Why?' explanations as they explain why a prediction is the way it is.
\emph{Contrastive} explanations 
identify a minimal subset of features that, when altered from their original values, are sufficient to \emph{change the original prediction}.
They are also known as `Why not?' explanations 
as they explain why the prediction is not different from what it is.
These notions also extend to relational query explanations,
and we can interpret the Smallest Witness Problem (SWP)~\cite{miao2019explaining,hu_et_al:LIPIcs.ICDT.2024.24} 
as an instance of abductive explanation, 
and the Resilience Problem (RES)~\cite{makhija2024unified} as a contrastive explanation. 
Generalized Deletion Propagation (GDP) subsumes both SWP and RES and can give \emph{both abductive and contrastive explanations}
in the same framework.
Notice that `Why' and `Why not' explanations have been understood differently in the context of database provenance~\cite{DBLP:journals/pvldb/MeliouGMS11,DBLP:journals/corr/abs-0912-5340}: 
`Why' has been used to understand why a given tuple is in the output (a `prediction' is true)
whereas `why not' to understand why a tuple is not in the output (a `prediction' is false).

\introparagraph{Linear Optimization Solvers}
A key practical advantage of modeling problems as ILPs is that
there are many highly-optimized ILP solvers, both commercial~\cite{gurobi} and free~\cite{mitchell2011pulp} which can obtain exact results efficiently, in practice.
ILP formulations are standardized, and thus programs can easily be swapped between solvers.
Any advances made over time by these solvers can automatically make implementations of these problems better over time.
For our experimental evaluation we use Gurobi\footnote{Gurobi offers a free academic license \url{https://www.gurobi.com/academia/academic-program-and-licenses/}.} which uses an LP based branch-and-bound method to solve ILPs~\cite{gurobi_working}. 
This means that it first computes an LP relaxation bound and then explores the search space to find integral solutions that move closer to this bound.
If an integral solution is encountered whose objective is equal to the LP relaxation optimum,
then the solver has found a guaranteed optimal solution and is done.
In other words, if we can \emph{prove that the LP relaxation of our given ILP formulation has an integral optimal solution}, then we are guaranteed that our original ILP formulation will terminate in $\PTIME$, even without changing the ILP formulation or letting the solver know about the theoretical complexity.

\introparagraph{Complexity of solving ILPs}
Solving ILPs is NPC~\cite{karp1972reducibility}, while LPs can be solved in $\PTIME$ with Interior Point methods \cite{grotschel1993ellipsoid,cohen2021solving}.
The specific conditions under which ILPs become tractable is an entire field of study.
It is known that if there is an optimal integral assignment to the LP relaxation, then the original ILP can be solved in $\PTIME$ as well.
There are many structural characteristics that define when the LP is guaranteed to have an integral minimum, and thus where ILPs are in $\PTIME$. 
For example, if the constraint matrix of an ILP is \emph{Totally Unimodular} \cite{schrijver1998theory} then the LP always has the same optima. 
Similarly, if the constraint matrix is \emph{Balanced}~\cite{conforti2006balanced}, several classes of ILPs are $\PTIME$.
We do not use any of these techniques in this paper, but we believe future research in this area may help \emph{automatically} identify more tractable cases of DP problems.

\introparagraph{ILPs and Constraint Optimization in Databases} 
Integer Linear Programming has been used in databases for problems such as in solving package queries~\cite{brucato2019scalable}, 
query optimization~\cite{DBLP:conf/sigmod/Trummer017},
and general optimization applications~\cite{DBLP:conf/ssdbm/SiksnysP16}.
However, other than our recent work on the resilience problem~\cite{makhija2024unified}, we are unaware of any work in databases that uses ILPs to automatically recover tractable cases by proving that the condition ILP = LP holds for the PTIME cases, i.e.\ that the LP relaxation has an optimal integral value and thus the original ILP problem can be solved in guaranteed PTIME.
We show that a straightforward application of that earlier idea to our generalized problem formulation 
does not work as the LP relaxation of the naive formulation can give fractional optimal solutions
(see \cref{ex:smoothing,Fig:smoothing:illustration}).
In \cref{sec:wildcard,sec:smoothing} 
we develop
\emph{new techniques that allowed us to prove that the natural LP relaxation of the resulting non-obvious ILP formulation has the ILP = LP property}.
We also show the effect in our experiments \cref{fig:expt-3}
with a reduction from over 3 hours to under 20 seconds.

\introparagraph{Instance Optimal Algorithms} 
Our notion of ``\emph{coarse-grained instance optimality}'' is inspired by the notion of instance optimality in complexity theory~\cite{Roughgarden_2020}.
The need for instance optimality or beyond worst case complexity analysis has been increasingly recognized since worst-case complexity analysis can be overly pessimistic and fails to capture the efficient real world performance of many algorithms such as in ILP optimization and machine learning.
Instance optimal algorithms have also been sought for some problems in databases such as top-$k$ score aggregation~\cite{Fagin2001Optimal}, and join computation~\cite{Khamis2016Joins,Ngo2014Beyond,DBLP:conf/icdt/AlwayBS21}.

\section{Generalized Deletion Propagation}
\label{sec:generalized-deletion-propagation}

\begin{figure*}[t]
  \centering
  \includegraphics[scale=0.46]{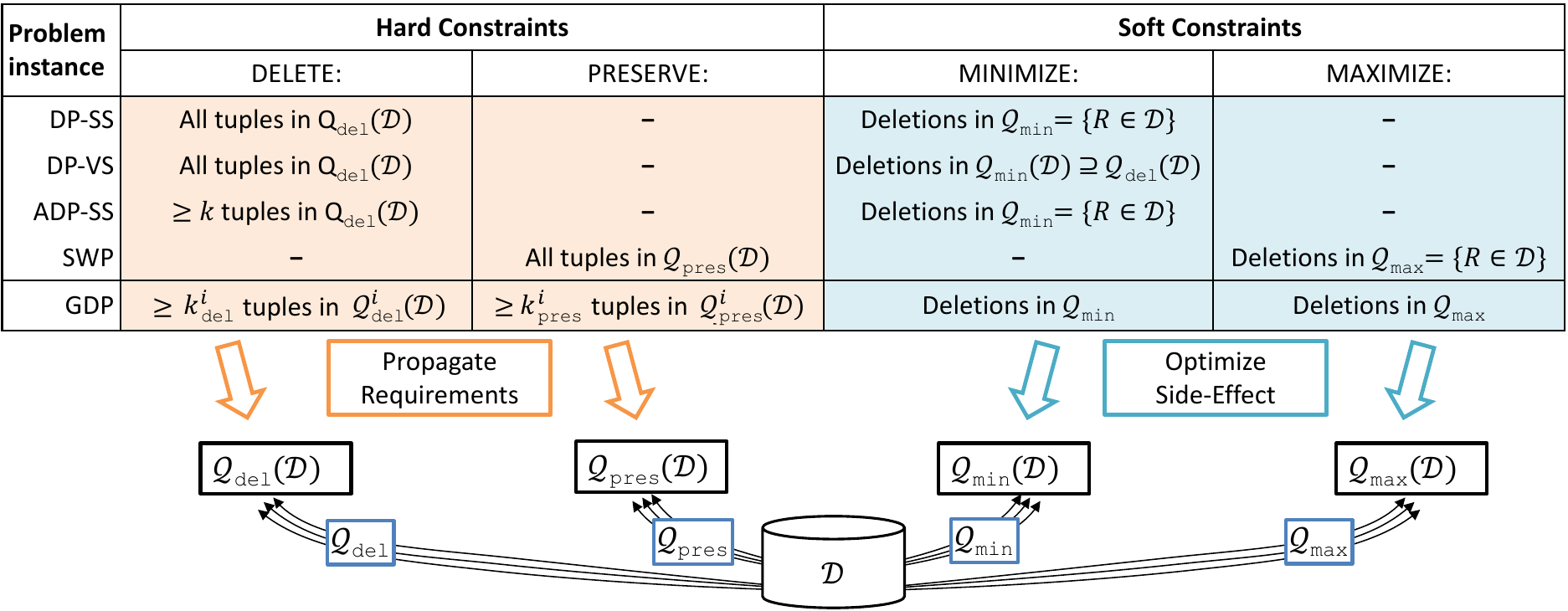}
  \caption{
  Generalized Deletion Propagation (GDP) is defined over 4 different sets of views, two of which model hard constraints, and the other two model soft constraints (optimization objectives).
  Our approach encapsulates previously studied NPC variants of the deletion propagation problem as special cases: Deletion Propagation with Source Side Effects ($\dpss$) \cite{Dayal82}, Deletion Propagation with View Side Effects ($\dpvs$) \cite{Kimelfeld:2013}, Aggregated Deletion Propagation with Source Side Effects ($\adp$) \cite{ADP}, and Smallest Witness Problem ($\swp$) \cite{hu_et_al:LIPIcs.ICDT.2024.24}. 
  Notice that GDP is a generalization of the prior variants in multiple senses: 
  1) It allows for side effects on a view different from the original.
  2) It allows each type of constraint to be enforced over multiple views.
  3) It allows for a combination of constraints and measured side effects.
  }
  \label{fig:deletion-propagation}
\end{figure*}

We introduce Generalized Deletion Propagation (GDP) which generalizes all prior variants of deletion propagation, and also allows for new variants to be defined.
The new variants are motivated by the following observations:
(1) The number of deletions in the source or view are not the only possible side effects; 
one could care about side effects on \emph{another view} that is different from the one in which the deletion occurs. 
(2) It is natural to enforce constraints or optimize side effects over multiple views.
(3) Prior variants focus on a specific type of constraint (either deletion or preservation). In practice, one might want to combine these constraints (e.g., minimizing deletions from one view while maximizing deletions from another).
These extensions are motivated with examples in \cref{sec:gdp-new-variants}.

We observe that with 4 different sets of views, we can model all existing problems and can also combine individual constraints in arbitrary ways. 
\Cref{def:deletion-propagation} thus defines \emph{generalized deletion propagation} as a constraint optimization problem over four set of views.
These sets of views correspond to four primitive operations (or requirements) that typically occur in deletion propagation variants - a requirement to delete tuples from a view, preserve tuples in a view, minimize side effects on a view, or to maximize side effects on a view.
\cref{sec:gdp-capture-prior-variants} discusses how the GDP definition encapsulates all past variants of DP as special cases (also depicted in \cref{fig:deletion-propagation}), while \cref{sec:gdp-new-variants} motivates DP new variants that are captured by GDP.

\subsection{Defining Generalized Deletion Propagation}

Before we define GDP, we introduce some notation.
We use bold notation for vectors (as in $\vec x$) and superscript for entries (as in $x^i$).
$\Q$~represents an ordered set of queries, and $Q^i$ represents the $i^{\textrm{th}}$ query in $\Q$.
$|Q(\D)|$ is defined as the number of output tuples in $Q(\D)$ and 
$|\Q(\D)|=\sum_{Q \in \Q} |Q(\D)|$ as the number of output tuples across all views in $\Q$.
We define $\Delta Q(\D,\Gamma)$ as the set of output tuples in $Q(\D)$ that are deleted as a consequence of deleting $\Gamma$ from the database $\D$ and hence are not present in $Q(\D \setminus \Gamma)$. 
Similarly, we define $\Delta \Q(\D,\Gamma)$ as the set of tuples deleted from all views in $\Q$:
\begin{equation*}
  | \Delta\Q (\D,\Gamma) | = \sum_{Q^i \in \Q} |Q^i(\D)|-|Q^i(\D \setminus \Gamma)|
\end{equation*}

  \begin{definition}[Generalized Deletion Propagation (GDP)]
    Given four ordered sets of monotone queries $\querydel, \querypres, \querymin$ and $\querymax$ over a database $\D$, 
    and vectors of positive integers $\kdel$ and $\kpres$ of size 
    equal to the number of views in $\querydel$ and $\querypres$ respectively,
    the GDP problem is the task of determining a set of input tuples $\Gamma \subseteq \D$ such that 
    \begin{equation*}
      |\Delta\querymin(\D, \Gamma)| -  |\Delta\querymax(\D, \Gamma)|
    \end{equation*}
    is minimized and the following hard constraints are satisfied:

    \begin{enumerate}[leftmargin=*]
    
    \item 
    Deleting $\Gamma$ from the database $\D$ deletes at least $\kdelparam{i}$ 
    output tuples from the $i^{\textrm{th}}$ view defined by $\querydel$ i.e., 
    \begin{equation*}
      |\querydelparam{i}(\D \setminus \Gamma)| \leq |\querydelparam{i}(\D)| - \kdelparam{i}
    \end{equation*}
    
    \item 
    Deleting $\Gamma$ from the database $\D$ preserves at least $\kpresparam{i}$ output tuples from the $i^{\textrm{th}}$ view defined by $\querypres$ i.e., 
    \begin{equation*}
      |\querypresparam{i}(\D \setminus \Gamma)| \geq \kpresparam{i} \hspace{14.5mm}
    \end{equation*}
  \end{enumerate}
    \label{def:deletion-propagation}
  \end{definition}

\subsection{Capturing Prior Variants of Deletion Propagation with GDP}
\label{sec:gdp-capture-prior-variants}

We next show how each of the previously studied variants of the deletion propagation problem is a special case of GDP.

\subsubsection{Deletion Propagation with Source Side Effects ($\dpss$)~\cite{Buneman:2002,Dayal82} and Resilience (RES) \cite{DBLP:conf/pods/FreireGIM20,DBLP:journals/pvldb/FreireGIM15,DBLP:conf/lics/BodirskySL24,makhija2024unified,MIAO2020_resilience,MIAO2020_resilience}}
\label{sec:gdp-dp-ss}
Deletion Propagation with source side effects 
($\dpss$) is one of the two originally formulated variants of the deletion propagation problem~\cite{Buneman:2002}.

\begin{definition}[$\dpss$]
\label{def:dpss}  
  Given a view defined by a query $Q$ over a database $\D$, and an output tuple $t \in Q(\D)$, 
  the deletion propagation with source side effects problem is the task of determining a set of input tuples $\Gamma \subseteq \D$ such that $|\Gamma|$ is minimized and $t$ is not contained in $Q(\D \setminus \Gamma)$. 
  In other words,
  \begin{equation*}
    \min |\Gamma| \textrm{ s.t.\ } t \notin Q(D \setminus \Gamma)
  \end{equation*}
\end{definition}

$\dpss$ is a special case of GDP - we can solve a $\dpss$ problem by setting $\querydelparam{1}$ to be a query with constants selecting for the values of $t$, $\kdelparam{1} = 1$ and setting $\querymin$ to be the set of identity queries that select all tuples from any relation in $\D$.
The key observation is that source side effects can also be represented by computing a set of queries $\querymin$, and then the difference between source and view side effects results from the choice of query that defines the view.

Resilience (RES) is a variant of $\dpss$ that focuses on Boolean queries and asks for the minimum number of deletions needed to make a query false. 
It has been called the ``simplest'' of all deletion propagation problems \cite{DBLP:journals/pvldb/FreireGIM15}, and a large amount of literature has been dedicated to understanding its complexity \cite{DBLP:journals/pvldb/FreireGIM15,DBLP:conf/pods/FreireGIM20,MIAO2020_resilience,DBLP:conf/lics/BodirskySL24,makhija2024unified}.
The complexity results for the RES problem also imply complexity results for the $\dpss$ problem~\cite{DBLP:journals/pvldb/FreireGIM15}.
Existing work has shown a complexity dichotomy for self-join free conjunctive queries, 
both under set \cite{DBLP:journals/pvldb/FreireGIM15} and bag semantics \cite{makhija2024unified}, 
yet only few tractability results for queries with self-joins and unions are known~\cite{DBLP:conf/pods/FreireGIM20,makhija2024unified,DBLP:conf/lics/BodirskySL24}.
The RES problem can be modelled as a special case of GDP similarly as $\dpss$, with the added restriction that 
$\querydelparam{1}$ 
is a boolean query.

\subsubsection{Deletion Propagation: View Side Effect ($\dpvs$)
\cite{Buneman:2002,Kimelfeld:2013,Kimelfeld12,KimelfeldVW12}}
Deletion Propagation with View Side effects ($\dpvs$) has the same deletion propagation requirement (or ``hard constraint'') as $\dpss$, 
but does so with the goal of minimizing the side effects on the view in which the deletion occurs.

\begin{definition}[$\dpvs$]
  \label{def:dpvs}  
  Given a view defined by a query $Q$ over a database $\D$, and an output tuple $t \in Q(\D)$, 
  the deletion propagation with view side effects problem 
  is the task of determining a set of input tuples $\Gamma \subseteq \D$ such that $|\Delta Q(\D,\Gamma)|$ is minimized and $t$ is not contained in $Q(D \setminus \Gamma)$. 
  In other words,
  \begin{equation*}
    \min |Q(\D)| - |Q(\D\setminus\Gamma)| 
    \textrm{ s.t.\ }  t \notin Q(\D \setminus \Gamma)
  \end{equation*}
\end{definition}

$\dpvs$ is a special case of GDP where $\querydel(\D)$ contains a single query whose output is the output tuple $t$ (just like in $\dpss$), 
$\kdelparam{1} = 1$, 
and $\querymin$ has as single query $Q$ from the original $\dpvs$ problem.
A complexity dichotomy has been shown for the $\dpvs$ problem for self-join free CQs under set semantics \cite{KimelfeldVW12}.

\subsubsection{Aggregated Deletion Propagation with Source Side effect ($\adp$) \cite{ADP}} 
The Aggregated Deletion Propagation ($\adp$) formulation extends the previous $\dpss$ by requiring the deletion of any $k$ output tuples from a view, rather than a specific output tuple.

\begin{definition}[ADP]
  \label{def:adp}  
  Given a view defined by a query $Q$ over a database $\D$, 
  and a positive integer $k$, the Aggregated Deletion Propagation (ADP) problem 
  is the task of determining a set of input tuples $\Gamma \subseteq \D$
  such that $|\Gamma|$ is minimized and at least $k$ tuples are removed from $Q(\D)$ as a consequence of removing $\Gamma$ from $\D$. 
  In other words,
  \begin{equation*}
    \min |\Gamma| \textrm{ s.t.\ }
    Q(\D\setminus \Gamma)| \leq |Q(\D)| - k
  \end{equation*}
  \end{definition}

Even though not explicit in the name of the problem, $\adp$ cares about minimizing source side effects (which can be captured by GDP in the same manner as for $\dpss$).
A complexity dichotomy has been shown for the ADP problem for self-join free conjunctive queries under set semantics \cite{ADP}.

\subsubsection{Smallest Witness Problem (SWP) \cite{miao2019explaining}}
The Smallest Witness Problem was not proposed as a DP problem, but was noted to bear a strong but unspecified resemblance to the deletion propagation variants \cite{hu_et_al:LIPIcs.ICDT.2024.24}. 
We show that this resemblance is due to the fact that -- when modelled as a constraint optimization problem -- the correspondence of deletions of input and output tuples are based on exactly the same constraints.
Concretely, SWP can be seen as a ``preservation propagation'' problem, where the goal is to find the largest set of tuples that can be removed from the database without affecting the results of a query.
Using negation, we reformulate this as minimization problem (to maintain consistency with other definitions in this section):

\begin{definition}[SWP]
\label{def:swp}  
  Given a view defined by a query $Q$ over a database $\D$, the smallest witness problem is the task of determining a set of input tuples $\Gamma \subseteq D$ such that $|\Gamma|$ is maximized and $\Delta V(\D,\Gamma)$ is exactly $0$. 
  In other words,
  \begin{equation*}
    \min -|\Gamma|
    \textrm{ s.t. }
    |Q(\D\setminus \Gamma)| = |Q(\D)|
  \end{equation*}
\end{definition}

A complexity dichotomy has been shown for the SWP problem for self-join free conjunctive queries under set semantics \cite{hu_et_al:LIPIcs.ICDT.2024.24}. 
Interestingly, the tractable cases for SWP are a subset of the tractable cases for $\dpvs$, reaffirming that these variants have a structural connection and should be studied together.

\subsection{Capturing Natural New Variants of Deletion Propagation with GDP}
\label{sec:gdp-new-variants}

Our GDP formulation allows for the definition of new variants of the deletion propagation problem based on at least three types of extensions. 
These extensions can be combined in arbitrary ways, leading to a rich set of new problems.

\introparagraph{Extension 1: New types of side effects} Existing DP variants focus on minimizing source side effects or view side effects. 
However, one can easily imagine a user wanting to delete tuples from one view while minimizing side effects on another view. 
For instance, in \cref{ex:introexample}, where the output view that deletion constraints very defined on (the connectivity network), was different from the view side effects were measured on (view of popular connections).

\introparagraph{Extension 2: Constraints over multiple views}
Existing DP variants focus on a single view from which deletions are propagated. 
However, one can imagine a scenario where tuples from multiple views are to be deleted. 
As we saw in \cref{ex:introexample}, the airline wanted to cut down on multiple costs such as fuel costs and airport lease costs.
Depending on the current structure of the airline, a different percent of cost-cutting in each category may be required, and it is always better to jointly optimize over all the expense views\footnote{
Note that performing deletions on multiple views one at a time is not the same as performing deletions on all views simultaneously, and the side effects of performing DP on each view independently may be higher than performing DP on all views simultaneously.
 Cutting $2\%$ of total costs is not necessarily the same as cutting $1\%$ of fuel costs and $1\%$ of airport lease costs. 
}.

\introparagraph{Extension 3: Combination of Deletion and Preservation Constraints}
Current DP variants focus on either deletion constraints ($\dpss$, $\dpvs$, $\adp$) or preservation constraints ($\swp$) exclusively.
However, one may want to enforce both deletion and preservation constraints simultaneously - like in \cref{ex:introexample} where it matters to cut down on costs but also preserve the popular routes.

\section{ILP Framework for GDP}
\label{SEC:ILP}
\label{sec:ilp-framework}

This section specifies an Integer Linear Program (ILP) $\gdpilp$ which returns an optimal solution for $\gdp$ for any instance supported by \cref{def:deletion-propagation}. 
We proceed in three steps, first providing a basic ILP formulation and subsequently improving it in two steps.
Our approach works even if some views are defined \emph{with self-joins}, or if the underlying database uses \emph{bag semantics}. 
We focus in this section on proving correctness, while
\cref{sec:tractibility-results} later investigates how known tractable cases can be solved in PTIME, despite the problem being NPC in general.
The input to the $\gdpilp$ are the four sets of view-defining queries $\querydel$, $\querypres$, $\querymin$, $\querymax$
over a database $\mathcal{D}$.
Note that any of these sets can be empty as well.\footnote{Notice that 
the problem is still defined (though trivial) even if all sets are empty: 
Then any set of interventions satisfy the problem, 
and the objective value is always $0$.}
As input to our computation, we also assume as given the \emph{set of witnesses} for each output tuple in any of the computed views, which
can be obtained in PTIME by running the \emph{full version} $Q^F$
of each 
query $Q$ and computing the associated provenance polynomial. 
The full version $Q^F$ of a query $Q$ is the query that we get by 
making any existential variables into head variables (or equivalently, by removing all projections). 
For example, 
the full version of
$Q(x) \datarule R(x,y), S(y,z)$
is $Q^{F}(x,y,z) \datarule R(x,y), S(y,z)$.
The use of witnesses as an intermediary between input (database) and output (view) tuples is a key modeling step that allows us to formulate DP problems with linear constraints.
We thus associate with each output tuple a set of witnesses and use these sets of witnesses to construct $\gdpilp$.

In a slight abuse of notation we write $v \in \mathcal{Q}(\D)$ for $v \in \bigcup_{Q \in \mathcal{Q}}$ ${Q(\D)}$ and similarly, $w \in \mathcal{Q}^F(\D)$ for $w \in \bigcup_{Q \in \mathcal{Q}}{Q^F(\D)}$.
We write that $v \in w$ if $v \in Q(\D)$ is a projection of $w \in Q^F(\D)$ onto the head variables of $Q$.
For example, for the earlier example of 
$Q(x) \datarule R(x,y), S(y,z)$
and $Q^{F}(x,y,z) \datarule R(x,y), S(y,z)$,
assume we have 
two witnesses
$w_1 = Q^F(1,2,3)$,
$w_2 = Q^F(1,3,2)$,
$w_3 = Q^F(2,1,3)$,
and two view tuples
$v_1 = Q^1(1)$,
$v_2 = Q^1(2)$.
Then 
$v_1 \in w_1$,
$v_1 \in w_2$,
$v_1 \not \in w_3$,
$v_2 \not \in w_1$, etc.
It is very important to note that we treat output tuples of different views as distinct, even if they correspond to the same set of tuples in the input database.
Thus, we can have 
$v_1 \in \querydelparam{i}(\D)$ 
and $v_2 \in \querypresparam{j}(\D)$ 
with $\querydelparam{i} = \querypresparam{j}$, 
and the valuation of variables for $v_1$ is the same as for $v_2$, 
but we will still treat them as distinct: $v_1 \neq v_2$
(similarly for views).
Notice that this modeling decision appears at first to create inconsistencies, as our algorithm theorectically permits $v_1$ to be deleted from the view
while $v_2$ is preserved. 
However, as we discuss later, this \emph{does not create inconsistencies} and is actually \emph{crucial} for the tractability proofs in \cref{sec:tractibility-results}.

\subsection{A basic ILP Formulation for $\gdp$}

We first define a naive ILP $\gdpilpbasic$ with three components:
the ILP variables, an ILP objective function, and ILP constraints.

\subsubsection{ILP Variables} 
\label{sec:ilpvariables}
We introduce binary variables $X[t]$ for each input tuple $t$ in the relations from $\D$
which takes on value $1$ if the corresponding tuple is deleted, and $0$ otherwise.
Similarly, we introduce binary variables 
$X[v]$ for each output tuple $v$ in each of the view-defining queries in 
$\querydel(\D), \querypres(\D), \querymin(\D), \querymax(\D)$, 
and $X[w]$ for each witness $w$ in the full version of those queries.

\subsubsection{ILP Objective Function (``Soft constraints'')} 
The only possible side effects of deleting a set of input tuples 
on a view defined by a monotone query are deletions of tuples in the view.
As defined in \cref{def:deletion-propagation}, we thus count the side effects as the number of output tuples deleted from $\querymin(\D)$ plus the number of tuples preserved in $\querymax(\D)$, respectively.
Minimizing the number of tuples preserved in a view 
is equivalent to maximizing the number of tuples deleted in that view,
which is equivalent to minimizing $-1$ times the number of tuples deleted in that view.
Thus, our overall goal is to minimize the following objective function:
\begin{equation*}
    f(\vec X) =     \sum_{v \in \querymin(\D)}  \!\!\!X[v]\,\,\, - \!\!\! \sum_{v \in \querymax(\D)} \!\!\!X[v]
\end{equation*}

\subsubsection{ILP Constraints (``Hard constraints'')}
The basic ILP formulation has two types of constraints: 
(1) \emph{User constraints (UCs)} are those that are application-specific and are specified by the user.
(2)~\emph{Propagation constraints (PCs)} encode the various relationships between tuple variables, witness variables, and view variables needed for consistency.
In other words, PCs capture the effect of the hard user constraints on the input database, and then the effect of the input database on various views.

\smallsection{(1) User constraints (UCs)} 
These are the deletion and preservation constraints that are specified by the user on the view definitions $\querydel$ and $\querypres$, respectively.
    The deletion constraints specify that at least $\kdelparam{i}$ tuples must be deleted from each view $\querydelparam{i} \in \querydel$, 
    while the preservation constraints specify that at least $\kpresparam{i}$ tuples must be preserved in each view $\querypresparam{i} \in \querypres$ (which is equivalent to deleting at most $|\querypresparam{i}(\D)| - \kpresparam{i}$ tuples):
    \begin{align*}
        \sum_{v \in \querydelparam{i}(\D)} \!\!\! X[v]  
            &\geq \kdelparam{i}                    
            &&\forall \querydelparam{i} \in \querydel    \\
        \sum_{v \in \querypresparam{i}(\D)} \!\!\! X[v]    
            &\leq |\querypresparam{i}(\D)| - \kpresparam{i}    
            &&\forall \querypresparam{i} \in \querypres
    \end{align*}

    \begin{figure}
        \centering
        \includegraphics[scale=0.26]{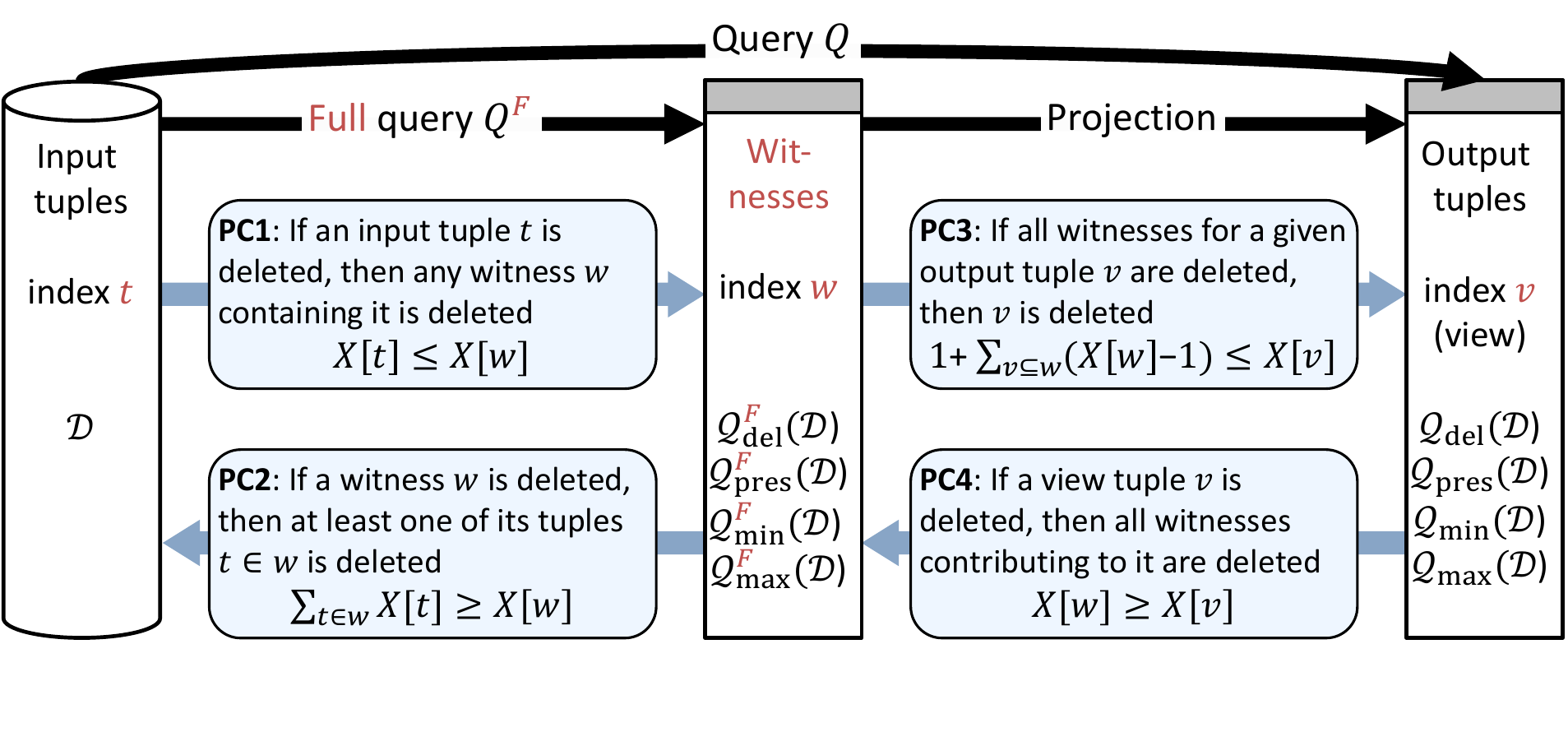}
        \vspace{-3mm}
        \caption{
        Propagation constraints in our ILP formulation $\gdpilpbasic$, explained in the direction of propagating deletions and thus providing lower bounds on the variables.
        The witness variables are the bridge between the tuple variables and the view variables, and represent the output tuples of the corresponding full query.}
        \label{fig:ilp-constraints}
    \end{figure}

\smallsection{(2) Propagation constraints (PCs)}
These constraints encode the relationships between input tuples, witnesses, and tuples in the views to obtain upper and lower bounds on each. 
Any deletion in a view needs to be reflected also in the input database, and as consequence also in the other views. It is this necessary ``\emph{propagation of deletions}'' 
from views (output tuples) to the database (input tuples) that 
gave this family of problems its name~\cite{Buneman:2002}.

\Cref{fig:ilp-constraints} shows a summary of the propagation constraints, split into two parts: the propagation constraints between 
input tuple variables and witness variables (PC1 and PC2), 
and between witness variables and view variables (PC3 and PC4).
Notice that all PCs are bidirectional in that they compare two types of variables and give an upper bound for one and a lower bound for the other. 
Thus, each constraint can be explained in two ways (depending on the direction of the propagation),
but not all constraints need to be applied to all views (recall our wildcard semantics). 
We first describe the constraints, and then discuss when they are enforced.

    \begin{itemize}[leftmargin=1pc]
        \item \textbf{PC1:} 
        ($\rightarrow$)
        If an input tuple $t$ is deleted, then any witness $w$ containing it is deleted.
        ($\leftarrow$) 
        If a witness $w$ is not deleted, then neither of its tuples $t \in w$ is deleted.
        \begin{equation*}
            X[t] \leq X[w], t \in w
        \end{equation*}

        \item \textbf{PC2:} 
        ($\leftarrow$) 
        If a witness $w$ is deleted, then at least one of its tuples $t \in w$ is deleted.
        ($\rightarrow$) 
        Alternatively, if all tuples $t \in w$ are not deleted, then the witness $w$ is not deleted.
        \begin{equation}
            \sum_{t \in w} X[t] \geq X[w]
        \end{equation}

        \item \textbf{PC3:} 
        ($\rightarrow$)
        If all witnesses for a given output tuple $v$ are deleted, then $v$ is deleted.
        ($\leftarrow$) 
        If $v$ is not deleted,
        then at least one witness $w$ for $v$ is not deleted.
        \begin{equation*}
            1+\sum_{v \subseteq w} \big(X[w] - 1\big) \leq X[v]
        \end{equation*}

        \item \textbf{PC4:} 
        ($\leftarrow$) 
        If a view tuple $v$ is deleted, then all witnesses contributing to it are deleted. 
        ($\rightarrow$) 
        If a witness $w$ is not deleted, then any view tuple $v \subseteq w$ is not deleted.
        \begin{equation*}
            X[w] \geq X[v], v \subseteq w
        \end{equation*}
    \end{itemize}

\subsubsection{Naive ILP}
We define $\gdpilpbasic$ as the program resulting from our definitions of 
ILP variables, objective function, and constraints, 
and will sometimes refer to it as the ``naive ILP''.

\begin{restatable}{theorem}{thmbasicilp}[Naive ILP]
\label{th:ILP:basic}
    The interventions given by an optimum solution of $\gdpilpbasic$ for any 
    $\D$, $\querydel$, $\querypres$, $\querymin$, $\querymax$, $\kdel$, $\kpres$
    are an optimum solution to $\gdp$ over the same input. 

\end{restatable}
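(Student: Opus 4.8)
The plan is to exhibit an objective-preserving correspondence between feasible interventions $\Gamma \subseteq \D$ of $\gdp$ and feasible solutions of $\gdpilpbasic$, and then argue that it restricts to a correspondence between optima. In one direction, given an intervention $\Gamma$ I set $X[t]=1$ exactly when $t\in\Gamma$; in the other direction, from a feasible ILP solution I read off $\Gamma=\{t\mid X[t]=1\}$. The heart of the argument is a lemma stating that, once the tuple variables $X[t]$ are fixed, integrality together with the propagation constraints \textbf{PC1}--\textbf{PC4} forces a \emph{unique} assignment to the witness variables $X[w]$ and view variables $X[v]$, and that this forced assignment is precisely the indicator of ``being deleted by $\Gamma$.''

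First I would pin down the witness variables. For a witness $w$, \textbf{PC1} gives $X[w]\ge X[t]$ for every $t\in w$, hence $X[w]\ge \max_{t\in w}X[t]$, while \textbf{PC2} gives $X[w]\le \sum_{t\in w}X[t]$. If some $t\in w$ lies in $\Gamma$ the lower bound is $1$ and integrality caps $X[w]$ at $1$; if no tuple of $w$ lies in $\Gamma$ the right-hand side of \textbf{PC2} is $0$, forcing $X[w]=0$. Thus $X[w]=1$ iff $w$ contains a deleted tuple, which---because every query is monotone, so deletions can only remove witnesses---is exactly the condition $w\notin Q^F(\D\setminus\Gamma)$.

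Next I would pin down the view variables from the (now determined) witness variables. For an output tuple $v$, \textbf{PC4} gives $X[v]\le X[w]$ for every witness $w$ with $v\subseteq w$, so $X[v]=0$ as soon as a single witness of $v$ survives; \textbf{PC3} gives $1+\sum_{v\subseteq w}(X[w]-1)\le X[v]$, which forces $X[v]\ge 1$ exactly when all witnesses of $v$ are destroyed. Since every $v\in Q(\D)$ has at least one witness, these two bounds again determine $X[v]$ uniquely: $X[v]=1$ iff all witnesses of $v$ are destroyed, i.e.\ iff $v\notin Q(\D\setminus\Gamma)$ (using monotonicity once more). Consequently $\sum_{v\in Q(\D)}X[v]=|\Delta Q(\D,\Gamma)|$ for each view $Q$.

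With these identities in hand the theorem follows by direct substitution. The user constraints become $|\Delta\querydelparam{i}(\D,\Gamma)|\ge\kdelparam{i}$ and $|\Delta\querypresparam{i}(\D,\Gamma)|\le |\querypresparam{i}(\D)|-\kpresparam{i}$, which are exactly the two hard constraints of \cref{def:deletion-propagation}; and the objective $f(\vec X)=\sum_{v\in\querymin(\D)}X[v]-\sum_{v\in\querymax(\D)}X[v]$ evaluates to $|\Delta\querymin(\D,\Gamma)|-|\Delta\querymax(\D,\Gamma)|$, the $\gdp$ objective. Hence $\Gamma$ is $\gdp$-feasible iff its lifted assignment is ILP-feasible, with equal objective value, so the two problems share the same optimal value and an optimal ILP solution yields an optimal intervention. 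I expect the main obstacle to be the uniqueness lemma: one must verify that all four families of bounds are simultaneously tight enough (and never mutually contradictory) to leave a single integral completion, and that the monotonicity assumption is precisely what licenses equating ``witness/view variable set to $1$'' with ``the corresponding tuple is actually absent from $Q^F(\D\setminus\Gamma)$ / $Q(\D\setminus\Gamma)$.''
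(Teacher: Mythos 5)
Your proposal is correct and follows essentially the same route as the paper, which proves this theorem simply by appealing to the direct mapping between the variables, objective, and constraints of $\gdp$ and those of $\gdpilpbasic$. You merely spell out the details that the paper leaves implicit, namely that for integral tuple variables the pairs \textbf{PC1}/\textbf{PC2} and \textbf{PC3}/\textbf{PC4} force the witness and view variables to be exactly the deletion indicators, after which the correspondence of constraints and objectives is immediate.
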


The direct mapping from the variables, objective and constraints of $\gdp$ into our ILP formulation from this section forms the proof.

\subsection{Wildcard Semantics for $X[w]$ and $X[v]$}
\label{sec:wildcard}
The binary variables for each input tuple $X[t]$ are always faithful to deletions in the database $\D$ (a tuple is either deleted or present). 
However, for witness variables $X[w]$ and output tuple variables $X[v]$ 
we use a semantics that we call ``\emph{wildcard semantics}.''
The intuition is that user constraints on deletion views provide hard lower bounds on deletions in the database (we need to provide at least that many deletions), 
while minimization views provide upper bounds (more deletions than necessary get automatically penalized by the optimization objective).
This results in a one-sided guarantee.
For example, setting $X[v_1]=1$ for $v_1 \in \querydelparam{i}(\D)$ means it is necessarily deleted,
and setting $X[v_2]=0$ for $v_2 \in \querypresparam{i}(\D)$ means it is necessarily preserved.
However, in this semantics we cannot infer the actual status from $X[v_1]=0$ and $X[v_2]=1$.
This semantics allows us to simplify the ILP by having fewer constraints; 
and, it turns out to be crucial for the tractability proofs in \cref{sec:tractibility-results}.

\begin{figure}
    \!\!\!\!\!
    \includegraphics[scale=0.37]{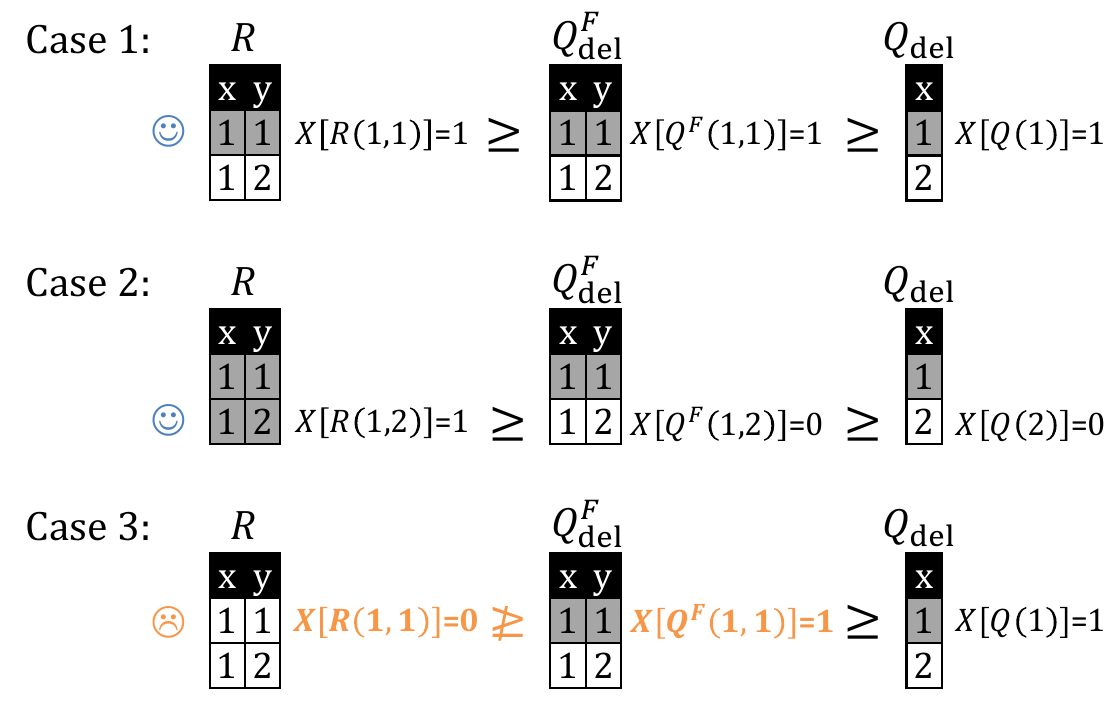}
    \caption{\Cref{ex:dontcaresemantics}: 
    The true state of deletions in the database $\D$ is always faithfully represented by database variables (e.g., $R(1,2)$ is deleted and thus $X[R(1,2)]=1$ and is grayed out).
    However, deletions in the views defined by a query in $\querydel$ need to provide only lower bounds for modeling $\dpss$ 
    (e.g., setting $X[Q^F(1,2)]=0$ in case 2 is ok even though the view tuple would be deleted).    }
    \label{Fig_Wildcard_semantics_one_direction}
\end{figure}

\begin{example}[wildcard semantics]
    \label{ex:dontcaresemantics}
    Consider a database $\D$ with facts $\{R(1,1), R(2,2), S(1)\}$, 
    and query 
    $Q(x) \datarule \allowbreak R(x,y), \allowbreak S(y)$. 
    Consider a $\dpss$ problem where tuple $Q{}(1)$ 
    should be deleted from the output.
    We introduce the tuple variables 
    $X[R(1,1)]$, 
    $X[R(1,2)]$, 
    $X[S(1)]$,
    witness variables 
    $X[Q^F(1,1)]$, 
    $X[Q^F(1,2)]$, 
    and view variables 
    $X[Q(1)]$, 
    $X[Q(2)]$.
    We show in \cref{Fig_Wildcard_semantics_one_direction} some possible variable assignments and discuss if they satisfy the wildcard semantics.

    Case 1: A feasible solution is setting
    $X[R(1,1)]$, $X[Q^F(1,1)]$, $X[Q(1)]$ to 1,
    and all other variables to 0, 
    i.e.\ tuple $R(1,1)$ is deleted from the database, 
    witness $Q^F(1,1)$ is deleted from the full query, and tuple $Q(1)$ is deleted from the view.
    In this case, all variables are faithful to a set of actual deletions in the database and views.

    Case 2: Another solution modifies $X[R(1,2)]$ to 1, 
    while the other variables remain the same (including $X[Q^F(1,2)]=0$).
    Since the witness $Q^F(1,2)$ would be deleted once $R(1,2)$ is deleted,
    this solution is not faithful to any set of interventions 
    (if 0 assignments are interpreted as required preservations).
    Notice, however, that this variable assignment causes no harm in the correct fulfillment of the user constraints.
    Marking a witness as not deleted when it is, is not a problem, since this can never mark the user constraint as satisfied if it isn't in reality.

    Case 3: In contrast, a solution with $X[R(1,1)]=0$, $X[Q^F(1,1)]=1$, $X[Q(1)]=1$ is incorrect. 
    It falsely claims to satisfy the user constraint by deleting the tuple $Q(1)$ from the view, 
    but it does not actually delete any input tuples that would lead to this deletion. 
\end{example}

\Cref{ex:dontcaresemantics} showed that for a variable $X[w]$ for a witness $w$ in $\querydel^F(\D)$, 
it is important that we do not claim it is deleted if it is not (as this would not truly satisfy the user requirement).
Thus, $X[w] = 1$ must imply that $w \not\in \querydel^F(\D)$.
However, deleting $w$ while having $X[w] = 0$  is not a problem, because this can never represent an unsatisfactory interventions (as is the case when a user required deletions that are not truly carried out).
Thus, we use a semantics for witnesses in $\querydel^F$ 
where $X[w]=1$ implies witness $w$ is deleted, 
while $X[w]=0$ acts as a ``\emph{wildcard}'', allowing the witness to be deleted or not. 
In other words, truth assignments to tuples in $\querydel(\D)$ provide a lower bound on the deletions of tuples in the database (\cref{Fig_Wildcard_semantics_one_direction}).
The exact same reasoning applies to the $X[v]$ variables for view tuples in $\querydel(\D)$ as well.

Similarly, witness and view variables for $\querymax^F$ and $\querymax$ provide upper bounds on tuple deletions in the database.
For these views, a solution 
stating that a witness / view tuples
is not deleted when it is, is not a problem since the user constraints specify a lower bound.
Thus, here too we allow the same semantics that $X[w]=1$ and $X[v]=1$ if $w$ / $v$ is deleted, and $X[w]=0$ and $X[v]=0$ are wildcard values where the witness/view tuple may or may not be deleted.

Symmetrically, for $\querypres(\D)$ and $\querymin(\D)$, 
we need to ensure that if a tuples and witnesses is said to be \emph{preserved}
(i.e.\ their variables are set to 0), then it is actually preserved. 
Thus, $X[w]=0$ and $X[v]=0$ for $w$, $v$ in $\querypres$ and $\querymin$ imply that the corresponding witness or view tuple is not deleted, while $X[w]=1$ and $X[v]=1$ represent a wildcard value, allowing the corresponding witness or view tuple to be deleted or not.
\Cref{tab:dontcaresemantics} captures the semantics of the $X[w]$ and $X[v]$ variables for each type of query.

\begin{figure}
    \setlength{\tabcolsep}{0.8pt}
    \centering
    \begin{tabular}{|c|c|c|c|c|}
        \hline
        & X[w] = 0 & X[w] = 1 & X[v] = 0 & X[v] = 1 \\
        \hline
        $\querymax(\D)$ & $*$ & $w \!\not\in\! \querymax^F(\D)$ & $*$ & $v \!\not\in\! \querymax(\D)$  \\
        $\querydel(\D)$ & $*$ & $w \! \not\in \! \querydel^F(\D)$ & $*$ & $v \!\not\in\! \querydel(\D)$  \\
        $\querypres(\D)$& $w \!\in\! \querypres^F(\D)$  & $*$ & $v\! \in\! \querypres(\D)$ & $*$ \\
        $\querymin(\D)$ & $w \!\in\! \querymin^F(\D)$ & $*$ & $v \!\in\! \querymin(\D)$ & $*$  \\
        \hline
    \end{tabular}
    \caption{Table showing the one-sided guarantees that any variable assignment has on solution to a $\gdp$ problem. For cases with wildcards (``$*$''), the true value of the variable can be either 0 or 1.}
    \label{tab:dontcaresemantics}
\end{figure}

\introparagraph{Selective application of PCs}
We use the wildcard semantics for witnesses and view variables described in \cref{sec:ilpvariables} to obtain a more efficient ILP.
Concretely, we don't apply the PCs in directions that are not required to enforce the wildcard semantics.

PC1 and PC3 encode lower bounds on the witness and view variables, respectively. 
They ensure that $X[w]=0$ and $X[v]=0$ only when $w$ and $v$ are not deleted. 
Thus, they need to be applied to $\querypres$ and $\querymin$,
but do not to $\querydel$ and $\querymax$.
Similarly, PC2 and PC4 are upper bounds on the witness and view variables, respectively.
They ensure that $X[w]=1$ and $X[v]=1$ only when $w$ and $v$ are deleted. 
Thus, they need to be applied to $\querydel$ and $\querymax$,
but not to $\querypres$ and $\querymin$.
\Cref{Fig_Wildcard_semantics_crossroad} summarizes the selective application of PCs to the different views.

\introparagraph{Wildcard ILP}
We refer to the ``wildcard ILP'' or $\gdpilpwildcard$ solution to $\gdp$ as the basic ILP that applies the PCs only selectively, 
namely PC1 and PC3 to 
$\querypres$ and $\querymin$
(but not PC2 nor PC4),
and PC2 and PC4 to 
$\querydel$ and $\querymax$
(but not PC1 nor PC3).

\begin{restatable}{theorem}{thmwildcardilp}[Wildcard ILP]
\label{th:ILP:wildcard}
    The interventions suggested by an optimum solution of $\gdpilpwildcard$ are an optimum solution to $\gdp$ over the same input. 
\end{restatable}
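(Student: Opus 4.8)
The plan is to prove this by reduction to the already-established correctness of the naive formulation in \cref{th:ILP:basic}, treating $\gdpilpwildcard$ as a constraint-relaxation of $\gdpilpbasic$ together with a ``faithfulness'' argument for the extracted intervention. First I would observe that $\gdpilpwildcard$ has exactly the same variables and objective as $\gdpilpbasic$ but enforces only a subset of the propagation-constraint directions (PC1 and PC3 on $\querypres,\querymin$; PC2 and PC4 on $\querydel,\querymax$). Hence every feasible point of $\gdpilpbasic$ is feasible for $\gdpilpwildcard$, so the wildcard optimum value is at most the basic optimum value, which equals the $\gdp$ optimum by \cref{th:ILP:basic}. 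It therefore suffices to show that the intervention read off from \emph{any} optimal wildcard solution $\vec X^*$ is a genuine $\gdp$ solution whose true cost is at most $f(\vec X^*)$; a sandwich then forces equality and optimality.

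Next I would extract $\Gamma = \{t \mid X^*[t]=1\}$, which is well defined since the tuple variables are always faithful, and verify the hard constraints using the one-sided guarantees in \cref{tab:dontcaresemantics}. For a deletion view $\querydelparam{i}$, PC4 (then PC2) enforce $X^*[v]=1 \Rightarrow v$ is actually deleted by $\Gamma$, so the constraint $\sum_{v} X^*[v] \ge \kdelparam{i}$ certifies at least $\kdelparam{i}$ \emph{genuine} deletions, satisfying the deletion requirement. For a preservation view $\querypresparam{i}$, PC1 (then PC3) enforce $X^*[v]=0 \Rightarrow v$ is actually preserved, so the constraint $\sum_{v} X^*[v] \le |\querypresparam{i}(\D)| - \kpresparam{i}$ leaves at least $\kpresparam{i}$ view tuples with $X^*[v]=0$, i.e.\ at least $\kpresparam{i}$ genuine preservations, satisfying the preservation requirement. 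Thus $\Gamma$ is feasible for $\gdp$.

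Finally I would bound the objective. For each $v \in \querymin(\D)$ that is actually deleted, PC1 forces $X^*[w]=1$ on every witness of $v$ (each such witness loses a tuple), and then PC3 forces $X^*[v]=1$; hence $|\Delta\querymin(\D,\Gamma)| \le \sum_{v\in\querymin(\D)} X^*[v]$. Symmetrically, for $v \in \querymax(\D)$, PC4 followed by PC2 give $X^*[v]=1 \Rightarrow v$ is actually deleted, so $\sum_{v\in\querymax(\D)} X^*[v] \le |\Delta\querymax(\D,\Gamma)|$. Combining, the true $\gdp$ cost $|\Delta\querymin(\D,\Gamma)| - |\Delta\querymax(\D,\Gamma)|$ is at most $f(\vec X^*)$, which is at most the $\gdp$ optimum by the relaxation step; since $\Gamma$ is $\gdp$-feasible its cost is also at least the optimum, so all quantities coincide and $\Gamma$ is optimal. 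I expect the main obstacle to be precisely this last step: getting the \emph{directions} of the four implications right and confirming that the \emph{selectively} applied PCs generate exactly the needed one-sided guarantee in each of the four view classes — in particular that for $\querymin$ one needs ``$v$ deleted $\Rightarrow X[v]=1$'' (a lower-bound chain PC1$\to$PC3) while for $\querymax$ one needs the opposite implication ``$X[v]=1 \Rightarrow v$ deleted'' (an upper-bound chain PC4$\to$PC2), so that the inequalities bounding $|\Delta\querymin|$ and $|\Delta\querymax|$ point the correct way.
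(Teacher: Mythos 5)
Your proposal is correct and follows essentially the same route as the paper's proof: both establish that $\gdpilpwildcard$ is a constraint relaxation of $\gdpilpbasic$ (so its optimum can only be lower), and then use the one-sided guarantees of the selectively applied PCs to show that the intervention extracted from a wildcard-optimal solution is genuinely $\gdp$-feasible with true cost at most the wildcard objective, closing the sandwich. Your write-up is in fact more explicit than the paper's (which routes the second half through a ``canonicalized'' naive-feasible solution rather than directly through the true cost of $\Gamma$), and you correctly identify the directions of the four implication chains, e.g.\ PC1$\to$PC3 giving ``$v$ deleted $\Rightarrow X[v]=1$'' on $\querymin$ and PC4$\to$PC2 giving ``$X[v]=1 \Rightarrow v$ deleted'' on $\querymax$.
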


\begin{proofintuition*}
    The proof is based on the fact that any optimal solution under traditional semantics is an optimal solution in the wildcard semantics, and to enforce the wildcard semantics it suffices to apply PCs selectively (which is possible as argued before).    
\end{proofintuition*}

\begin{figure}
    \centering
    \includegraphics[scale=0.35]{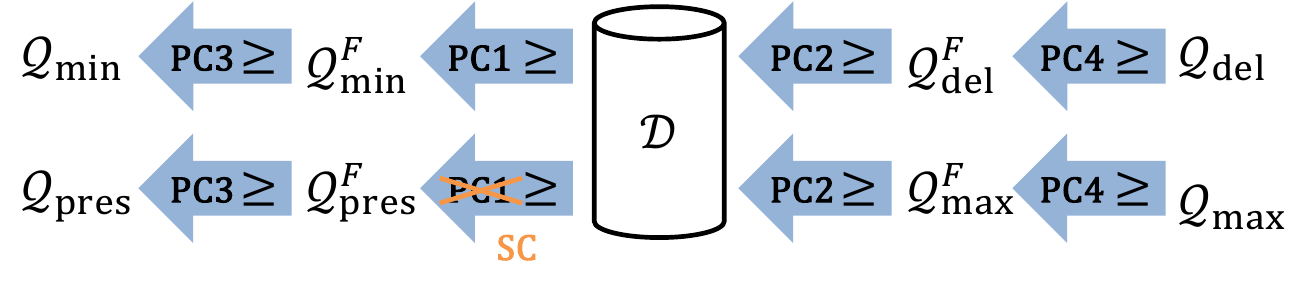}
    \caption{
        Arrows in this figure illustrate the constraints \emph{in the direction of lower bounds} 
        (but recall that constraints are bidirectional).
        Notice that our wildcard semantics applies constraints only selectively to different views 
        (\cref{sec:wildcard}).
        Also shown is how our Smoothing Constraints (SC) replace PC1 for 
        $\querypres^F(\D)$ (\cref{sec:smoothing}).
        It is that replacement that gives us a powerful PTIME guarantee for PTIME problems
        (see later \cref{fig:expt-3} from the experiments).
    }
    \label{Fig_Wildcard_semantics_crossroad}
\end{figure}

\subsection{ILP with Smoothing Constraints} 
\label{sec:smoothing}

The wildcard semantics alone does not give noticeable performance improvements or PTIME guarantees for our ILP.
However, it allows us to enable a surprising optimization: we will tighten one type of constraint in a way that the resulting solution space (a polyhedron) preserves an optimal solution, yet also affords desirable properties on the performance of the resulting ILP and also the optimal solution for its LP relaxation.
It is those seemingly superfluous constraints that play a key ingredient in the results of \cref{sec:tractibility-results} where we show that an ILP with smoothing constraints $\gdpilpsmooth$ can be solved in PTIME for all known tractable cases. 
Hence, we also refer to $\gdpilpsmooth$ as simply $\gdpilp$.

The user constraints and propagation constraints suffice to correctly model GDP as an ILP.
The purpose of the \emph{Smoothing Constraints} (SC) is to make the objective of the LP relaxation closer to the objective of the ILP (in certain cases we see that the smoothing constraint makes the optimal objective value of LP relaxation equal to that of the ILP).
In the language of linear optimization, adding these extra bounds is equivalent to adding cutting planes~\cite{Kelley1960CuttingPlane} to the polytope defined by the LP relaxation. 

We identify a smoothing constraint that can be added to describe the relation between tuple variables, and the witness variables of $\querypres$. 
Recall that for $\querypres$, we would like to preserve a certain number of view variables. 
A view variable $v$ is preserved if at least one of its witnesses $w$ is preserved. 
Recall that due to our wildcard semantics of $X[w]$ in $\querypres$, 
setting $X[w]=1$ means that we ``do not care'' whether the witness is deleted or not. 
In other words, we can say that for any view variable $v$, there is only one $w$ with $X[w]=0$ and the other witnesses can be set to 1.
Now assume that a tuple $t$ participates in multiple witnesses $w_1, w_2, \ldots, w_k$ corresponding to the same view tuple $v$
(It may also participate in more witnesses, but we do not care about those here).
We know through PC1 that $X[t] \leq X[w]$ for a given $v$ and $t \in w, w \supseteq v$. 

It is correct to now also enforce that $\sum_{i \in [1,k]} X[w_i] \geq k-1$
i.e., only one $X[w]$ in this set is preserved (the rest may also be preserved, but due to the wildcard semantics they will still have $X[w]=1$).
Now we can also enforce that $X[t] \leq (\sum_{i \in [1,k]} X[w_i]) - (k-1)$, since  
all but $1$ values of $X[w]$ are set to $1$, and only the final value decides the upper bound on $X[t]$. 
Thus, we get a smoothing constraint, applied to every $v \in \querypres(\D)$:
\begin{align*}
    X[t] \leq 1 +
        \sum_{\substack{w: t\in w \\ w \supseteq v}} 
        ( X[w] -  1) 
\end{align*}

\introparagraph{Correctness of $\gdpilpsmooth$ with wildcard semantics and smoothing constraints}
We refer to $\gdpilpsmooth$ as the ILP that has only the PCs that are required for wildcard semantics and has replaced the PC1 constraints
on $\querypres$ in the basic ILP with SC instead.

\begin{restatable}{theorem}{thmsmoothenedilp}[Smoothened ILP]
\label{th:ILP:smoothened}
    The interventions suggested by an optimum solution of $\gdpilpsmooth$ 
    with wildcard semantics and smoothened constraints
    form an optimum solution to $\gdp$.
\end{restatable}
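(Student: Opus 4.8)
The plan is to reduce to the already-established wildcard ILP (\cref{th:ILP:wildcard}), since $\gdpilpsmooth$ differs from $\gdpilpwildcard$ only by replacing the PC1 constraints on $\querypres$ with the smoothing constraints (SC). Because the objective $f(\vec X)$ reads only variables of $\querymin$ and $\querymax$, and because the values $X[t]$ on database tuples fully determine a candidate intervention $\Gamma = \setof{t}{X[t]=1}$, it suffices to show that $\gdpilpsmooth$ and $\gdpilpwildcard$ admit feasible integral solutions over exactly the same set of interventions $\Gamma$, with matching assignments to the $\querymin$ and $\querymax$ variables. The equality of optima, and hence the theorem, then follows from \cref{th:ILP:wildcard}.

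First I would prove soundness ($\gdpilpsmooth \Rightarrow \gdp$): given any feasible integral solution of $\gdpilpsmooth$, I claim $\Gamma = \setof{t}{X[t]=1}$ satisfies all user constraints. The only constraint whose correctness I must re-verify is the preservation UC on each $\querypresparam{i}$, since this is where PC1 was swapped for SC (the deletion UC and the $\querymin$, $\querymax$, $\querydel$ constraints are unchanged and are handled as in \cref{th:ILP:wildcard}). For each view tuple $v$ with $X[v]=0$, the constraint PC3, which is still present on $\querypres$, forces at least one witness $w^\star \supseteq v$ with $X[w^\star]=0$. Then for every $t \in w^\star$ the term $(X[w^\star]-1)=-1$ appears in the SC for the pair $(t,v)$ while every other term is $\le 0$; hence $X[t]\le 0$, so $t \notin \Gamma$. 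Thus all tuples of $w^\star$ survive, $w^\star$ remains a witness of $v$ in $\D \setminus \Gamma$, and $v$ is genuinely preserved. Summing over the $\ge \kpresparam{i}$ view tuples forced to $X[v]=0$ by the UC yields genuine preservation of at least $\kpresparam{i}$ outputs, as required.

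Next I would show that no optimal solution is cut off ($\gdp \Rightarrow \gdpilpsmooth$): starting from any GDP-feasible $\Gamma$, I exhibit a feasible $\gdpilpsmooth$ assignment that exploits the wildcard degrees of freedom. Set $X[t]=\mathbb{1}[t\in\Gamma]$, keep the wildcard assignments of \cref{th:ILP:wildcard} on all views except $\querypres$, and for each $v\in\querypres(\D)$ proceed as follows: if $v$ survives, pick one surviving witness $w^\star_v\supseteq v$, set $X[w^\star_v]=0$, and set $X[w]=1$ for every other witness of $v$ (permissible, since $X[w]=1$ is always a wildcard value for $\querypres$); if $v$ does not survive, set all its witnesses and $X[v]$ to $1$. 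Setting all but one witness to $1$ is exactly what keeps SC feasible: for a pair $(t,v)$, at most one witness of $v$ containing $t$ has $X[w]=0$ (namely $w^\star_v$, and only when $t\in w^\star_v$), so the SC sum equals either $-1$ (forcing $X[t]=0$, consistent because $t$ then lies on a surviving witness) or $0$ (imposing no real bound), and never drops below $-1$. I would then check that PC3 and the UC hold for this assignment and that the $\querymin$ and $\querymax$ variables, and hence the objective value, coincide with those of the wildcard solution for the same $\Gamma$.

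The main obstacle is that SC is \emph{not} equivalent to PC1: for integral assignments SC is strictly tighter, and it even becomes infeasible if two or more witnesses of the same $v$ that share a tuple $t$ are simultaneously set to $0$. Consequently I cannot argue ``PC1 $\Leftrightarrow$ SC'' pointwise; correctness hinges entirely on the wildcard semantics, which grant the freedom to certify preservation of each view tuple with a \emph{single} designated witness while pushing all redundant witnesses to the wildcard value $1$. Establishing that this single-witness bookkeeping is always available and consistent across tuples and view tuples, using that each witness of a fixed query projects onto a unique output tuple, is the delicate part; once it is in place, the equality of feasible interventions, and thus of optima, follows.
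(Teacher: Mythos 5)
Your proposal is correct and follows essentially the same route as the paper's proof: both hinge on the observation that, under wildcard semantics, one may normalize any feasible intervention so that each preserved view tuple in $\querypres$ designates exactly one witness with $X[w]=0$ while all other witnesses take the wildcard value $1$, which makes the smoothing constraint satisfiable without changing the optimum. You are more explicit than the paper in separately verifying soundness (via PC3 and SC) and in noting that SC is strictly tighter than PC1 for non-normalized assignments, but the key idea is the same.
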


\begin{proofintuition*}
    Adding the smoothing constraint to the wildcard ILP always preserves at least one optimal solution - this follows also from the argument above the the smoothing constraint can be derived by logically following the wildcard semantics.
\end{proofintuition*}

\introparagraph{An interesting asymmetry}
We notice an interesting asymmetry at play.
We could apply a symmetric smoothing constraint in the case for PC4 on $\querydel$.
Interestingly, such an additional smoothing constraint would \emph{identical to our original PC1}, 
as every view tuple corresponds to exactly one witness.
Thus, we do not need to add any additional smoothing constraints for $\querydel$.

\introparagraph{Reducing the size of the ILP}
The smoothing constraints may subsume some propagation constraints. 
These subsumed propagation constraints can be removed from the ILP without affecting any solution of the ILP or LP relaxation.

\introparagraph{The Power of Smoothing Constraints}
\cref{ex:smoothing} is an intuitive example of a $\swp$ problem instance modelled as a $\gdp$ problem,
where the smoothing constraints ensure that the optimal value of the ILP is equivalent to the optimal value of its LP relaxation in the $\gdp$ framework.
Later in \cref{prop:swpptime}, we show that this is the case for all prior known PTIME cases of $\swp$.

\begin{example}[Power of smoothing]
\label{ex:smoothing}    

Consider again the $\D$ from \cref{ex:dontcaresemantics}: with $R(1,1)$, $R(1,2)$, and $S(1)$.
We want to solve the smallest witness problem 
$\swp(\querypresparam{}, \D)$ for 
$\querypresparam{}(x) \datarule R(x,y), S(x)$. 
To model it as $\gdp$, we set
$\querypres$ to be $\langle \querypresparam{} \rangle$ and 
$\kpres = \langle \kpresparam{} \rangle$
with
$\kpresparam{}= 1$, which is the number of output tuples in 
$\querypresparam{}(\D)$.
We also set 
$\querymax = \langle 
    \querymaxparam{1}(x,y)\datarule$ $R(x,y), 
    \querymaxparam{2}(x)\datarule S(x)
\rangle$
and $\querydel = \querymin = \emptyset$.
Our $\gdp$ formulation is as follows:
\begin{align*}
    &f(\vec X) = - \big( X[\querymaxparam{1}(1,1)] + X[\querymaxparam{1}(1,2)] + X[\querymaxparam{2}(1)] \big)
\end{align*} 
s.t.\ following constraints (and integrality constraints):
\allowdisplaybreaks
\begin{align*}        
    & X[\querypresparam{}(1)] \leq 0  
        &\textrm{(UC)} \\
    & X[\querypresparam{F}(1,1)] + X[\querypresparam{F}(1,2)] - 1 \leq X[\querypresparam{}(1)] 
    &\textrm{(PC3)}\\ 
    & X[R(1,1)] \leq X[\querypresparam{F}(1,1)]     &\textrm{(PC1)}\\ 
    & X[S(1)] \leq X[\querypresparam{F}(1,1)]     &\textrm{(PC1)}\\ 
    & X[R(1,2)] \leq X[\querypresparam{F}(1,2)]     &\textrm{(PC1)}\\ 
    & X[S(1)] \leq X[\querypresparam{F}(1,2)]     &\textrm{(PC1)}\\ 
    & X[R(1,1)] \leq X[\querymaxparam{1F}(1,1)]     &\textrm{(PC2)}\\ 
    & X[R(1,2)] \leq X[\querymaxparam{1F}(1,2)]     &\textrm{(PC2)}\\ 
    & X[S(1)] \leq X[\querymaxparam{2F}(1)]     &\textrm{(PC2)}\\ 
    & X[\querymaxparam{1F}(1,1)] \leq X[\querymaxparam{1}(1,1)]    &\textrm{(PC4)}\\ 
    & X[\querymaxparam{1F}(1,2)] \leq X[\querymaxparam{1}(1,2)]   &\textrm{(PC4)}\\ 
    & X[\querymaxparam{2F}(1)] \leq X[\querymaxparam{2}(1)]    &\textrm{(PC4)}
\end{align*}

Observe that the optimal solution for the ILP is $-1$ which occurs when either one of the tuples in $R$ is deleted, i.e.\ either of $X[R(1,1)]$ or $X[R(1,2)]$ is set to $1$.
However, the LP relaxation has a smaller non-integral optimum of $-1.5$ for $X[R(1,1)] = X[R(1,2)] = X[S(1)] = 0.5$.
This is due to the fact that both $X[\querypresparam{F}(1,1)]$
and $X[\querypresparam{F}(1,2)]$ can take values $0.5$, which is why 
$X[\querypresparam{1}(1)]$ can be set to $0$ while fulfilling all constraints.

Our smoothing constraint for this example is the following 
\begin{align*}
    &X[S(1)] \leq X[\querypresparam{F}(1,1)] + X[\querypresparam{F}(1,2)] - 1 \qquad
        &\textrm{(SC)}
\end{align*}
\noindent
Notice that it can replace the $PC1$ constraints $X[S(1)] $$\leq$$ X[\querypresparam{F}(1,1)]$ and $X[S(1)] $$\leq$$ X[\querypresparam{F}(1,2)] $, since it is a strictly tighter constraint.

The SC ensures that if $ X[S(1)]$ is set to $0.5$, 
then $X[\querypresparam{F}(1,1)]+ X[\querypresparam{F}(1,2)] \geq 1.5$, thus violating PC4, and thereby effectively removing the non-integer solution.
To gain more intuition, \cref{Fig:smoothing:illustration} shows the polytope of the LP relaxation of our wildcard formulation projected on
either the variables involved in SC (\cref{Fig_cuttingplane_2}),
or the three input tuples (\cref{Fig_cuttingplane_1}).
The optimal LP solution corresponds to the orange point 
(point 3 in \cref{Fig_cuttingplane_2}, point 6 in \cref{Fig_cuttingplane_1}),
and the two optimal ILP solutions correspond to the two blue points 
(points 1 and 2 in \cref{Fig_cuttingplane_2}, points 0 and 2 in \cref{Fig_cuttingplane_1}).
Notice how our SC (shown as yellow cutting plane in \cref{Fig_cuttingplane_2}) cuts away the non-integer solution,
leaving only points 1 and 2, and their convex extension.
Similarly, this constraint cuts away all points with $X[S(1)]>1$ (not shown in \cref{Fig_cuttingplane_1}),
leaving points 0 and 2 and their convex combination as the optimum solutions to the new LP.

\end{example}

\begin{figure}
    \centering
    \begin{subfigure}{0.5\columnwidth}
        \centering
        \includegraphics[scale=0.42]{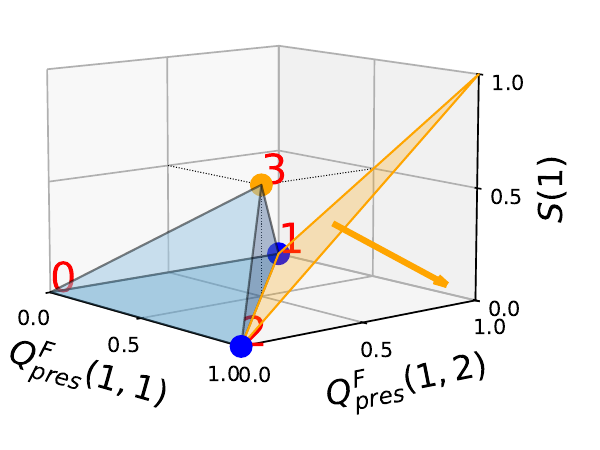}
        \caption{Projection on variables from SC.
        }
        \label{Fig_cuttingplane_2}
    \end{subfigure}
    \begin{subfigure}{0.48\columnwidth}
        \centering
        \includegraphics[scale=0.42]{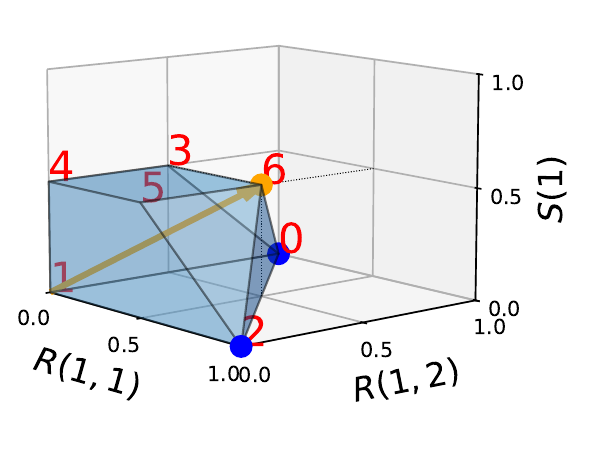}
        \caption{Projection on database tuples.
        }
        \label{Fig_cuttingplane_1}
    \end{subfigure}%

    \caption{\cref{ex:smoothing}: Our Smoothing Constraint (SC) acts as a cutting plane, removing a non-integral optimal point from the LP relaxation of our ILP formulation (see details in text).
    }
    \label{Fig:smoothing:illustration}
\end{figure}

From the workings of modern solvers we know that any ILP problem can be solved efficiently if 
its natural LP relaxation is tight with the ILP polytope in the direction of the objective (i.e.\ the ILP and its LP relaxation have the same optimal $f^*$ and share the same ``face'' perpendicular to the objective vector).
Now, it suffices to show that the LP relaxation of the smoothened ILP has the same optimum objective values $f^*$ and preserves at least one optimal integral solution.
We see experimentally in \cref{fig:expt-3} in \cref{sec:expts} a speedup of $2$ orders of magnitude in the ILP solving time simply by adding the smoothing constraints.
This is completely justified by our claim that ILP solvers are able to solve ILPs efficiently when the LP relaxation is tight.

\section{Recovering Existing Tractability Results}
\label{SEC:TRACTABILITY}
\label{sec:tractibility-results}

In this section we focus on self-join free queries under set semantics, which is the only case in which complexity dichotomies are known for the DP variants of $\dpss, \dpvs, \adp, \swp$.
We have shown previously that the $\gdp$ framework naturally captures all these problems as special cases. 
In this section, we show that the LP relaxation of the $\gdp$ problem also naturally recovers the optimal, integral solutions for these problems for self-join queries that are known to be tractable under set semantics.

\introparagraph{DP-SS} A $\dpss$ problem on a query $Q$ can be converted to a resilience problem on the existential version of the query $Q^E$ which 
is obtained by removing all head variables from $Q$ (both in the head and the body).
Since a dichotomy result for self-join free conjunctive queries is known for resilience both under set and bag semantics, it follows that a complexity dichotomy is also known for $\dpss$.

Let $\gdpdpssilpparam{Q, \D, t}$ be the ILP obtained when we pose the $\dpss$ problem over a query $Q$, database $\D$ and target tuple in view $t$, as a $\gdp$ problem via the method described in \cref{sec:gdp-dp-ss}. 
We now claim that the LP relaxation $\gdpdpsslp$ of such an ILP $\gdpdpssilp$, is always equivalent to the solution of the optimization problem $\dpss$ for all known queries $Q$ for which $\dpss$ can be solved in PTIME, and thus $\gdpilp$ can be used to solve $\dpss$ in PTIME for such queries.

\begin{restatable}{proposition}{dpssptime}
    $\gdpdpsslpparam{Q, \D} = \dpss(Q, \D)$ for all database instances $\D$ under set semantics if the existential query $Q^E$ does not contain a triad. 
     $\gdpdpsslpparam{Q, \D} = \dpss(Q, \D)$ for all database instances $\D$ under bag semantics if $Q^E$ is a linear query.
     \label{prop:dpssptime}
\end{restatable}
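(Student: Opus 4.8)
The plan is to reduce the claim to the already-established tractability of resilience rather than re-deriving an integrality argument from scratch. As noted in the \dpss-to-resilience reduction preceding the statement, a \dpss instance on $Q$ with target $t$ is equivalent to a resilience instance on the existential query $Q^E$ obtained by fixing the head variables of $Q$ to the constants of $t$. My first step is therefore to show that the linear program $\gdpdpsslpparam{Q,\D}$ produced by the GDP reduction of \cref{sec:gdp-dp-ss} has the same optimal objective value as the LP relaxation $\reslp$ of the resilience ILP on $Q^E$.

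To establish this equivalence I would trace the constraints of the wildcard ILP on this special instance, noting first that here $\querypres=\querymax=\emptyset$, so the smoothing constraints are vacuous and only the propagation constraints matter. On the objective side, $\querymin$ consists of the identity queries over each relation, so each source tuple $t$ is simultaneously a witness and a view tuple of $\querymin$; the lower-bound constraints PC1 and PC3 (the only PCs applied to $\querymin$) give $X[t] \leq X[w] \leq X[v]$, and since the objective minimizes $\sum_{v \in \querymin(\D)} X[v]$, at any optimum these auxiliary variables collapse onto $X[t]$, leaving the objective $\sum_t X[t]$. On the constraint side, $\querydel$ is the single query selecting exactly the target tuple $v_t$ with $\kdelparam{1}=1$, so the user constraint forces $X[v_t] \geq 1$; the upper-bound PCs applied to $\querydel$ then propagate this, as PC4 forces $X[w] \geq X[v_t]=1$ for every witness $w$ of $v_t$, and PC2 forces $\sum_{s \in w} X[s] \geq X[w] = 1$. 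These are exactly the resilience hitting constraints: every witness of $t$ must lose at least one tuple. Because all of these implications hold verbatim in the LP relaxation, the auxiliary witness and view variables can be projected out without changing the LP optimum, yielding $\gdpdpsslpparam{Q,\D} = \reslp$ on $Q^E$.

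With the two relaxations identified, I would invoke the prior result of \cite{makhija2024unified}, which establishes the $\mathtt{ILP}=\mathtt{LP}$ property for resilience precisely in the two regimes named in the statement: under set semantics when $Q^E$ contains no triad, and under bag semantics when $Q^E$ is linear. For these queries the resilience LP relaxation attains an integral optimum equal to $\res(Q^E) = \dpss(Q,\D)$, and chaining the identity from the previous paragraph gives $\gdpdpsslpparam{Q,\D} = \dpss(Q,\D)$, as required. Both halves of the proposition follow from the same argument, differing only in which clause of the resilience dichotomy is cited.

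The main obstacle I anticipate is making the projection argument in the second step fully rigorous rather than merely plausible. I must verify that introducing the auxiliary $X[w]$ and $X[v]$ variables, together with the selectively applied PCs, neither tightens nor loosens the relaxation relative to the bare resilience LP. The danger is that a fractional assignment to the auxiliary variables could create slack that pushes the GDP LP optimum strictly below $\reslp$; ruling this out requires a two-way correspondence, namely that every feasible fractional point of $\gdpdpsslp$ restricts to a feasible resilience-LP point of no greater objective value, and conversely that any resilience-LP point extends to a GDP-LP point of equal value. Once this correspondence is pinned down, the reduction to \cite{makhija2024unified} is immediate.
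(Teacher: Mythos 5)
Your proposal follows essentially the same route as the paper's proof: both reduce $\gdpdpsslpparam{Q,\D}$ to the resilience LP of \cite{makhija2024unified} by observing that $\querypres=\querymax=\emptyset$, that the user constraint on $\querydel$ forces $X[v_t]=1$ and propagates via PC4 and PC2 into the hitting-set constraints $\sum_{s\in w}X[s]\geq 1$, and that the identity queries in $\querymin$ make the objective collapse to $\sum_t X[t]$, after which the known integrality of the resilience LP relaxation (triad-free under set semantics, linear under bag semantics) finishes the argument. Your explicit flagging of the two-way correspondence needed to justify projecting out the auxiliary $X[w]$ and $X[v]$ variables at the LP level is a point the paper's proof glosses over, but it is the same argument.
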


\begin{proofintuition*}
We show that $\gdpdpssilp$ is identical to a specialized ILP that has been proposed \cite{makhija2024unified} for resilience. 
Since that paper also shows that for all tractable queries, the LP relaxation of the ILP is integral, the results naturally carry over.
\end{proofintuition*}

\introparagraph{DP-VS}
It is known that DP-VS is PTIME for self-join free conjunctive queries if and only if they have the head domination property \cite{KimelfeldVW12}. 
We prove that for such queries that have the head domination property, if we pose $\dpvs(Q, \D, t)$ in the $\gdp$ framework, then the LP relaxation $\gdpdpvslpparam{Q, \D, t}$ is equivalent to the solution of $\dpvs(Q, \D, t)$.

\begin{restatable}{proposition}{dpvsptime}
    $\gdpdpvslpparam{Q, \D, t} = \dpvs(Q, \D, t)$ for all database instances $\D$ under set semantics and any tuple $v$ in $Q(D)$ if $Q$ has the head domination property.
    \label{prop:dpvsptime}
\end{restatable}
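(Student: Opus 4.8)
The plan is to show that the LP relaxation is integral for head-dominated $Q$ and matches the true optimum. Recall the encoding of \cref{def:dpvs} as $\gdp$: $\querydel$ is the single deletion view that selects exactly the target tuple $t$ with $\kdelparam{1}=1$, $\querymin=\langle Q\rangle$, and $\querypres=\querymax=\emptyset$. By \cref{th:ILP:smoothened} the ILP $\gdpdpvsilp$ solves $\dpvs$ exactly, and since the LP is a relaxation of a minimization we immediately have $\gdpdpvslpparam{Q,\D,t}\le \dpvs(Q,\D,t)$. It therefore suffices to prove the reverse inequality, which I would do by rounding any optimal fractional solution to an integral feasible solution of no larger cost. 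First I would project out the witness and view variables to obtain a reduced program purely in the tuple variables $X[s]$: under the wildcard semantics, the constraints on $\querydel$ (UC, PC2, PC4) collapse to the fractional hitting constraints $\sum_{s\in w} X[s]\ge 1$ for every witness $w$ of $t$, while the constraints on $\querymin$ (PC1, PC3) force the optimum to set $X[v]=\max\!\big(0,\,1-|W(v)|+\sum_{w\in W(v)}\max_{s\in w}X[s]\big)$ for each $v\in Q(\D)$, where $W(v)$ is the witness set of $v$.

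Next I would exploit the head domination property~\cite{KimelfeldVW12}. Partition the atoms of $Q$ into connected components through shared existential variables; head domination states that inside each component all occurring head variables already appear together in a single dominating atom. This yields two structural facts that drive the proof: (i) the witnesses of $t$ factor as a product over the components, so that making $t$ disappear is equivalent to completely covering the valuations of at least one component; and (ii) a tuple of a dominating atom participates only in view tuples that agree with $t$ on that component's head variables (a locality property), so deleting such a tuple has side effects confined to a controlled block of $Q(\D)$. I would use these to argue that at an optimal solution it never helps to delete tuples outside the dominating atoms.

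With the problem reduced to the dominating atoms, I would establish integrality of the reduced LP by one of two routes. The direct route is an exchange argument: starting from a fractional optimum, shift the deletion mass so that it covers the valuations of a single component and is supported on that component's dominating atom; fact (i) guarantees feasibility is retained and fact (ii) guarantees the side-effect objective does not increase, after which the residual per-component problem is a trivial covering LP over disjoint blocks whose unique vertex is integral. Taking the cheapest component then gives an integral solution of cost at most the LP optimum, i.e.\ $\dpvs(Q,\D,t)\le \gdpdpvslpparam{Q,\D,t}$, and together with the relaxation bound this yields equality. The alternative route is to show that, restricted to the dominating atoms and their local view blocks, the constraint matrix of the reduced program is totally unimodular (or describes a min-cost flow), so that every vertex of the LP is already integral.

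The hard part is the interaction between the \emph{conjunctive} side-effect objective and the \emph{disjunctive} feasibility condition. A view tuple is counted only when \emph{all} of its witnesses are broken, so the objective is a convex piecewise-linear (product/AND) function rather than a plain linear covering cost, and feasibility requires covering \emph{some} component rather than a fixed one; a priori a fractional solution could hedge by spreading deletion mass across several components, or across many tuples of one dominating atom. The crux of the proof is therefore a no-interference lemma showing that such fractional hedging can always be concentrated without increasing cost, and this is exactly the step that consumes the head-domination hypothesis: it is the single-dominating-atom condition and the resulting locality that prevent a deletion intended for one view tuple from fractionally discounting another. I expect this lemma to be where the argument would break for queries lacking head domination, consistent with the known NP-hardness in that regime~\cite{KimelfeldVW12}.
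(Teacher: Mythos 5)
Your proposal sets out to prove full integrality of the LP polytope for head-dominated queries via a component decomposition, a concentration/exchange step, and (alternatively) total unimodularity, but the step that carries all the weight --- the ``no-interference lemma'' showing that fractional deletion mass spread across components and across tuples of a dominating atom can always be concentrated without increasing the side-effect objective --- is only announced, never proven. You yourself flag it as ``the crux,'' and neither of the two routes you offer for it is carried out; the total-unimodularity route in particular is doubtful as stated, since the PC3-type constraints $1+\sum_{w\supseteq v}(X[w]-1)\le X[v]$ have rows of unbounded support and do not resemble a network matrix. As written, the proposal is a plan whose central lemma is missing, so it does not yet establish the direction $\dpvs(Q,\D,t)\le\gdpdpvslpparam{Q,\D,t}$.

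The paper's proof avoids all of this machinery by invoking a stronger known fact than the locality properties you extract from head domination: by Proposition~3.2 of \cite{KimelfeldVW12}, head domination implies that the optimal $\dpvs$ solution is \emph{side-effect free}, so the integral optimum is always exactly $1$ (only the target tuple is removed from $\querymin(\D)$). Once the optimum is pinned to the constant $1$, no polytope-integrality argument is needed; it suffices to lower-bound the LP by $1$, which the paper obtains from the nonnegativity of every term in the objective together with the observation that the target tuple's view variable is forced to $1$ by the user constraint (the one point worth checking carefully in that argument is that this forcing propagates to the $\querymin$-copy of the target tuple, not merely its $\querydel$-copy). Your factorization of the lineage over existentially connected components and your locality facts (i)--(ii) are essentially the ingredients of that side-effect-freeness result, so you are one step away from the short proof: use those facts once to conclude that the integral optimum equals $1$, and the only remaining task on the LP side is a trivial lower bound rather than a rounding argument.
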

\begin{proofintuition*}
    If a query has the head domination property, it is known that optimal solution for $\dpvs$ is side effect free~\cite{KimelfeldVW12}. 
    Thus, the optimal value of the ILP objective is $0$, and LP relaxation cannot take on a negative value and hence must be equal and integral.
\end{proofintuition*}

\introparagraph{SWP} 
It is known that SWP is PTIME for self-join free queries if and only if they have the \emph{head clustering property} \cite{hu_et_al:LIPIcs.ICDT.2024.24}, which is a restriction of the head domination property.
We are again able to show that for such queries that have the head clustering property, if we pose $\swp(Q, \D)$ in the $\gdp$ framework, then the LP relaxation $\gdpswplpparam{Q, \D}$ is equivalent to the solution of $\swp(Q, \D)$.

\begin{restatable}{proposition}{swpptime}
    $\gdpswplpparam{Q, \D} = \swp(Q, \D)$ for all database instances $\D$ under set and bag semantics if $Q$ is a self-join free conjunctive query with the head clustering property.
    \label{prop:swpptime}
\end{restatable}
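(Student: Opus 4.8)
By \cref{th:ILP:smoothened} the smoothened program $\gdpswpilp$ already equals the true optimum $\swp(Q,\D)$, so the whole statement reduces to showing that its LP relaxation is tight, i.e.\ $\gdpswplpparam{Q,\D} = \gdpswpilp$. First I would write the relaxation out explicitly for the SWP encoding of \cref{sec:gdp-capture-prior-variants}: we have $\querypres=\langle Q\rangle$, $\kpres = |Q(\D)|$, $\querymax$ the identity views over the relations, and $\querydel=\querymin=\emptyset$. The single preservation user constraint forces $\sum_{v\in Q(\D)} X[v]\le 0$, hence $X[v]=0$ for every output tuple $v$; the PC2/PC4 constraints of the identity views collapse each $\querymax$ output variable onto its source-tuple variable, so the objective becomes $-\sum_t X[t]$ (maximize the number of deleted source tuples). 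After this simplification only two families of constraints survive: the PC3 bounds $\sum_{v\subseteq w}X[w]\le d_v-1$ (at least one witness of each $v$ is preserved, where $d_v$ is the number of witnesses of $v$) and the smoothing constraints $X[t]\le 1+\sum_{w:\,t\in w,\,w\supseteq v}(X[w]-1)$ for each source tuple $t$ and output tuple $v$ it helps produce.

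The heart of the proof is to show that this covering-type LP has an integral optimum precisely under the head clustering property, and that the smoothing constraints are what make this true -- without them the relaxation is already fractional (\cref{ex:smoothing}, $-1.5$ vs.\ $-1$). I would establish tightness by a primal--dual certificate. On the primal side I take a smallest witness $\Gamma^\star$ (its existence and PTIME computability follows from the characterization of \cite{hu_et_al:LIPIcs.ICDT.2024.24} for head-clustering queries) and set $X[t]=1$ for $t\in\Gamma^\star$, choosing for each $v$ one surviving witness with variable $0$ and setting the others to $1$; this is integral and feasible with objective $-|\Gamma^\star|=\gdpswpilp$. On the dual side I would construct a feasible dual solution of the simplified LP of equal value, so that weak duality forces $\gdpswplp = \gdpswpilp$. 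The role of head clustering is exactly to make this dual construction possible: the property groups the atoms, and hence the witnesses, into clusters so that each output tuple selects its surviving witness inside one cluster and a source tuple shared among several witnesses of the same $v$ is pinned to $X[t]=0$ by its smoothing constraint, while tuples shared across \emph{different} output tuples do not create the fractional configurations that appear for non-head-clustering queries.

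The main obstacle I anticipate is precisely this cross-tuple interaction: a source tuple that is ``private'' to the kept witness of one output tuple may be ``shared'' among the witnesses of another, and the queries that fail head clustering are exactly those in which these roles close up into a cycle that the LP can exploit with half-integral mass. Ruling such cycles out is where I would lean most heavily on the structural decomposition underlying head clustering in \cite{hu_et_al:LIPIcs.ICDT.2024.24}, rather than on a generic total-unimodularity argument; equivalently, I would try to show that the constraint matrix of the simplified program, restricted to the non-pinned variables, decomposes into disjoint single-covering blocks (each a network/interval matrix), which are integral. For bag semantics the active constraint system and the objective are structurally unchanged -- only $|Q(\D)|$ and the witness multiplicities are counted with multiplicity -- so the same certificate applies verbatim, giving $\gdpswplpparam{Q,\D}=\swp(Q,\D)$ in both settings.
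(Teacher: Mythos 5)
Your overall strategy matches the paper's: reduce to showing the LP relaxation is tight, write out the simplified program for the SWP encoding, observe that the preservation constraint pins all output-tuple variables, and use head clustering to split the problem so that each output tuple (and each existentially connected component) is certified independently; you also correctly identify that the smoothing constraints are what kill the fractional optimum of \cref{ex:smoothing}. Where you diverge is the mechanism for the per-block tightness proof. You propose an LP-duality certificate backed by an interval/network-matrix decomposition, but you never actually exhibit the dual solution, and that is where all the work lives. The paper's argument is more elementary and is, in effect, the explicit certificate you are looking for: switch to preservation variables $Y[t]=1-X[t]$, note that within a component head clustering forces all relations to share the head variables, so each input tuple feeds exactly one output tuple and the program splits into one covering block per output tuple; in each block the smoothing constraint reads $Y[t]\ge\sum_{w\ni t,\,w\supseteq v}Y[w]$, and since a self-join-free witness meets each relation $R_i$ in exactly one tuple, summing over $t\in R_i$ gives $\sum_{t\in R_i}Y[t]\ge\sum_{w\supseteq v}Y[w]\ge 1$; summing over the $m$ relations bounds the LP value below by $m$, which equals the integral optimum of preserving one arbitrary witness. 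This short counting argument replaces both your dual construction and the total-unimodularity-style machinery, and it does not require invoking the algorithm of \cite{hu_et_al:LIPIcs.ICDT.2024.24} to produce the primal solution. A further simplification you miss: the ``main obstacle'' you anticipate --- a tuple private to one output tuple's kept witness but shared among the witnesses of another --- cannot arise at all under head clustering, since identical head variables within a component mean each tuple determines its output tuple, and across components the relations are disjoint by self-join-freeness; you need only the definition, not the structural decomposition of \cite{hu_et_al:LIPIcs.ICDT.2024.24}, to rule it out. Your treatment of bag semantics (nothing structural changes) agrees with the paper.
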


\begin{proofintuition*}
    We first simplify the ILP and phrase it in terms of variables $Y[t] = 1 - X[t]$. 
    We next show that due to the head clustering property, the ILP can be decomposed into multiple independent ILPs, corresponding to different existentially connected components of the query.
    For each such component, the correct solution can be obtained by preserving an arbitrary witness for each projection, and hence the LP relaxation must be tight.
\end{proofintuition*}

\introparagraph{ADP-SS} 
A complexity dichotomy for the $\adp$ problem for self-join free queries under set semantics is known. 
However, the complexity criterion \cite{ADP} is much more involved.
In particular, $\adp$ for a self-join free query is PTIME if and only if 
(1) The query is Boolean and does not have a triad, 
(2) The query has a \emph{singleton} relation, 
(3) Repeated application of decomposition by removing head variables that are present in all atoms, and treating disconnected components of a query independently, results in queries that are tractable.
We show that no matter the reason for tractability, if we pose $\adp(Q, \D, k)$ in the $\gdp$ framework, then the LP relaxation $\gdpadplp$ is equivalent to the solution of $\adp(Q, \D, k)$.

\begin{restatable}{proposition}{adpptime}
    $\gdpadplpparam{Q, \D} = \swp(Q, \D)$ for all database instances $\D$ if $Q$ is a self-join free for which $\adp(Q)$ is known to be tractable under set semantics.
    \label{prop:adpptime}
\end{restatable}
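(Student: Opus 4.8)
The plan is to prove that the LP relaxation $\gdpadplp$ has an integral optimum matching $\adp(Q,\D)$ by structural induction mirroring the recursive tractability criterion of \cite{ADP}. First I would write out the program explicitly: when $\adp(Q,\D,k)$ is posed as $\gdp$ we set $\querydel = \langle Q\rangle$ with $\kdelparam{1}=k$ and let $\querymin$ be the identity queries, so the objective is $\min \sum_{t}X[t]$ (source deletions) subject to the aggregate user constraint $\sum_{v\in Q(\D)}X[v]\ge k$ together with the wildcard propagation constraints PC2, namely $\sum_{t\in w}X[t]\ge X[w]$, and PC4, namely $X[w]\ge X[v]$ for $v\subseteq w$. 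The object I really track is the \emph{cost profile} $g$, where $g(j)$ is the optimal LP value of this program when the right-hand side is $j$ rather than $k$. Parametric LP theory already guarantees $g$ is convex and piecewise linear, so the whole argument reduces to the invariant that for every ADP-tractable $Q$ the profile has integer breakpoints and is attained by an integral assignment; in particular $g(k)$ is then integral and equals $\adp(Q,\D,k)$.

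Next I would dispatch the two base cases. For a Boolean triad-free query there is a single output tuple and necessarily $k=1$, so $\adp$ collapses to resilience/$\dpss$; the LP is then exactly the one of \cref{prop:dpssptime}, whose integrality we may assume, giving $g(1)$ integral. For a query with a singleton relation $R$ — an atom each of whose tuples contributes to a single output tuple, so that deletions map injectively to output deletions — the system linking $X[t]$, $X[w]$, $X[v]$ for that relation is of interval/network type, hence totally unimodular, so deleting $j$ outputs optimally removes the $j$ cheapest singleton tuples and $g(j)$ is integral with an integral optimizer for every $j$.

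For the inductive steps I would show that each decomposition preserves the invariant ``$g$ convex, piecewise linear, integer breakpoints, integrally attained.'' When a head variable occurs in every atom, fixing its value partitions $\D$ into groups sharing no input tuple and with disjoint output tuples, so the LP splits into independent sub-LPs coupled only through $\sum_c j_c \ge k$. By induction each group has a profile $g_c$ satisfying the invariant, and the combined profile is their infimal convolution $g(k)=\min\{\sum_c g_c(j_c)\mid \sum_c j_c\ge k\}$; the infimal convolution of convex integer-breakpoint profiles is again convex with integer breakpoints and an integral minimizer (a greedy exchange argument on marginal costs), which is precisely the statement that the allocation LP is integral. The disconnected-components case is analogous but the coupling is multiplicative: an output tuple $(v_1,\dots,v_m)$ is deleted iff some component deletes its coordinate, so the number of surviving product outputs is $\prod_i (|Q_i(\D)|-j_i)$, and I would re-express the aggregate constraint in terms of per-component survivors and again combine the component profiles into one convex profile.

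The main obstacle I anticipate is exactly this last combination: unlike the additive partition, the disconnected-components constraint is nonlinear in the per-component deletion counts, so I must argue carefully that the induced profile stays convex with integer breakpoints and an integral optimizer, and that the LP relaxation cannot ``cheat'' by splitting a product deletion fractionally across components. Establishing that the convexity-plus-integral-attainability invariant survives the multiplicative coupling — and that the wildcard semantics, which decouples $X[w]$ from actual witness survival, does not break the correspondence between the LP value and the profile $g$ — is where the real work lies; the base cases and the additive decomposition are comparatively routine once the invariant is in place.
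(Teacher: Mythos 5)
Your overall skeleton matches the paper's: both proofs induct on the recursive tractability criterion of \cite{ADP}, with the two base cases (Boolean triad-free queries reduced to resilience via \cref{prop:dpssptime}, and singleton-relation queries handled by an integrality-of-the-constraint-matrix argument -- you invoke total unimodularity of a network-type system, the paper invokes balancedness via acyclicity after restricting to the dominating singleton relation; these are the same idea) and the two recursive decompositions. For the universal-attribute decomposition your parametric cost profile $g(j)$ and the infimal-convolution argument are in fact a more formal rendering of what the paper asserts loosely, namely that for any split of $k$ across the independent sub-programs the integral sub-solutions recombine, and that the LP automatically recovers the optimal split; convexity with integer breakpoints is exactly the property that makes the greedy allocation of $k$ across components integral, so this part of your proposal is a genuine (and arguably cleaner) contribution.

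The problem is that you have not proved the proposition: you explicitly leave the disconnected-components case open, and that case is one of the two recursive steps without which the induction does not close. You correctly identify why it is hard -- for a disconnected query the output is a Cartesian product of the component outputs, so the number of deleted output tuples is $\prod_i |Q_i(\D)| - \prod_i \bigl(|Q_i(\D)| - j_i\bigr)$, which is not additive in the per-component deletion counts $j_i$, and the single aggregate constraint $\sum_v X[v] \ge k$ ranges over product tuples whose witnesses are themselves products of component witnesses. Saying ``I would re-express the aggregate constraint in terms of per-component survivors and again combine the component profiles'' names the task but does not perform it; in particular you give no argument that the LP cannot achieve a strictly smaller fractional value by spreading deletions fractionally across components so that many product tuples are each ``partially'' deleted. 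The paper's own proof handles this case by arguing that for every possible split of $k$ the decomposed (recursively integral) solutions yield a feasible integral solution of the original program and that the LP relaxation attains the best such split -- it is terse, but it at least commits to a claim for the multiplicative case. Until you supply the analogue of your convexity-plus-integral-attainability invariant under the multiplicative coupling, your argument covers only a strict subset of the queries the proposition claims, namely those whose decomposition never passes through a disconnected query.
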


\begin{proofintuition*}
    The proof of optimality of the LP relaxation for $\adp$ is similar to the proof for tractability in $\adp$\cite{ADP} in terms of the base cases and how queries are decomposed. 
    For the base case of boolean queries without a triad, we use the proof of \cref{prop:dpssptime} as an argument, while for the base case of singleton relations, we use the proof similar to that of \cref{prop:dpvsptime}.
    We also show that the value of the LP relaxation is preserved even when the query is decomposed into multiple parts.
\end{proofintuition*}

\section{Experiments}
\label{sec:expts}

The goal of our 
experiments is to evaluate the performance of our unified $\gdpilp$ 
(which is our short form for $\gdpilpsmooth$) by answering the following 4 questions: 
(Q1)~Is the performance of $\gdpilp$ comparable to previously proposed specialized algorithms tailored to PTIME cases of particular DP problems? 
(Q2) Can our unified $\gdpilp$ indeed efficiently solve new tractable cases  with self-joins, unions, and bag semantics that 
we prove to be in PTIME in \cref{SEC:TRACTABILITYNEW}?
(Q3) What, if any, is the performance benefit we obtain via smoothing constraints as discussed in \cref{sec:ilp-framework}?
(Q4) What is the scalability of solving completely novel DP problems that fall into our unified $\gdp$ framework on real-world data?

\introparagraph{Algorithms}  
$\gdpilp$ denotes our ILP formulation for the generalized deletion propagation. 
$\dpvsspl$, 
$\swpspl$, 
$\adpspl$ 
denote prior specialized algorithms (``-S'') for the three problems 
\circled{\footnotesize V} $\dpvs$ \cite{KimelfeldVW12}, 
\circled{\footnotesize S} $\swp$ \cite{hu_et_al:LIPIcs.ICDT.2024.24}, and 
\circled{\footnotesize A} $\adp$ \cite{ADP}, respectively. 
Recall that these are dedicated algorithms proposed for a PTIME cases of particular problems.
We were not able to find open source code for any of these problems and implemented them based on the pseudocode provided in the respective papers that proposed them:
\circled{\footnotesize V} \cite{KimelfeldVW12}, 
\circled{\footnotesize S} \cite{hu_et_al:LIPIcs.ICDT.2024.24},
\circled{\footnotesize A} \cite{ADP}. 
To the best of our knowledge, no experimental evaluation has ever been undertaken for some of these algorithms~\cite{hu_et_al:LIPIcs.ICDT.2024.24,KimelfeldVW12}.
We do not include experimental comparison for $\dpss$, as for this problem the $\gdpilp$ produced is exactly the same as a prior specialized approach \cite{makhija2024unified}, and hence there is no difference in performance.

\introparagraph{Data} 
For most experiments we generate synthetic data by
fixing the max domain size to 1000, and sampling randomly from all possible tuples. 
For experiments under bag semantics, each tuple is duplicated by a random number that is smaller than a pre-specified max bag size of 10.
For answering (4) in \cref{fig:expt-4}, we use an existing flights' database~\cite{DVN/WTZS4K_2020} that shows flights operated by different airlines in Jan 2019 as real world data case study on a novel problem.

\introparagraph{Software and Hardware}
The algorithms are implemented in Python 3.8.8 and solve the optimization problems with Gurobi Optimizer 10.0.1. 
Experiments are run on an Intel Xeon E5-2680v4 @2.40GH machine available via the Northeastern Discovery Cluster.

\introparagraph{Experimental Protocol} 
For each plot we run 3 runs of logarithmically and monotonically increasing database instances. 
We plot all obtained data points with a low saturation, and draw a trend line between the median points from logarithmically increasing sized buckets. 
All plots are log-log, and we include a dashed line to show linear scalability as reference in the log-log plot.

\begin{figure}[t]
    \centering
    \begin{subfigure}[b]{0.04\textwidth}
    \end{subfigure}
    \hfill
    \begin{subfigure}[b]{0.45\columnwidth}
        \centering
        \vfill
        \textbf{$\qthreeray$} \\
        (Known tractable Case)
        \vfill
    \end{subfigure}
    \begin{subfigure}[b]{0.45\columnwidth}
        \centering
        \vfill
        \textbf{$\qsjtriangletwochain$} \\
        {\crefname{section}{Sec}{Sections}
        (Tractable for $\dpvs$ and $\swp$)}
        \vfill
    \end{subfigure}

    \begin{subfigure}[b]{0.02\textwidth}
        \vfill
        \begin{turn}{90}\textbf{\qquad\qquad\quad\circled{\footnotesize\normalfont V}$\dpvs$}\end{turn}
        \vfill
    \end{subfigure}
    \begin{subfigure}[b]{0.47\columnwidth}
        \includegraphics[scale=0.35]{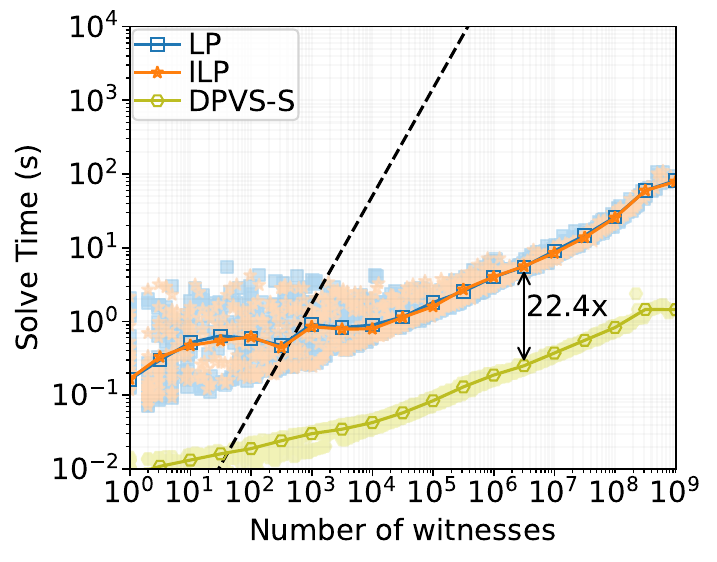}
        \caption{}
        \label{fig:dpvs-specialized}
    \end{subfigure}
    \begin{subfigure}[b]{0.47\columnwidth}
        \includegraphics[scale=0.35]{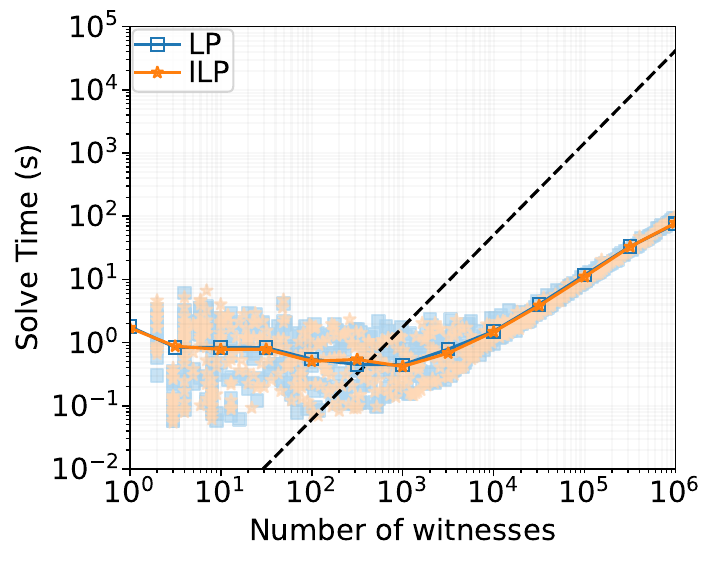}
        \caption{}
        \label{fig:sj-dpvs}
    \end{subfigure}
    \begin{subfigure}[b]{0.02\textwidth}
        \vfill
        \begin{turn}{90}\textbf{$\qquad\qquad\quad\circled{\footnotesize\normalfont S}\swp$}\end{turn}
        \vfill
    \end{subfigure}
    \begin{subfigure}[b]{0.47\columnwidth}
        \includegraphics[scale=0.35]{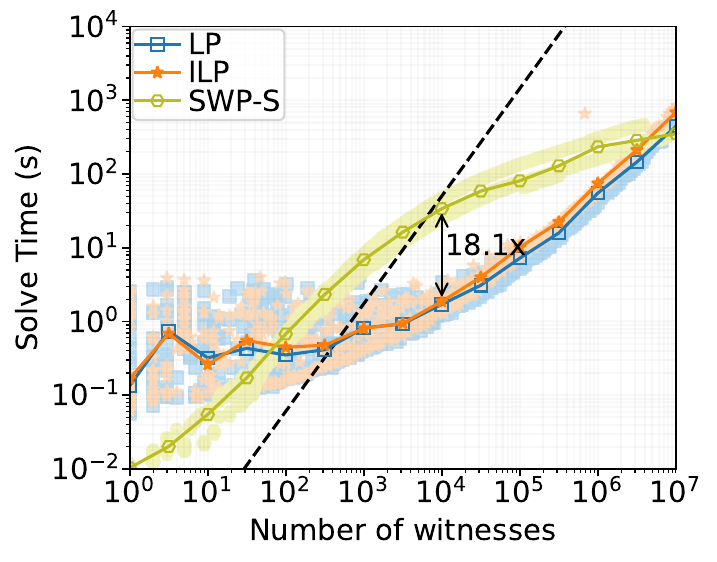}
        \caption{}
        \label{fig:swp-specialized}
    \end{subfigure}
    \begin{subfigure}[b]{0.47\columnwidth}
        \includegraphics[scale=0.35]{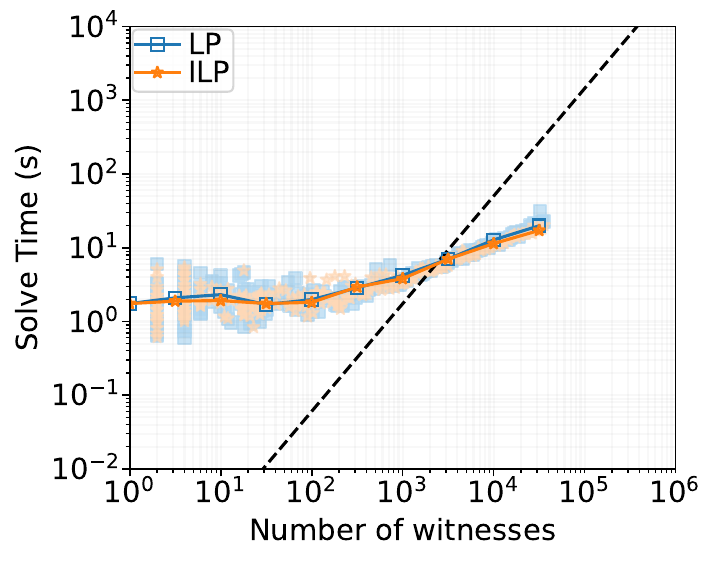}
        \caption{}
        \label{fig:sj-swp}
    \end{subfigure}
    \begin{subfigure}[b]{0.02\textwidth}
        \vfill
        \begin{turn}{90}\textbf{$\qquad\qquad\quad\circled{\footnotesize\normalfont A}\adp$}\end{turn}
        \vfill
    \end{subfigure}
    \begin{subfigure}[b]{0.47\columnwidth}
        \includegraphics[scale=0.35]{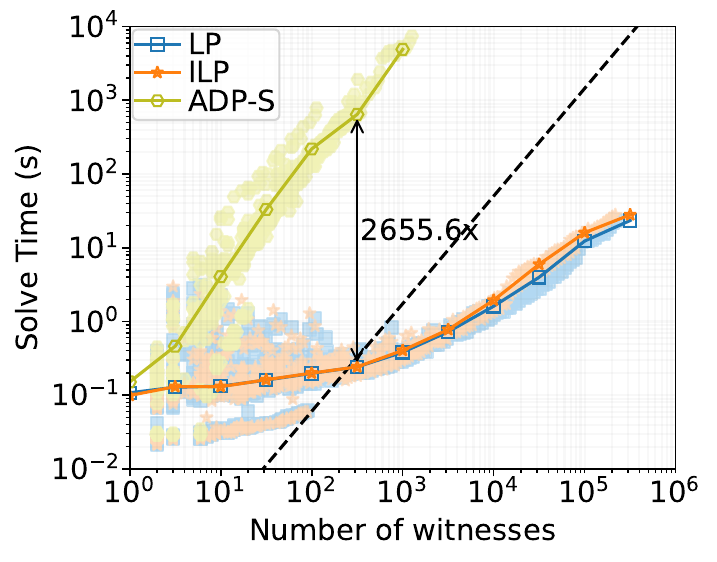}
        \caption{}
        \label{fig:adp-specialized}
    \end{subfigure}
    \begin{subfigure}[b]{0.47\columnwidth}
        \includegraphics[scale=0.35]{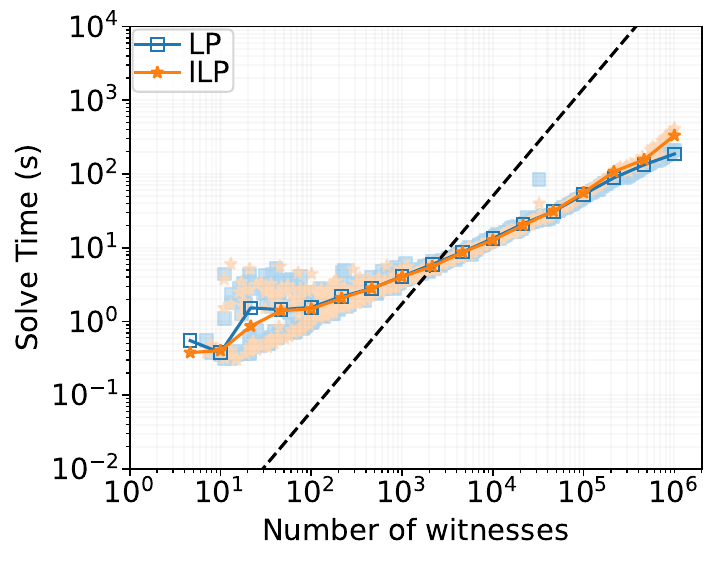}
        \caption{}
        \label{fig:sj-adp}
    \end{subfigure}
    \caption{
    (Q1)/(Q2): Performance of $\gdpilp$ on a previously known (left column compared against prior specialized algorithms) 
    and a newly discovered tractable query (right column with no prior known specialized algorithm) 
    for three prior studied problems 
    \protect\circled{\footnotesize\normalfont V}~$\dpvs$, 
    \protect\circled{\footnotesize\normalfont S}~$\swp$, and 
    \protect\circled{\footnotesize\normalfont A}~$\adp$. 
    In all cases, $\gdpilp$ scales well under the theoretical worst case complexity of ILPs and LPs. 
    }
    \label{fig:expt-12}
\end{figure}

\smallsection{(Q1) Known tractable cases}
Is the performance of $\gdpilp$ over PTIME instances comparable to specialized algorithms studied in prior work?
We pick the 3-star query $\qthreeray(a) \datarule R(a,b), S(a,c), R(a, d)$, for which all three problems can be solved in PTIME. 
We run all three problems on this query and compare the performance of $\gdpilp$ against specialized algorithms.
The ILP and the LP have worse worst-case complexity than the specialized algorithms, however we see that $\gdpilp$ is at times even faster than the specialized algorithms, due to the better heuristics used in the ILP solver. 
In \cref{fig:dpvs-specialized}, we see that $\gdpilp$ is about $20$ times worse than the specialized algorithms for \circled{\footnotesize D}~$\dpvs$. But this is expected for this case, since the PTIME cases of $\dpvs$ are only those where there are \emph{no side effects} and any tuple that contributes to the answer can be deleted, thus making the problem solvable via a trivial algorithm. 
However, in \cref{fig:swp-specialized,fig:adp-specialized}, we see that $\gdpilp$ is about $2$ and $3$ orders of magnitude \emph{faster than the specialized algorithms} for \circled{\footnotesize S} $\swp$ and \circled{\footnotesize A} $\adp$, both of which use decomposition based techniques, with the specialized algorithm for \circled{\footnotesize{A}} $\adp$ also requiring dynamic programming.
Thus, although the specialized algorithms have better asymptotic fine-grained complexity
and will perform better on adversarially chosen instances, over random instances the LP solver (using heuristics) is able to find a solution faster.

\smallsection{(Q2) Newly discovered Tractable cases}
We evaluate the performance of $\gdpdpvsilp$, $\gdpswpilp$ and $\gdpadpilp$ for $\qsjtriangletwochain$,
a query with self-joins and a union that we run under bag semantics.
We show \cref{SEC:TRACTABILITYNEW} that the \circled{\footnotesize D}~$\dpvs$, \circled{\footnotesize S}~$\swp$, and \circled{\footnotesize A}~$\adp$ problems are tractable for this query under bag semantics.
There are no known specialized algorithms for these problems, and thus we only compare the performance of $\gdpilp$ with the PTIME LP Relaxation.
We see in \cref{fig:sj-dpvs,fig:sj-swp,fig:sj-adp} that the ILP is as fast as its LP Relaxation and shows linear scalability even for this complicated setting.

\smallsection{(Q3) The Power of Smoothing Constraints}
\Cref{fig:expt-3} shows two orders of magnitude speedup in the ILP solving time after adding our Smoothing Constraint (SC).
We run $\gdpswpilp$ for the 3-star query $\qthreeray$,
contrasting $\gdpilp$ with and without smoothing constraints.
Since we show in \cref{sec:tractibility-results} that the LP relaxation of $\gdpilp$ shares the same optimum objective value,
this surprising speed-up completely justified by our the fact that ILP solvers can solve ILPs efficiently when the LP relaxation is tight.
Moreover, we see that the LP relaxation of $\gdpswpilp$ is always tight for the $\qthreeray$ query. 
This is notably not true for the naive ILP formulation, which can have an over $30\%$ higher optimal objective (GDP) value.\footnote{
    Due to $\swp$ being a maximization problem being posed as a minimization problem through $\gdpswp$, the optimal values of $\gdpswp$ are negative, and hence the magnitude of the LP relaxation is higher than the ILP (despite the LP being a lower bound).
}

\begin{figure}
    \centering
    \begin{subfigure}[b]{0.47\columnwidth}
        \includegraphics[scale=0.3]{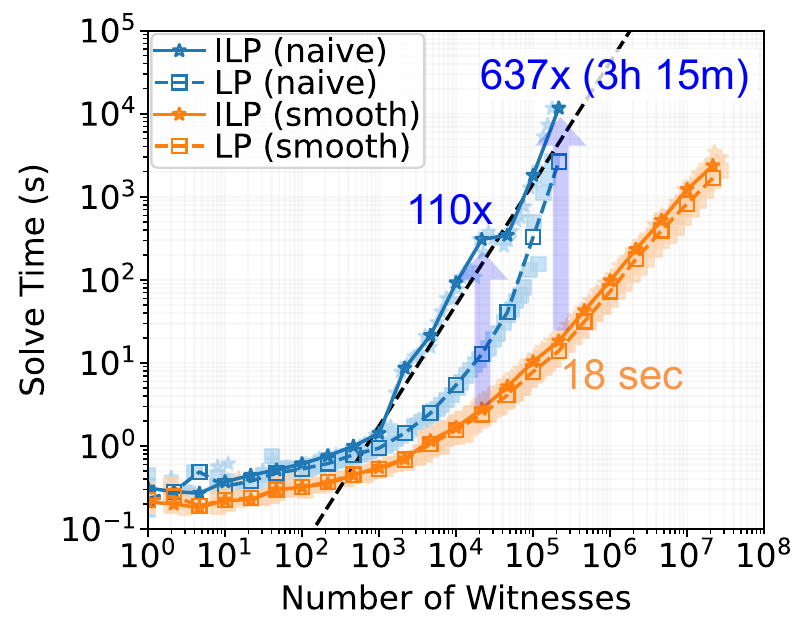}
    \end{subfigure}
    \begin{subfigure}[b]{0.47\columnwidth}
        \includegraphics[scale=0.33]{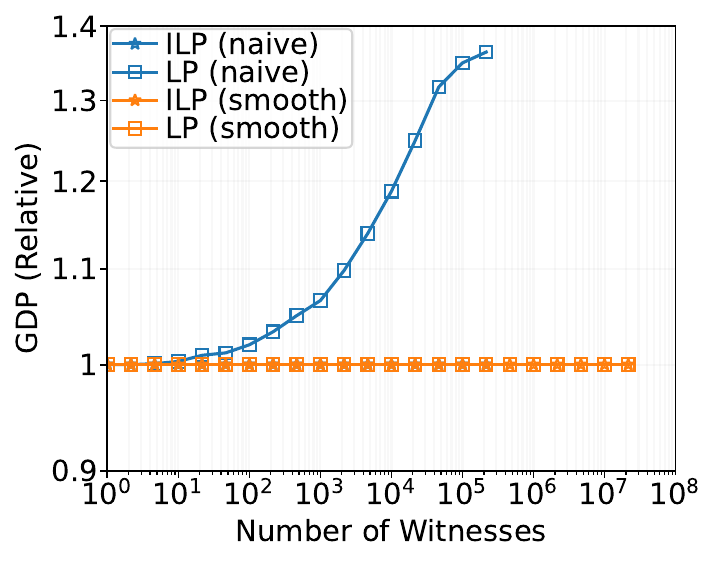}
    \end{subfigure}
    \caption{
    (Q3): Experiment showing the power of Smoothing Constraints in $\gdpilp$: we observe that
    $\gdpilp$ is orders-of-magnitude faster than the naive ILP formulation $\gdpilpbasic$, while also guaranteeing the optimality of its LP relaxation. Contrast with the LP relaxation of $\gdpilpbasic$ which can have an over $30\%$ higher optimal objective (GDP) value.}
    \label{fig:expt-3}
\end{figure}

\smallsection{(Q4) General performance}
We use the flights' database~\cite{DVN/WTZS4K_2020} that shows $500$K+ flights operated by $17$ different airlines in Jan 2019. 
We take the use case of \cref{ex:introexample} and solve the $\gdp$ with the following requirements: 
(1) Cut 2\% of total costs of the airline, 
(2) Let the connection network of an airline contain all pairs of places that have a direct (0-hop) or 1-hop flight between them. Minimize the effect of deletions on the connection network i.e., minimize the number of location pairs that are removed from the connection network.
(3)~A subset of pairs of places are ``popular connections''. Ensure that such connections are preserved in the connection network.
For our case study, we assign a random expense value to each airport as an airport fee and each flight as a fuel cost, since the dataset does not contain any cost information. 
However, such information or cost factors if known can be easily incorporated into the ILP formulation by simply changing the randomly assigned costs to the actual costs.
We assume if an airline uses an airport, it must pay its fee and assume that there are no additional costs.
We show in \cref{fig:expt-4} the time to solve the $\gdp$ for this use case via $\gdpilp$, and compare it to the LP relaxation.
We see that even for this real-world dataset, where there are no guarantees on the properties of $\gdpilp$, most instances are solved in well under a minute, and the optimal solutions of the ILP and LP relaxation coincide in many cases.\footnote{We observed that in some cases, ILP is faster than LP. This is a known observation and may be due to numerical and floating-point issues~\cite{gurobinum}.} 

\begin{figure}
    \centering
    \includegraphics[scale=0.35]{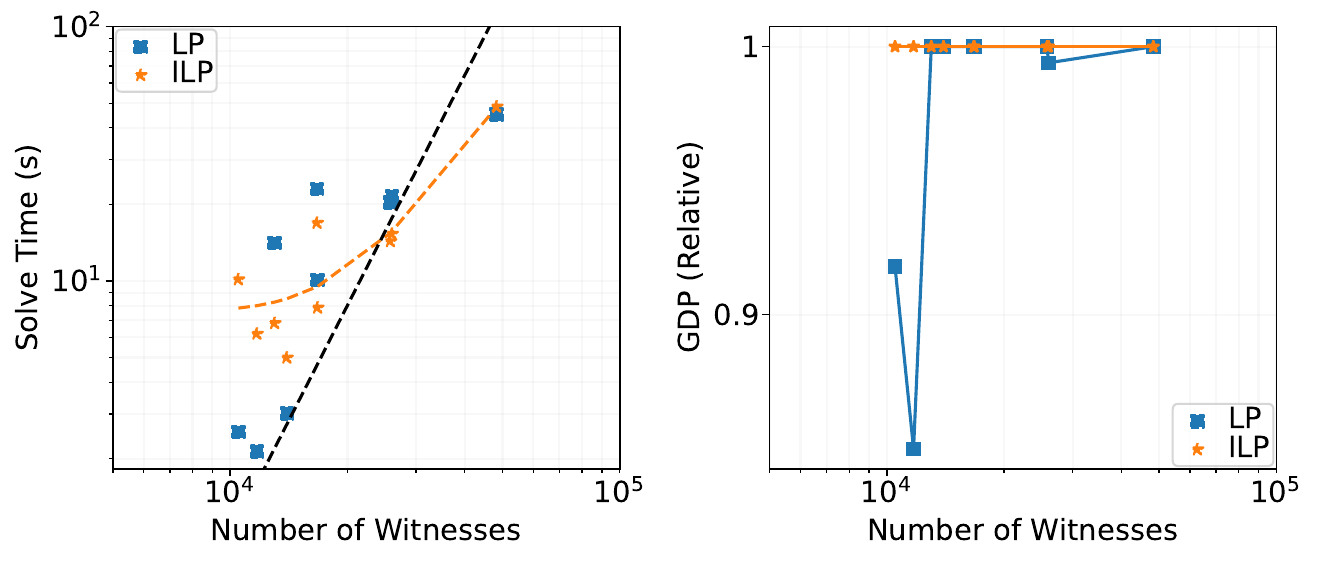}
    \caption{
    (Q4): Performance Evaluation of the Generalized Deletion Propagation over a real-world dataset shows fast solve times (within a minute) and comparable to linear-time (black dashed line) scalability.
    }
    \label{fig:expt-4}
\end{figure}

\section{A Note on System Implementation}
\label{sec:systemimplementation}

Our current proof of concept (see \cref{sec:expts}) uses Python to compute the provenance of query results, and to translate this into an ILP formulation.
This part could be more tightly integrated with existing database systems by leveraging
existing efforts in our community that have been investigating how to repurpose provenance functionality during query execution into PostgreSQL, 
such as Perm~\cite{Glavic:Perm:2010,DBLP:conf/sigmod/GlavicA09},
GProM~\cite{DBLP:journals/debu/ArabFGLNZ17}, or
ProvSQL~\cite{DBLP:journals/pvldb/SenellartJMR18}. 
Furthermore, the extensibility features of today's database systems could be used to add such an ILP solver to the database systems, just user defined functions can be written in programming languages other than the native SQL~\cite{brucato2019scalable,DBLP:journals/cacm/Chaudhuri19,DBLP:journals/pvldb/MaiWABHM24}.
Such integrations of highly sophisticated solvers for solving important database problems have been previously proposed for consistent query answering~\cite{DBLP:conf/icde/DixitK22}
and query optimization~\cite{DBLP:conf/sigmod/Trummer017}, and we believe will become more common.

We believe our unified approach provides an easier integration into existing database infrastructure than prior solutions for several reasons:
\circledwhite{1}~Prior solutions to individual problems use different solution approaches
(e.g., ADP-SS uses dynamic programming,
whereas DP-SS uses reduction to flow).
Thus an integration of all prior work would require \emph{several adaptations}, one for each method.
\circledwhite{2}~For an approach to be natively supported by a relational database 
(thus without user defined functions written in a programming language),
the approach would have to be first-order rewritable.
Among the prior solutions, the only cases we know of that are first-order rewritable are the few PTIME cases for DP-VS~\cite{KimelfeldVW12}.
All other approaches require writing functionality in programming languages other than SQL, just as ours.
\circledwhite{3}~All prior exact methods (except~\cite{makhija2024unified}) are incomplete in that they work only for those conjunctive queries which can be solved in guaranteed PTIME. Only some prior work propose approximation algorithms for the hard cases
(such as the one for DP-VS), yet implementing those require yet other methods.
\circledwhite{4}~In addition, to our approach being the only one that is complete, 
it also has a desirable anytime property: ILP solvers can produce solutions of increasing quality as optimization progresses and are able to provide bounds for how far the current solution is from the optimum.

\section{Future Work}
In this paper, we have introduced a new framework for deletion propagation problems that it is practical, efficient, and complete.
We leave with an interesting conjecture: \emph{All queries for which deletion propagation problems are tractable can be solved in PTIME via our approach}. 
We show that \emph{this conjecture is true for all currently known tractable cases}, and also show an example of a query with self-joins and unions that was not previously studied and that is tractable for our approach, even under bag semantics.
If proven correct, this would be another practically appealing reason for using our approach now: we would get guaranteed PTIME performance already today, even if the proof happens in the future.
We leave open the question of if this conjecture is true for all tractable cases, and the question of the complexity dichotomies of various deletion propagation problems (including the ones introduced in this paper).

\begin{acks} %
This work was supported in part by the National Science Foundation (NSF) under award number IIS-1762268 and IIS-1956096, 
and conducted in part while the authors were visiting the Simons Institute for the Theory of Computing.
\end{acks}

\bibliographystyle{ACM-Reference-Format}
\bibliography{BIB/delprop.bib}


\begin{thebibliography}{56}


\ifx \showCODEN    \undefined \def \showCODEN     #1{\unskip}     \fi
\ifx \showISBNx    \undefined \def \showISBNx     #1{\unskip}     \fi
\ifx \showISBNxiii \undefined \def \showISBNxiii  #1{\unskip}     \fi
\ifx \showISSN     \undefined \def \showISSN      #1{\unskip}     \fi
\ifx \showLCCN     \undefined \def \showLCCN      #1{\unskip}     \fi
\ifx \shownote     \undefined \def \shownote      #1{#1}          \fi
\ifx \showarticletitle \undefined \def \showarticletitle #1{#1}   \fi
\ifx \showURL      \undefined \def \showURL       {\relax}        \fi
\providecommand\bibfield[2]{#2}
\providecommand\bibinfo[2]{#2}
\providecommand\natexlab[1]{#1}
\providecommand\showeprint[2][]{arXiv:#2}

\bibitem[Aardal et~al\mbox{.}(2005)]%
        {aardal2005handbooks}
\bibfield{author}{\bibinfo{person}{Karen Aardal}, \bibinfo{person}{George~L Nemhauser}, {and} \bibinfo{person}{Robert Weismantel}.} \bibinfo{year}{2005}\natexlab{}.
\newblock \bibinfo{booktitle}{\emph{Handbooks in Operations Research and Management Science: Discrete Optimization}}.
\newblock \bibinfo{publisher}{Elsevier}.
\newblock
\href{https://doi.org/10.1016/s0927-0507(05)x1200-2}{doi:\nolinkurl{10.1016/s0927-0507(05)x1200-2}}


\bibitem[Alway et~al\mbox{.}(2021)]%
        {DBLP:conf/icdt/AlwayBS21}
\bibfield{author}{\bibinfo{person}{Kaleb Alway}, \bibinfo{person}{Eric Blais}, {and} \bibinfo{person}{Semih Salihoglu}.} \bibinfo{year}{2021}\natexlab{}.
\newblock \showarticletitle{Box Covers and Domain Orderings for Beyond Worst-Case Join Processing}. In \bibinfo{booktitle}{\emph{{ICDT}}} \emph{(\bibinfo{series}{LIPIcs}, Vol.~\bibinfo{volume}{186})}. \bibinfo{pages}{3:1--3:23}.
\newblock
\href{https://doi.org/10.4230/LIPICS.ICDT.2021.3}{doi:\nolinkurl{10.4230/LIPICS.ICDT.2021.3}}


\bibitem[Arab et~al\mbox{.}(2018)]%
        {DBLP:journals/debu/ArabFGLNZ17}
\bibfield{author}{\bibinfo{person}{Bahareh~Sadat Arab}, \bibinfo{person}{Su Feng}, \bibinfo{person}{Boris Glavic}, \bibinfo{person}{Seokki Lee}, \bibinfo{person}{Xing Niu}, {and} \bibinfo{person}{Qitian Zeng}.} \bibinfo{year}{2018}\natexlab{}.
\newblock \showarticletitle{GProM - {A} Swiss Army Knife for Your Provenance Needs}.
\newblock \bibinfo{journal}{\emph{{IEEE} Data Eng. Bull.}} \bibinfo{volume}{41}, \bibinfo{number}{1} (\bibinfo{year}{2018}), \bibinfo{pages}{51--62}.
\newblock
\urldef\tempurl%
\url{http://sites.computer.org/debull/A18mar/p51.pdf}
\showURL{%
\tempurl}


\bibitem[Bassan et~al\mbox{.}(2024)]%
        {pmlr-v235-bassan24a}
\bibfield{author}{\bibinfo{person}{Shahaf Bassan}, \bibinfo{person}{Guy Amir}, {and} \bibinfo{person}{Guy Katz}.} \bibinfo{year}{2024}\natexlab{}.
\newblock \showarticletitle{Local vs. Global Interpretability: A Computational Complexity Perspective}. In \bibinfo{booktitle}{\emph{ICML}} \emph{(\bibinfo{series}{PMLR}, Vol.~\bibinfo{volume}{235})}. \bibinfo{pages}{3133--3167}.
\newblock
\urldef\tempurl%
\url{https://proceedings.mlr.press/v235/bassan24a.html}
\showURL{%
\tempurl}


\bibitem[Bodirsky et~al\mbox{.}(2024)]%
        {DBLP:conf/lics/BodirskySL24}
\bibfield{author}{\bibinfo{person}{Manuel Bodirsky}, \bibinfo{person}{Zaneta Semanisinov{\'{a}}}, {and} \bibinfo{person}{Carsten Lutz}.} \bibinfo{year}{2024}\natexlab{}.
\newblock \showarticletitle{The Complexity of Resilience Problems via Valued Constraint Satisfaction Problems}. In \bibinfo{booktitle}{\emph{{LICS}}}. \bibinfo{publisher}{{ACM}}, \bibinfo{pages}{14:1--14:14}.
\newblock
\href{https://doi.org/10.1145/3661814.3662071}{doi:\nolinkurl{10.1145/3661814.3662071}}


\bibitem[Brucato et~al\mbox{.}(2019)]%
        {brucato2019scalable}
\bibfield{author}{\bibinfo{person}{Matteo Brucato}, \bibinfo{person}{Azza Abouzied}, {and} \bibinfo{person}{Alexandra Meliou}.} \bibinfo{year}{2019}\natexlab{}.
\newblock \showarticletitle{Scalable computation of high-order optimization queries}.
\newblock \bibinfo{journal}{\emph{Commun. ACM}} \bibinfo{volume}{62}, \bibinfo{number}{2} (\bibinfo{year}{2019}), \bibinfo{pages}{108--116}.
\newblock
\href{https://doi.org/10.1145/3299881}{doi:\nolinkurl{10.1145/3299881}}


\bibitem[Buneman et~al\mbox{.}(2002)]%
        {Buneman:2002}
\bibfield{author}{\bibinfo{person}{Peter Buneman}, \bibinfo{person}{Sanjeev Khanna}, {and} \bibinfo{person}{Wang-Chiew Tan}.} \bibinfo{year}{2002}\natexlab{}.
\newblock \showarticletitle{On Propagation of Deletions and Annotations Through Views}. In \bibinfo{booktitle}{\emph{{PODS}}}. \bibinfo{pages}{150--158}.
\newblock
\showISBNx{1-58113-507-6}
\href{https://doi.org/10.1145/543613.543633}{doi:\nolinkurl{10.1145/543613.543633}}


\bibitem[Chaudhuri(2019)]%
        {DBLP:journals/cacm/Chaudhuri19}
\bibfield{author}{\bibinfo{person}{Surajit Chaudhuri}.} \bibinfo{year}{2019}\natexlab{}.
\newblock \showarticletitle{To do or not to do: extending {SQL} with integer linear programming?: technical perspective}.
\newblock \bibinfo{journal}{\emph{Commun. {ACM}}} \bibinfo{volume}{62}, \bibinfo{number}{2} (\bibinfo{year}{2019}), \bibinfo{pages}{107}.
\newblock
\href{https://doi.org/10.1145/3299879}{doi:\nolinkurl{10.1145/3299879}}


\bibitem[Cohen et~al\mbox{.}(2021)]%
        {cohen2021solving}
\bibfield{author}{\bibinfo{person}{Michael~B Cohen}, \bibinfo{person}{Yin~Tat Lee}, {and} \bibinfo{person}{Zhao Song}.} \bibinfo{year}{2021}\natexlab{}.
\newblock \showarticletitle{Solving linear programs in the current matrix multiplication time}.
\newblock \bibinfo{journal}{\emph{Journal of the ACM (JACM)}} \bibinfo{volume}{68}, \bibinfo{number}{1} (\bibinfo{year}{2021}), \bibinfo{pages}{1--39}.
\newblock
\href{https://doi.org/10.1145/3424305}{doi:\nolinkurl{10.1145/3424305}}


\bibitem[Conforti et~al\mbox{.}(2006)]%
        {conforti2006balanced}
\bibfield{author}{\bibinfo{person}{Michele Conforti}, \bibinfo{person}{G{\'e}rard Cornu{\'e}jols}, {and} \bibinfo{person}{Kristina Vu{\v{s}}kovi{\'c}}.} \bibinfo{year}{2006}\natexlab{}.
\newblock \showarticletitle{Balanced matrices}.
\newblock \bibinfo{journal}{\emph{Discrete Mathematics}} \bibinfo{volume}{306}, \bibinfo{number}{19-20} (\bibinfo{year}{2006}), \bibinfo{pages}{2411--2437}.
\newblock
\href{https://doi.org/10.1016/j.disc.2005.12.033}{doi:\nolinkurl{10.1016/j.disc.2005.12.033}}


\bibitem[Dayal and Bernstein(1982)]%
        {Dayal82}
\bibfield{author}{\bibinfo{person}{Umeshwar Dayal} {and} \bibinfo{person}{Philip~A. Bernstein}.} \bibinfo{year}{1982}\natexlab{}.
\newblock \showarticletitle{On the Correct Translation of Update Operations on Relational Views}.
\newblock \bibinfo{journal}{\emph{ACM TODS}} \bibinfo{volume}{7}, \bibinfo{number}{3} (\bibinfo{year}{1982}), \bibinfo{pages}{381--416}.
\newblock
\showISSN{0362-5915}
\href{https://doi.org/10.1145/319732.319740}{doi:\nolinkurl{10.1145/319732.319740}}


\bibitem[Dixit and Kolaitis(2022)]%
        {DBLP:conf/icde/DixitK22}
\bibfield{author}{\bibinfo{person}{Akhil~A. Dixit} {and} \bibinfo{person}{Phokion~G. Kolaitis}.} \bibinfo{year}{2022}\natexlab{}.
\newblock \showarticletitle{Consistent Answers of Aggregation Queries via {SAT}}. In \bibinfo{booktitle}{\emph{{ICDE}}}. \bibinfo{publisher}{{IEEE}}, \bibinfo{pages}{924--937}.
\newblock
\href{https://doi.org/10.1109/ICDE53745.2022.00074}{doi:\nolinkurl{10.1109/ICDE53745.2022.00074}}


\bibitem[Fagin et~al\mbox{.}(2001)]%
        {Fagin2001Optimal}
\bibfield{author}{\bibinfo{person}{Ronald Fagin}, \bibinfo{person}{Amnon Lotem}, {and} \bibinfo{person}{Moni Naor}.} \bibinfo{year}{2001}\natexlab{}.
\newblock \showarticletitle{Optimal aggregation algorithms for middleware}. In \bibinfo{booktitle}{\emph{{PODS}}}. \bibinfo{pages}{102--113}.
\newblock
\showISBNx{1581133618}
\href{https://doi.org/10.1145/375551.375567}{doi:\nolinkurl{10.1145/375551.375567}}


\bibitem[Freire et~al\mbox{.}(2015)]%
        {DBLP:journals/pvldb/FreireGIM15}
\bibfield{author}{\bibinfo{person}{Cibele Freire}, \bibinfo{person}{Wolfgang Gatterbauer}, \bibinfo{person}{Neil Immerman}, {and} \bibinfo{person}{Alexandra Meliou}.} \bibinfo{year}{2015}\natexlab{}.
\newblock \showarticletitle{The Complexity of Resilience and Responsibility for Self-Join-Free Conjunctive Queries}.
\newblock \bibinfo{journal}{\emph{{PVLDB}}} \bibinfo{volume}{9}, \bibinfo{number}{3} (\bibinfo{year}{2015}), \bibinfo{pages}{180--191}.
\newblock
\urldef\tempurl%
\url{http://www.vldb.org/pvldb/vol9/p180-freire.pdf}
\showURL{%
\tempurl}


\bibitem[Freire et~al\mbox{.}(2020)]%
        {DBLP:conf/pods/FreireGIM20}
\bibfield{author}{\bibinfo{person}{Cibele Freire}, \bibinfo{person}{Wolfgang Gatterbauer}, \bibinfo{person}{Neil Immerman}, {and} \bibinfo{person}{Alexandra Meliou}.} \bibinfo{year}{2020}\natexlab{}.
\newblock \showarticletitle{New Results for the Complexity of Resilience for Binary Conjunctive Queries with Self-Joins}. In \bibinfo{booktitle}{\emph{{PODS}}}. \bibinfo{pages}{271--284}.
\newblock
\href{https://doi.org/10.1145/3375395.3387647}{doi:\nolinkurl{10.1145/3375395.3387647}}


\bibitem[Galhotra et~al\mbox{.}(2017)]%
        {galhotra2017fairness}
\bibfield{author}{\bibinfo{person}{Sainyam Galhotra}, \bibinfo{person}{Yuriy Brun}, {and} \bibinfo{person}{Alexandra Meliou}.} \bibinfo{year}{2017}\natexlab{}.
\newblock \showarticletitle{Fairness testing: testing software for discrimination}. In \bibinfo{booktitle}{\emph{11th Joint Meeting on Foundations of Software Engineeriing ({ESEC/FSE})}}. \bibinfo{pages}{498--510}.
\newblock
\href{https://doi.org/10.1145/3106237.3106277}{doi:\nolinkurl{10.1145/3106237.3106277}}


\bibitem[Galhotra et~al\mbox{.}(2022)]%
        {galhotra2022hyper}
\bibfield{author}{\bibinfo{person}{Sainyam Galhotra}, \bibinfo{person}{Amir Gilad}, \bibinfo{person}{Sudeepa Roy}, {and} \bibinfo{person}{Babak Salimi}.} \bibinfo{year}{2022}\natexlab{}.
\newblock \showarticletitle{HypeR: Hypothetical Reasoning With What-If and How-To Queries Using a Probabilistic Causal Approach}. In \bibinfo{booktitle}{\emph{{SIGMOD}}}. \bibinfo{pages}{1598--1611}.
\newblock
\showISBNx{9781450392495}
\href{https://doi.org/10.1145/3514221.3526149}{doi:\nolinkurl{10.1145/3514221.3526149}}


\bibitem[Glavic(2010)]%
        {Glavic:Perm:2010}
\bibfield{author}{\bibinfo{person}{Boris Glavic}.} \bibinfo{year}{2010}\natexlab{}.
\newblock \emph{\bibinfo{title}{Perm: efficient provenance support for relational databases}}.
\newblock \bibinfo{thesistype}{Ph.\,D. Dissertation}. \bibinfo{school}{University of Zurich}.
\newblock
\urldef\tempurl%
\url{http://cs.iit.edu/%7edbgroup/assets/pdfpubls/G10a.pdf}
\showURL{%
\tempurl}


\bibitem[Glavic and Alonso(2009)]%
        {DBLP:conf/sigmod/GlavicA09}
\bibfield{author}{\bibinfo{person}{Boris Glavic} {and} \bibinfo{person}{Gustavo Alonso}.} \bibinfo{year}{2009}\natexlab{}.
\newblock \showarticletitle{The perm provenance management system in action}. In \bibinfo{booktitle}{\emph{SIGMOD}} (Providence, Rhode Island, USA) \emph{(\bibinfo{series}{SIGMOD '09})}. \bibinfo{pages}{1055–1058}.
\newblock
\showISBNx{9781605585512}
\href{https://doi.org/10.1145/1559845.1559980}{doi:\nolinkurl{10.1145/1559845.1559980}}


\bibitem[Glavic et~al\mbox{.}(2021)]%
        {glavic2021trends}
\bibfield{author}{\bibinfo{person}{Boris Glavic}, \bibinfo{person}{Alexandra Meliou}, {and} \bibinfo{person}{Sudeepa Roy}.} \bibinfo{year}{2021}\natexlab{}.
\newblock \showarticletitle{Trends in explanations: Understanding and debugging data-driven systems}.
\newblock \bibinfo{journal}{\emph{Foundations and Trends in Databases}} \bibinfo{volume}{11}, \bibinfo{number}{3} (\bibinfo{year}{2021}).
\newblock
\href{https://doi.org/10.1561/9781680838817}{doi:\nolinkurl{10.1561/9781680838817}}


\bibitem[Gr{\"o}tschel et~al\mbox{.}(1993)]%
        {grotschel1993ellipsoid}
\bibfield{author}{\bibinfo{person}{Martin Gr{\"o}tschel}, \bibinfo{person}{L{\'a}szl{\'o} Lov{\'a}sz}, \bibinfo{person}{Alexander Schrijver}, \bibinfo{person}{Martin Gr{\"o}tschel}, \bibinfo{person}{L{\'a}szl{\'o} Lov{\'a}sz}, {and} \bibinfo{person}{Alexander Schrijver}.} \bibinfo{year}{1993}\natexlab{}.
\newblock \showarticletitle{The ellipsoid method}.
\newblock \bibinfo{journal}{\emph{Geometric Algorithms and Combinatorial Optimization}} (\bibinfo{year}{1993}), \bibinfo{pages}{64--101}.
\newblock
\href{https://doi.org/10.1007/978-3-642-78240-4_4}{doi:\nolinkurl{10.1007/978-3-642-78240-4_4}}


\bibitem[Gurobi~Optimization(2021)]%
        {gurobi_working}
\bibfield{author}{\bibinfo{person}{LLC Gurobi~Optimization}.} \bibinfo{year}{2021}\natexlab{}.
\newblock \bibinfo{title}{Mixed-Integer Programming (MIP) -- A Primer on the Basics}.
\newblock
\urldef\tempurl%
\url{https://www.gurobi.com/resource/mip-basics/}
\showURL{%
Retrieved 2025-05-24 from \tempurl}


\bibitem[Gurobi~Optimization(2022a)]%
        {gurobinum}
\bibfield{author}{\bibinfo{person}{LLC Gurobi~Optimization}.} \bibinfo{year}{2022}\natexlab{a}.
\newblock \bibinfo{title}{Gurobi Guidelines For Numerical Issues}.
\newblock
\urldef\tempurl%
\url{https://www.gurobi.com/documentation/10.0/refman/guidelines_for_numerical_i.html}
\showURL{%
Retrieved 2025-05-24 from \tempurl}


\bibitem[Gurobi~Optimization(2022b)]%
        {gurobi}
\bibfield{author}{\bibinfo{person}{LLC Gurobi~Optimization}.} \bibinfo{year}{2022}\natexlab{b}.
\newblock \bibinfo{title}{Gurobi Optimizer Reference Manual}.
\newblock
\urldef\tempurl%
\url{http://www.gurobi.com}
\showURL{%
Retrieved 2025-05-24 from \tempurl}


\bibitem[Herschel et~al\mbox{.}(2009)]%
        {DBLP:journals/pvldb/HerschelHT09}
\bibfield{author}{\bibinfo{person}{Melanie Herschel}, \bibinfo{person}{Mauricio~A. Hern{\'a}ndez}, {and} \bibinfo{person}{Wang~Chiew Tan}.} \bibinfo{year}{2009}\natexlab{}.
\newblock \showarticletitle{Artemis: A System for Analyzing Missing Answers}.
\newblock \bibinfo{journal}{\emph{PVLDB}} \bibinfo{volume}{2}, \bibinfo{number}{2} (\bibinfo{year}{2009}), \bibinfo{pages}{1550--1553}.
\newblock
\href{https://doi.org/10.14778/1687553.1687588}{doi:\nolinkurl{10.14778/1687553.1687588}}


\bibitem[Hu and Sintos(2024)]%
        {hu_et_al:LIPIcs.ICDT.2024.24}
\bibfield{author}{\bibinfo{person}{Xiao Hu} {and} \bibinfo{person}{Stavros Sintos}.} \bibinfo{year}{2024}\natexlab{}.
\newblock \showarticletitle{{Finding Smallest Witnesses for Conjunctive Queries}}. In \bibinfo{booktitle}{\emph{{ICDT}}} \emph{(\bibinfo{series}{LIPIcs}, Vol.~\bibinfo{volume}{290})}. \bibinfo{pages}{24:1--24:20}.
\newblock
\showISBNx{978-3-95977-312-6}
\showISSN{1868-8969}
\href{https://doi.org/10.4230/LIPIcs.ICDT.2024.24}{doi:\nolinkurl{10.4230/LIPIcs.ICDT.2024.24}}


\bibitem[Hu et~al\mbox{.}(2020)]%
        {ADP}
\bibfield{author}{\bibinfo{person}{Xiao Hu}, \bibinfo{person}{Shouzhuo Sun}, \bibinfo{person}{Shweta Patwa}, \bibinfo{person}{Debmalya Panigrahi}, {and} \bibinfo{person}{Sudeepa Roy}.} \bibinfo{year}{2020}\natexlab{}.
\newblock \showarticletitle{Aggregated Deletion Propagation for Counting Conjunctive Query Answers}.
\newblock \bibinfo{journal}{\emph{{PVLDB}}} \bibinfo{volume}{14}, \bibinfo{number}{2} (\bibinfo{year}{2020}), \bibinfo{pages}{228--240}.
\newblock
\showISSN{2150-8097}
\href{https://doi.org/10.14778/3425879.3425892}{doi:\nolinkurl{10.14778/3425879.3425892}}


\bibitem[Karp(1972)]%
        {karp1972reducibility}
\bibfield{author}{\bibinfo{person}{Richard~M Karp}.} \bibinfo{year}{1972}\natexlab{}.
\newblock \showarticletitle{Reducibility among combinatorial problems}.
\newblock In \bibinfo{booktitle}{\emph{Complexity of computer computations}}. \bibinfo{publisher}{Springer}, \bibinfo{pages}{85--103}.
\newblock
\href{https://doi.org/10.1007/978-1-4684-2001-2_9}{doi:\nolinkurl{10.1007/978-1-4684-2001-2_9}}


\bibitem[Kelley(1960)]%
        {Kelley1960CuttingPlane}
\bibfield{author}{\bibinfo{person}{J.~E. Kelley, Jr.}} \bibinfo{year}{1960}\natexlab{}.
\newblock \showarticletitle{The Cutting-Plane Method for Solving Convex Programs}.
\newblock \bibinfo{journal}{\emph{J. Soc. Indust. Appl. Math.}} \bibinfo{volume}{8}, \bibinfo{number}{4} (\bibinfo{year}{1960}), \bibinfo{pages}{703--712}.
\newblock
\href{https://doi.org/10.1137/0108053}{doi:\nolinkurl{10.1137/0108053}}


\bibitem[Khamis et~al\mbox{.}(2016)]%
        {Khamis2016Joins}
\bibfield{author}{\bibinfo{person}{Mahmoud~Abo Khamis}, \bibinfo{person}{Hung~Q. Ngo}, \bibinfo{person}{Christopher R\'{e}}, {and} \bibinfo{person}{Atri Rudra}.} \bibinfo{year}{2016}\natexlab{}.
\newblock \showarticletitle{Joins via Geometric Resolutions: Worst Case and Beyond}.
\newblock \bibinfo{journal}{\emph{ACM TODS}} \bibinfo{volume}{41}, \bibinfo{number}{4}, Article \bibinfo{articleno}{22} (\bibinfo{date}{Nov.} \bibinfo{year}{2016}), \bibinfo{numpages}{45}~pages.
\newblock
\showISSN{0362-5915}
\href{https://doi.org/10.1145/2967101}{doi:\nolinkurl{10.1145/2967101}}


\bibitem[Kimelfeld(2012)]%
        {Kimelfeld12}
\bibfield{author}{\bibinfo{person}{Benny Kimelfeld}.} \bibinfo{year}{2012}\natexlab{}.
\newblock \showarticletitle{A Dichotomy in the Complexity of Deletion Propagation with Functional Dependencies}. In \bibinfo{booktitle}{\emph{{PODS}}}. \bibinfo{pages}{191--202}.
\newblock
\showISBNx{978-1-4503-1248-6}
\href{https://doi.org/10.1145/2213556.2213584}{doi:\nolinkurl{10.1145/2213556.2213584}}


\bibitem[Kimelfeld et~al\mbox{.}(2012)]%
        {KimelfeldVW12}
\bibfield{author}{\bibinfo{person}{Benny Kimelfeld}, \bibinfo{person}{Jan Vondr\'{a}k}, {and} \bibinfo{person}{Ryan Williams}.} \bibinfo{year}{2012}\natexlab{}.
\newblock \showarticletitle{Maximizing Conjunctive Views in Deletion Propagation}.
\newblock \bibinfo{journal}{\emph{{ACM TODS}}} \bibinfo{volume}{37}, \bibinfo{number}{4}, Article \bibinfo{articleno}{24} (\bibinfo{year}{2012}), \bibinfo{numpages}{37}~pages.
\newblock
\showISSN{0362-5915}
\href{https://doi.org/10.1145/2389241.2389243}{doi:\nolinkurl{10.1145/2389241.2389243}}


\bibitem[Kimelfeld et~al\mbox{.}(2013)]%
        {Kimelfeld:2013}
\bibfield{author}{\bibinfo{person}{Benny Kimelfeld}, \bibinfo{person}{Jan Vondr\'{a}k}, {and} \bibinfo{person}{David~P. Woodruff}.} \bibinfo{year}{2013}\natexlab{}.
\newblock \showarticletitle{Multi-tuple Deletion Propagation: Approximations and Complexity}.
\newblock \bibinfo{journal}{\emph{{PVLDB}}} \bibinfo{volume}{6}, \bibinfo{number}{13} (\bibinfo{year}{2013}), \bibinfo{pages}{1558--1569}.
\newblock
\showISSN{2150-8097}
\href{https://doi.org/10.14778/2536258.2536267}{doi:\nolinkurl{10.14778/2536258.2536267}}


\bibitem[Mai et~al\mbox{.}(2024)]%
        {DBLP:journals/pvldb/MaiWABHM24}
\bibfield{author}{\bibinfo{person}{Anh~L. Mai}, \bibinfo{person}{Pengyu Wang}, \bibinfo{person}{Azza Abouzied}, \bibinfo{person}{Matteo Brucato}, \bibinfo{person}{Peter~J. Haas}, {and} \bibinfo{person}{Alexandra Meliou}.} \bibinfo{year}{2024}\natexlab{}.
\newblock \showarticletitle{Scaling Package Queries to a Billion Tuples via Hierarchical Partitioning and Customized Optimization}.
\newblock \bibinfo{journal}{\emph{{PVLDB}}} \bibinfo{volume}{17}, \bibinfo{number}{5} (\bibinfo{year}{2024}), \bibinfo{pages}{1146--1158}.
\newblock
\href{https://doi.org/10.14778/3641204.3641222}{doi:\nolinkurl{10.14778/3641204.3641222}}


\bibitem[Makhija and Gatterbauer(2023)]%
        {makhija2024unified}
\bibfield{author}{\bibinfo{person}{Neha Makhija} {and} \bibinfo{person}{Wolfgang Gatterbauer}.} \bibinfo{year}{2023}\natexlab{}.
\newblock \showarticletitle{A Unified Approach for Resilience and Causal Responsibility with Integer Linear Programming (ILP) and LP Relaxations}.
\newblock \bibinfo{journal}{\emph{PACMMOD}} \bibinfo{volume}{1}, \bibinfo{number}{4}, Article \bibinfo{articleno}{228} (\bibinfo{date}{dec} \bibinfo{year}{2023}), \bibinfo{numpages}{27}~pages.
\newblock
\href{https://doi.org/10.1145/3626715}{doi:\nolinkurl{10.1145/3626715}}


\bibitem[Makhija and Gatterbauer(2025)]%
        {gdpcode}
\bibfield{author}{\bibinfo{person}{Neha Makhija} {and} \bibinfo{person}{Wolfgang Gatterbauer}.} \bibinfo{year}{2025}\natexlab{}.
\newblock \bibinfo{title}{Generalized Deletion Propogation: Code}.
\newblock
\urldef\tempurl%
\url{https: //github.com/northeastern-datalab/generalized-deletion-propagation}
\showURL{%
\tempurl}


\bibitem[Marques-Silva(2023)]%
        {Marques-Silva2023}
\bibfield{author}{\bibinfo{person}{Joao Marques-Silva}.} \bibinfo{year}{2023}\natexlab{}.
\newblock \bibinfo{booktitle}{\emph{Logic-Based Explainability in Machine Learning}}.
\newblock \bibinfo{publisher}{Springer Nature Switzerland}, \bibinfo{pages}{24--104}.
\newblock
\showISBNx{978-3-031-31414-8}
\href{https://doi.org/10.1007/978-3-031-31414-8_2}{doi:\nolinkurl{10.1007/978-3-031-31414-8_2}}


\bibitem[Meliou et~al\mbox{.}(2010a)]%
        {DBLP:journals/pvldb/MeliouGMS11}
\bibfield{author}{\bibinfo{person}{Alexandra Meliou}, \bibinfo{person}{Wolfgang Gatterbauer}, \bibinfo{person}{Katherine~F. Moore}, {and} \bibinfo{person}{Dan Suciu}.} \bibinfo{year}{2010}\natexlab{a}.
\newblock \showarticletitle{The Complexity of Causality and Responsibility for Query Answers and non-Answers}.
\newblock \bibinfo{journal}{\emph{{PVLDB}}} \bibinfo{volume}{4}, \bibinfo{number}{1} (\bibinfo{year}{2010}), \bibinfo{pages}{34--45}.
\newblock
\urldef\tempurl%
\url{http://www.vldb.org/pvldb/vol4/p34-meliou.pdf}
\showURL{%
\tempurl}


\bibitem[Meliou et~al\mbox{.}(2010b)]%
        {DBLP:journals/corr/abs-0912-5340}
\bibfield{author}{\bibinfo{person}{Alexandra Meliou}, \bibinfo{person}{Wolfgang Gatterbauer}, \bibinfo{person}{Katherine~F. Moore}, {and} \bibinfo{person}{Dan Suciu}.} \bibinfo{year}{2010}\natexlab{b}.
\newblock \showarticletitle{{W}hy so? or {W}hy no? {F}unctional Causality for Explaining Query Answers}, In \bibinfo{booktitle}{4th International Workshop on Management of Uncertain Data (MUD)}.
\newblock \bibinfo{journal}{\emph{CoRR}}, \bibinfo{pages}{3--17}.
\newblock
\urldef\tempurl%
\url{http://arxiv.org/abs/0912.5340}
\showURL{%
\tempurl}


\bibitem[Meliou et~al\mbox{.}(2011)]%
        {DBLP:journals/pvldb/MeliouGS11}
\bibfield{author}{\bibinfo{person}{Alexandra Meliou}, \bibinfo{person}{Wolfgang Gatterbauer}, {and} \bibinfo{person}{Dan Suciu}.} \bibinfo{year}{2011}\natexlab{}.
\newblock \showarticletitle{Reverse Data Management}.
\newblock \bibinfo{journal}{\emph{{PVLDB}}} \bibinfo{volume}{4}, \bibinfo{number}{12} (\bibinfo{year}{2011}), \bibinfo{pages}{1490--1493}.
\newblock
\urldef\tempurl%
\url{http://www.vldb.org/pvldb/vol4/p1490-meliou.pdf}
\showURL{%
\tempurl}


\bibitem[Meliou and Suciu(2012)]%
        {Meliou2012_tiresias}
\bibfield{author}{\bibinfo{person}{Alexandra Meliou} {and} \bibinfo{person}{Dan Suciu}.} \bibinfo{year}{2012}\natexlab{}.
\newblock \showarticletitle{Tiresias: the database oracle for how-to queries}. In \bibinfo{booktitle}{\emph{{SIGMOD}}}. \bibinfo{pages}{337--348}.
\newblock
\showISBNx{9781450312479}
\href{https://doi.org/10.1145/2213836.2213875}{doi:\nolinkurl{10.1145/2213836.2213875}}


\bibitem[Miao et~al\mbox{.}(2020)]%
        {MIAO2020_resilience}
\bibfield{author}{\bibinfo{person}{Dongjing Miao}, \bibinfo{person}{Jianzhong Li}, {and} \bibinfo{person}{Zhipeng Cai}.} \bibinfo{year}{2020}\natexlab{}.
\newblock \showarticletitle{The parameterized complexity and kernelization of resilience for database queries}.
\newblock \bibinfo{journal}{\emph{Theoretical Computer Science}}  \bibinfo{volume}{840} (\bibinfo{year}{2020}), \bibinfo{pages}{199--211}.
\newblock
\showISSN{0304-3975}
\href{https://doi.org/10.1016/j.tcs.2020.08.018}{doi:\nolinkurl{10.1016/j.tcs.2020.08.018}}


\bibitem[Miao et~al\mbox{.}(2019)]%
        {miao2019explaining}
\bibfield{author}{\bibinfo{person}{Zhengjie Miao}, \bibinfo{person}{Sudeepa Roy}, {and} \bibinfo{person}{Jun Yang}.} \bibinfo{year}{2019}\natexlab{}.
\newblock \showarticletitle{Explaining Wrong Queries Using Small Examples}. In \bibinfo{booktitle}{\emph{{SIGMOD}}}. \bibinfo{pages}{503--520}.
\newblock
\showISBNx{9781450356435}
\href{https://doi.org/10.1145/3299869.3319866}{doi:\nolinkurl{10.1145/3299869.3319866}}


\bibitem[Mitchell et~al\mbox{.}(2011)]%
        {mitchell2011pulp}
\bibfield{author}{\bibinfo{person}{Stuart Mitchell}, \bibinfo{person}{Michael OSullivan}, {and} \bibinfo{person}{Iain Dunning}.} \bibinfo{year}{2011}\natexlab{}.
\newblock \showarticletitle{PuLP: a linear programming toolkit for python}.
\newblock \bibinfo{journal}{\emph{The University of Auckland, Auckland, New Zealand}}  \bibinfo{volume}{65} (\bibinfo{year}{2011}).
\newblock
\urldef\tempurl%
\url{https://optimization-online.org/?p=11731}
\showURL{%
\tempurl}


\bibitem[Ngo et~al\mbox{.}(2014)]%
        {Ngo2014Beyond}
\bibfield{author}{\bibinfo{person}{Hung~Q. Ngo}, \bibinfo{person}{Dung~T. Nguyen}, \bibinfo{person}{Christopher Re}, {and} \bibinfo{person}{Atri Rudra}.} \bibinfo{year}{2014}\natexlab{}.
\newblock \showarticletitle{Beyond worst-case analysis for joins with minesweeper}. In \bibinfo{booktitle}{\emph{{PODS}}}. \bibinfo{pages}{234--245}.
\newblock
\showISBNx{9781450323758}
\href{https://doi.org/10.1145/2594538.2594547}{doi:\nolinkurl{10.1145/2594538.2594547}}


\bibitem[Ramakrishnan(2020)]%
        {DVN/WTZS4K_2020}
\bibfield{author}{\bibinfo{person}{Chandrasekhar Ramakrishnan}.} \bibinfo{year}{2020}\natexlab{}.
\newblock \bibinfo{title}{{2019-01 US Flights}}.
\newblock
\href{https://doi.org/10.7910/DVN/WTZS4K}{doi:\nolinkurl{10.7910/DVN/WTZS4K}}


\bibitem[Roughgarden(2020)]%
        {Roughgarden_2020}
\bibfield{author}{\bibinfo{person}{Tim Roughgarden}.} \bibinfo{year}{2020}\natexlab{}.
\newblock \bibinfo{booktitle}{\emph{Beyond the Worst-Case Analysis of Algorithms}}.
\newblock \bibinfo{publisher}{Cambridge University Press}.
\newblock
\showISBNx{9781108494311}
\href{https://doi.org/10.1017/9781108637435}{doi:\nolinkurl{10.1017/9781108637435}}


\bibitem[Roy and Suciu(2014)]%
        {SudeepaSuciu14}
\bibfield{author}{\bibinfo{person}{Sudeepa Roy} {and} \bibinfo{person}{Dan Suciu}.} \bibinfo{year}{2014}\natexlab{}.
\newblock \showarticletitle{A Formal Approach to Finding Explanations for Database Queries}. In \bibinfo{booktitle}{\emph{SIGMOD}}. \bibinfo{pages}{1579--1590}.
\newblock
\href{https://doi.org/10.1145/2588555.2588578}{doi:\nolinkurl{10.1145/2588555.2588578}}


\bibitem[Salimi et~al\mbox{.}(2019)]%
        {salimi2019interventional}
\bibfield{author}{\bibinfo{person}{Babak Salimi}, \bibinfo{person}{Luke Rodriguez}, \bibinfo{person}{Bill Howe}, {and} \bibinfo{person}{Dan Suciu}.} \bibinfo{year}{2019}\natexlab{}.
\newblock \showarticletitle{Interventional fairness: Causal database repair for algorithmic fairness}. In \bibinfo{booktitle}{\emph{{SIGMOD}}}. \bibinfo{pages}{793--810}.
\newblock
\href{https://doi.org/10.1145/3299869.3319901}{doi:\nolinkurl{10.1145/3299869.3319901}}


\bibitem[Schrijver(1998)]%
        {schrijver1998theory}
\bibfield{author}{\bibinfo{person}{Alexander Schrijver}.} \bibinfo{year}{1998}\natexlab{}.
\newblock \bibinfo{booktitle}{\emph{Theory of linear and integer programming}}.
\newblock \bibinfo{publisher}{John Wiley \& Sons}.
\newblock
\href{https://doi.org/10.1137/1030065}{doi:\nolinkurl{10.1137/1030065}}


\bibitem[Senellart et~al\mbox{.}(2018)]%
        {DBLP:journals/pvldb/SenellartJMR18}
\bibfield{author}{\bibinfo{person}{Pierre Senellart}, \bibinfo{person}{Louis Jachiet}, \bibinfo{person}{Silviu Maniu}, {and} \bibinfo{person}{Yann Ramusat}.} \bibinfo{year}{2018}\natexlab{}.
\newblock \showarticletitle{ProvSQL: Provenance and Probability Management in PostgreSQL}.
\newblock \bibinfo{journal}{\emph{{PVLDB}}} \bibinfo{volume}{11}, \bibinfo{number}{12} (\bibinfo{year}{2018}), \bibinfo{pages}{2034--2037}.
\newblock
\href{https://doi.org/10.14778/3229863.3236253}{doi:\nolinkurl{10.14778/3229863.3236253}}


\bibitem[Siksnys and Pedersen(2016)]%
        {DBLP:conf/ssdbm/SiksnysP16}
\bibfield{author}{\bibinfo{person}{Laurynas Siksnys} {and} \bibinfo{person}{Torben~Bach Pedersen}.} \bibinfo{year}{2016}\natexlab{}.
\newblock \showarticletitle{SolveDB: Integrating Optimization Problem Solvers Into {SQL} Databases}. In \bibinfo{booktitle}{\emph{{SSDBM}}}. \bibinfo{publisher}{{ACM}}, \bibinfo{pages}{14:1--14:12}.
\newblock
\href{https://doi.org/10.1145/2949689.2949693}{doi:\nolinkurl{10.1145/2949689.2949693}}


\bibitem[Trummer and Koch(2017)]%
        {DBLP:conf/sigmod/Trummer017}
\bibfield{author}{\bibinfo{person}{Immanuel Trummer} {and} \bibinfo{person}{Christoph Koch}.} \bibinfo{year}{2017}\natexlab{}.
\newblock \showarticletitle{Solving the Join Ordering Problem via Mixed Integer Linear Programming}. In \bibinfo{booktitle}{\emph{{SIGMOD}}}. \bibinfo{pages}{1025--1040}.
\newblock
\href{https://doi.org/10.1145/3035918.3064039}{doi:\nolinkurl{10.1145/3035918.3064039}}


\bibitem[Vaswani et~al\mbox{.}(2017)]%
        {NIPS2017_3f5ee243}
\bibfield{author}{\bibinfo{person}{Ashish Vaswani}, \bibinfo{person}{Noam Shazeer}, \bibinfo{person}{Niki Parmar}, \bibinfo{person}{Jakob Uszkoreit}, \bibinfo{person}{Llion Jones}, \bibinfo{person}{Aidan~N Gomez}, \bibinfo{person}{\L~ukasz Kaiser}, {and} \bibinfo{person}{Illia Polosukhin}.} \bibinfo{year}{2017}\natexlab{}.
\newblock \showarticletitle{Attention is All you Need}. In \bibinfo{booktitle}{\emph{NeurIPS}}, Vol.~\bibinfo{volume}{30}.
\newblock
\urldef\tempurl%
\url{https://arxiv.org/abs/1706.03762}
\showURL{%
\tempurl}


\bibitem[Wang et~al\mbox{.}(2017)]%
        {wang2017qfix}
\bibfield{author}{\bibinfo{person}{Xiaolan Wang}, \bibinfo{person}{Alexandra Meliou}, {and} \bibinfo{person}{Eugene Wu}.} \bibinfo{year}{2017}\natexlab{}.
\newblock \showarticletitle{QFix: Diagnosing errors through query histories}. In \bibinfo{booktitle}{\emph{{SIGMOD}}}. \bibinfo{pages}{1369--1384}.
\newblock
\href{https://doi.org/10.1145/3035918.3035925}{doi:\nolinkurl{10.1145/3035918.3035925}}


\bibitem[Wu and Madden(2013)]%
        {DBLP:journals/pvldb/0002M13}
\bibfield{author}{\bibinfo{person}{Eugene Wu} {and} \bibinfo{person}{Samuel Madden}.} \bibinfo{year}{2013}\natexlab{}.
\newblock \showarticletitle{Scorpion: Explaining Away Outliers in Aggregate Queries}.
\newblock \bibinfo{journal}{\emph{PVLDB}} \bibinfo{volume}{6}, \bibinfo{number}{8} (\bibinfo{year}{2013}), \bibinfo{pages}{553--564}.
\newblock
\href{https://doi.org/10.14778/2536354.2536356}{doi:\nolinkurl{10.14778/2536354.2536356}}


\end{thebibliography}

\appendix
\crefalias{section}{appendix}

\section{Notation}
\label{sec:nomenclature}

\begin{table}[h]
\centering
\small
\begin{tabularx}{\linewidth}{@{\hspace{0pt}} >{$}l<{$} @{\hspace{2mm}}X@{}} %
\hline
\textrm{Symbol}		& Definition 	\\
\hline
    \hline
	Q			& Conjunctive query (CQ)	\\
	\D			& Database Instance, i.e.\ a set of tables	\\
    R, S, T    & Relations \\
    x, y, z		& Query variables \\
    \var(R)     & Variables in relation $R$ in a query $Q$ \\
    Q(D)        & A view representing the evaluation of query $Q$ on database $D$\\
    \vec w  	& Witness \\

    \mathcal{Q} & An ordered set of queries \\
    Q^i     & The $i$-th query in $\mathcal{Q}$ \\
    |Q(\D)|  & The number of tuples in the view $Q(\D)$ \\
    \Gamma        & A set of tuples (usually denoted a set of tuples to be deleted from the database) \\
    |\Delta Q(\D, \Gamma)| & $|Q(\D)| - |Q(\D -\Gamma)|$\\
    X[v]        & A binary variable in an (Integer) Linear Program corresponding to a variable $v$\\
    \res & Resilience \\
    \dpss & Deletion Propagation with Source Side Effects \\
    \dpvs & Deletion Propagation with View Side Effects \\
    \adp & Aggregated Deletion Propagation \\
    \swp & Smallest Witness Problem \\
    \gdp & Generalized Deletion Propagation \\
    \gdpilpbasic & A naive ILP for $\gdp$ \\
    \gdpilpmiddle & An ILP for $\gdp$ with wildcard semantics \\
    \gdpilpsmooth / \gdpilp & ILP for $\gdp$ with smoothening constraints - an ILP with tractability guarantees \\
\hline
\end{tabularx}
\caption{Notation table}
\label{tbl:nomenclature}
\end{table}

\section{Proofs for Section~\cref{SEC:ILP}: ILP Framework for GDP}

\thmwildcardilp*

\begin{proof}[Proof \cref{th:ILP:wildcard}]
    We argue that every solution allowed by the wildcard semantics corresponds to a feasible solution of the ILP under naive semantics and vice versa.
    Additionally, we need to show that the optimal value of the ILP under wildcard semantics is the same as the optimal value of the ILP under naive semantics.
    That every solution of the ILP under naive semantics corresponds to a solution of the ILP under wildcard semantics is easy to see since the wildcard semantics is a generalization of the naive semantics.
    To argue the other direction, we need to show that every solution of the ILP under wildcard semantics corresponds to a solution of the ILP under naive semantics with the same optimal value.
    Such a corresponding solution can be obtained by keeping the values of all $X[t]$ variables the same, and replacing all $X[v]$ and $X[w]$ variables to what is implied by the variables under naive semantics.
    We can see such a solution will still be feasible under the naive semantics since the wildcard semantics enforce one-sided guarantees on the values of the variables in the direction of the user constraints for witnesses and view tuples in $\viewdel$ and $\viewpres$.
    We now need to argue about the optimal solutions under naive and wildcard semantics.
    First, we observe that using wildcard semantics can never lead to a solution where the corresponding naive solution has a less optimal objective value. 
    This is due to the fact that we have one-sided guarantees on the values of witnesses and view tuples in $\viewdel$ and $\viewpres$, that ensure that the objective value under wildcard semantics is never strictly better than the objective value under naive semantics.
    Combined with the fact that every solution under naive semantics is a solution under wildcard semantics, we can conclude that the optimal value of the ILP under wildcard semantics is the same as the optimal value of the ILP under naive semantics.
\end{proof}

\thmsmoothenedilp*

\begin{proof}[Proof \cref{th:ILP:smoothened}]
    Under wildcard semantics, each solution set of interventions in the database corresponds to multiple ILP solutions, since the value of $X[w]$ and $X[v]$ variables are not necessarily faithful to the variables.
    For $\querypres$, we can assume that a solution with wildcard semantics exists such that for every view tuple $v \in \querypres(\D)$, there is only one witness such that $v \subseteq w$ and $X[w]=0$ i.e. we can assume that for each $v \in \querypres(\D)$, there is a unique $w$ that is marked as preserved (note that more may be preserved, but unmarked due to the wildcard semantics).
    The optimal value of the ILP under this assumption is the same as the optimal value of the ILP under wildcard semantics (and hence naive semantics).
    Due to this assumption, we can say that a given tuple and view tuple, only one witness is preserved i.e. $\sum_{i \in [1,k]} X[w_i] \geq k-1$ where $w_1 \hdots w_k$ are the witnesses that correspond to $v$ and contain $t$. Now we can also enforce that $X[t] \leq (\sum_{i \in [1,k]} X[w_i]) - (k-1)$, since all but $1$ values of $X[w]$ are set to $1$, and only the final value decides the upper bound on $X[t]$.
    This is equivalent to adding the smoothing constraint, and hence addition of the smoothing constraint does not change the optimal solution of the ILP.
\end{proof}

\section{Proofs for ~\cref{SEC:TRACTABILITY}: Recovering Existing Tractability Results}

\dpssptime*

\begin{proof}[Proof \cref{prop:dpssptime}]
    Prior work has shown the construction of an ILP $\resilp$ that has the property that $\reslpparam{Q, D} = \res(Q, D)$ for all $Q$ for which $\res(Q,D)$ is known to be in PTIME - both under set and bag semantics.
    We simply reuse this result, and show that $\gdpdpsslp$ is identical to $\reslp$ for the same query $Q$ and database $D$.
    We know that in the formulation of $\gdpdpsslp$, both $\querymax$ and $\querypres$ are empty. 
    We also know that all the output tuples must be deleted, thus we can replace all variables $X[v]$ for $v \in \querydel(\D)$ with the constant $1$ without changing the optimal solution of the ILP. 
    Since each output tuple in $\querymin(\D)$ has a 1-to-1 correspondence with the input tuples in $D$, the minimization objective directly corresponds to the number of input tuples that are deleted.
    Thus, in this case $\gdpilp$ simplifies to look like:
    \begin{align*}
        \min &\sum_{t \in \D} X[t] \\
        \textsf{subject to}& \\
        \sum_{t \in w}& X[t] \geq 1 {\qquad\forall w \in Q^F(\D)}\\
    \end{align*}
    We observe that this ILP is identical to the ILP $\resilp$ that has been proposed for the resilience problem~\cite{makhija2024unified}, and thus the LP relaxation of $\gdpdpssilp$ is integral for all queries $Q$ for which $\dpss$ is known to be in PTIME~\cite[Theorem 8.6 and 8.7.]{makhija2024unified}.
\end{proof}

\dpvsptime*

\begin{proof}[Proof \cref{prop:dpvsptime}]
    $\dpvs$ can be solved in PTIME for a query $Q$ over an arbitrary database $\D$ if and only if it has the head domination property~\cite{KimelfeldVW12}.
    It is also known that for queries that have the head domination property, the optimal solution for $\dpvs$ is side effect free \cite[Proposition 3.2.]{KimelfeldVW12}.
    Thus, in our formulation $\gdpilp$ always has an optimal solution of $1$ - since the only output tuple that is deleted is the one specified by the user.
    Since we know that $\gdpdpvslp$ is always a lower bound for the true optimal solution, it suffices for us to show that the LP relaxation of $\gdpdpvslp$ is not $< 1$.
    We know that the objective function is a sum of $X[v]$ variable where each $v$ is an output tuple in $\querymax(\D)$ and $X[v]$ takes on value $0$ or $1$ and thus can never be negative.    
    We also know that for a $v$ that the user would like to delete via $\dpvs$, $X[v]$ must be set to $1$.
    Thus, the solution to $\gdpdpvslp$ is never $<1$ and never more than the optimal solution (which is always $1$), and thus the LP relaxation is integral and equal to the optimal solution.
\end{proof}

\swpptime*

\begin{proof}[Proof \cref{prop:swpptime}]
    In order to construct $\swp$ as a $\gdpilp$, we set $\viewdel = \viewmin = \emptyset$ since there are no views from which output must be deleted, or deletions must be minimized.
    We set $\viewpres = \{Q(\D)\}$, and $\viewmax = [R \in D]$. $\kpres$ is set to $|Q(\D)|$ as we want to preserve all tuples in $Q(\D)$, and $\kdel$ is simply $0$.

    We can then construct the ILP $\gdpswpilp$ as follows:
    \begin{align*}
        \min & -\sum_{v \in \viewmax} X[v]\\
        \textsf{subject to}& \\
        &\sum_{v \in Q(\D)} X[v] \leq |Q(\D)| - \kpres\\
        & X[t] \leq X[w], t \in w \\ 
        & X[w] \leq \sum_{t \in w} X[t] \\
        & X[v] \leq X[w], v \subseteq w \\
        & 1+\sum_{v \subseteq w} \big(X[w] - 1\big) \leq X[v] \\
        & X[t], X[w], X[v] \in \{0,1\} \quad \forall t,w,v
    \end{align*}

    We know that each output tuple in $\viewmax$ has a 1-to-1 correspondence with the input tuples in $D$ (since $\viewmax$ is simply a union of all input relations).
    Thus, we can replace all view variables $X[v]$ where $v \in \viewmax$ with simply the corresponding input tuple variables $X[t]$.
    Now the function of the propagation constraints is simply to propagate the deletions of output tuples in $\viewpres$ to the input tuples in $D$. 
    Since the user constraint provides an upper bound for $X[v]$ for $v \in \viewpres$, the propagation constraints must only provide a lower bound for $X[v]$ via $X[w]$ and then a lower bound for $X[w]$ via $X[t]$.

    Thus, we can greatly simplify the ILP to:
    \begin{align*}
        \min & -\sum_{t \in \D} X[t]\\
        \textsf{subject to}& \\
        &\sum_{v \in Q(\D)} X[v] \leq |Q(\D)| - |Q(\D)|\\
        & 1+\sum_{v \subseteq w} \big(X[w] - 1\big) \leq X[v] \\
        & X[t] \leq X[w], t \in w \\ 
        & X[t], X[w], X[v] \in \{0,1\} \quad \forall t,w,v
    \end{align*}

    Since $\swp$ primarily deals with preservations rather than deletions, we can introduce a variable $Y[v]$ that captures if a tuple $v$ is preserved, i.e., it is set to 1 if the tuple is preserved, and 0 otherwise. We can see that $Y[v] = 1- X[v]$ and build a corresponding ILP with $Y$ variables instead of $X$ variables.
    This resulting ILP is identical to the one with $X$ variables in that it will have the same optimal solution, and any solution of an ILP or LP relaxation from one can be mapped to another by simply applying the equality $X[v] = 1 - Y[v]$.

    \begin{align*}
        \min & -\sum_{t \in \D} 1 - Y[t]\\
        \textsf{subject to}& \\
        &\sum_{v \in Q(\D)} 1 - Y[v] \leq |Q(\D)| - |Q(\D)|\\
        & 1 - Y[t] \leq 1 - Y[w], t \in w \\ 
        & 1+\sum_{v \subseteq w} \big(1 - Y[w] - 1\big) \leq 1 - Y[v] \\
        & Y[t], Y[w], Y[v] \in \{0,1\} \quad \forall t,w,v
    \end{align*}

    Simplifying this ILP, we get:

    \begin{align*}
        \min & -|D| + \sum_{t \in \D} Y[t]\\
        \textsf{subject to}& \\
        & |Q(D)| \leq \sum_{v \in Q(\D)} Y[v] \\
        & \sum_{v \subseteq w} \big(- Y[w] \big) \leq - Y[v] \\
        & Y[w] \leq Y[t], t \in w \\ 
        & Y[t], Y[w], Y[v] \in \{0,1\} \quad \forall t,w,v
    \end{align*}

    We can see through the constraint $|Q(D)| \leq \sum_{v \in Q(\D)} Y[v]$ that every $Y[v]$ value must necessarily be set to $1$ to satisfy the constraint, whether in the ILP or in the LP relaxation. 
    We can thus simplify further to:

    \begin{align*}
        \min & \sum_{t \in \D} Y[t]\\
        \textsf{subject to}& \\
        & 1 \leq \sum_{v \subseteq w} \big(Y[w]\big) \\
        & Y[w] \leq Y[t], t \in w \\ 
        & Y[t], Y[w], Y[v] \in \{0,1\} \quad \forall t,w,v
    \end{align*}

    Another way to see this simplified linear program is that it constrains that at least one witness must be preserved for every output tuple, and that if a witness is preserved, then all tuples in it must be preserved.

    We now show that for queries that have the head cluster property, the solution of $\gdpswplp$ is always integral. 
    We first restate the definition of the head cluster property.

    \begin{definition}[Existential Connectivity Graph $G^\exists_Q$]
        The existential-connectivity graph of a query $Q$ is a graph $G^\exists_Q$ where each relation $R_i \in Q$ with $\attr(R_i)$ - $\head(Q) \neq \emptyset$ is a vertex, and there is an edge between $R_i$ and $R_j$ if $\attr(R_i) \cap \attr(R_j) - \head(Q) \neq \emptyset$.
    \end{definition}
    We can find the connected components of $G^\exists_Q$ by applying search algorithm on $G^\exists_Q$, and finding all connected components for $G^\exists_Q$. 
    Let $E_1, E_2, \hdots, E_k \subseteq \rels(Q)$ be the connected components of $G^\exists_Q$, each corresponding to a subset of relations in Q.

    \begin{definition}[Head Cluster Property]
        A query $Q$ has the head cluster property if for every pair of relations $R_i, R_j \in Q$ with $\head(R_i) \neq \head(R_j)$, it must be that $R_i$ and $R_j$ are in different connected components of $G^\exists_Q$.
    \end{definition}

    In other words, the head cluster property ensures that all relations in the same connected component of $G^\exists_Q$ have exactly the same head variables. 

    First let's assume we have a query with a single connected component in $G^\exists_Q$. 
    The head variables in this case are the same for all relations in the query, and are the same as the head variables of the query.
    Due to this, each input tuple contributes to a single output tuple, and thus we can treat each projection to be preserved independently.
    In other words, we can reduce our problem to preserving a single projection. Consider that the query has $m$ relations. 
    In a self-join free query, at least one tuple from each relation must be preserved, and a solution can be obtained that preserves exactly $m$ tuples by preserving any one witness arbitrarily. 
    Thus, the optimal solution to the ILP for a preserving a single projection independently is always to preserve exactly $m$ tuples. 
    We need to prove that the LP relaxation of $\gdpswplp$ never has an optimal value smaller than $m$.
    For every relation $R_i$ in the query, we show that the sum of $X[t]$ variables for $t\in R_i$ is always at least $1$.
    This is because for each witness we associate a corresponding $R_i$ input tuple that has at least the fractional value assigned to $X[w]$ (if a tuple of $R_i$ corresponds to multiple $w$ then it takes on sum of the fractional values of $X[w]$). Since the sum of $X[w]$ variables is at least $1$, the sum of $X[t]$ variables for $t\in R_i$ is at least $1$. 
    Repeating this argument for all relations in the query, we can see that the LP relaxation of $\gdpswplp$ at least $m$, and thus never better than the optimal solution of the ILP.

    Now let's consider a query with multiple connected components in $G^\exists_Q$. 
    We argue that we can treat each connected component independently, and the optimal solution of the ILP is the sum of the optimal solutions of the ILP for each connected component.
    Due to the fact that we deal with self-join free queries, no relations participate in multiple connected components, and thus we can partition the input to the ILP into disjoint sets of input relations. 

    We can show that the optimal value of the LP relaxation of $\gdpswplp$ is equal to the sum of the optimal values of $\gdpswplp$ for each connected component (which we have shown above to be integral).
    First we see naturally that the LP relaxation of $\gdpswplp$ naturally decomposes into the sum of the LP relaxations of $\gdpswplp$ for each connected component - since if the overall output is preserved, each connected component must be preserved.
    In the other direction, we want to show that if we add up the solutions for each connected component, we get a feasible solution for the LP relaxation of $\gdpswplp$. 
    This follows from the fact that relations in different connected components share only head variables and thus the preserved tuples in one connected component always join with the preserved tuples in another connected component since they share the same head variables. 
    Thus the sum of the solutions for each connected component is a feasible solution for the LP relaxation of $\gdpswplp$, and thus the LP relaxation is integral and equal to the optimal solution of the ILP.
\end{proof}

\adpptime*

\begin{proof}[Proof \cref{prop:adpptime}]
    The proof of tractability of $\adp$~\cite{ADP} is divided into two base cases and two types of recursive decompositions. 
    The proof of the integrality of the LP relaxation of $\gdpadplp$ follows the same structure as the original proof of tractability of $\adp$ ~\cite{ADP}. 

    The two bases cases are for (1) Boolean queries and (2) queries with a \emph{singleton} relation i.e. a relation whose variables are either a subset of the relations of all other relations in the query, or a subset or superset of the head variables of the query. For base case (1), we see that this reduces exactly to the resilience problem, and we can simply use \cref{prop:dpssptime} to show the integrality of $\gdpadpilp$.
    For base case (2), we observe that there is always an optimal solution of $\adp$ that deletes tuples only from the singleton relation. In the language of the resilience problem \cite{DBLP:journals/pvldb/FreireGIM15}, the singleton relation dominates all other relations in the query. 
    The tuples of the singleton relations can then be removed from the optimization problem.
    The resulting problem now contains only one tuple per witness.
    The constraint matrix of this ILP never contains a cycle - since no two input tuples share a witness, and thus do not form a cycle in the constraint matrix. 
    The constraint matrix is thus balanced\cite{schrijver1998theory}, and through well known results in optimization theory, is known to have an integral solution for any integral objective function. 

    The two types of recursive decompositions are for (1) queries with a universal attribute i.e., an attribute that appears in all relations of the query, and (2) disconnected queries. 
    We apply that this decomposition in a similar manner to the proof of \cref{prop:swpptime}, and show that an optimal solution of $\gdpadplp$ can be obtained by treating each decomposition of the query independently.
    Thus, the overall LP relaxation is always equal to the sum of the LP relaxations of the decomposed queries, and by applying these decompositions recursively and using the integral base cases, we can show that the LP relaxation of $\gdpadplp$ is integral for all queries for which $\adp$ is known to be in PTIME.

    However, a difference in the decomposition performed here vs the proof of \cref{prop:swpptime} is that the value of $k$ in the decomposition of $\adp$ must also be split across the decomposed queries. 
    We do not actually need to run the decomposed ILP over all possible splits of $k$, however we simply prove that for any possible split of $k$, the solutions of the decomposed ILPs leads to a feasible solution of the original ILP that is (recursively) known to be integral.
    The optimal split of $k$ across the decomposed queries (whatever it may be) is also then always recovered by the original LP relaxation and hence the LP relaxation is integral and equal to the optimal solution of the ILP for all queries for which $\adp$ is known to be in PTIME (for all queries for which applying these decompositions leads to one of the two base cases).
\end{proof}

\section{New Tractability Results}
\label{SEC:TRACTABILITYNEW}
\label{sec:beyond-CQs}

In this section, we show an example of a query that contains self-joins and unions, and is tractable under bag semantics for the $\dpvs$ and $\swp$ problems.
We also show that this tractability can be recovered by posing the problems in the $\gdp$ framework.
Bag semantics and queries with self-joins and unions are known to be challenging to analyze for deletion propagation problems, and papers have been written with the sole purpose of making progress on understanding the tractability landscape in this complicated settings \cite{DBLP:conf/pods/FreireGIM20,DBLP:conf/lics/BodirskySL24}. 
This section (and the one query presented in it) are meant to act as a proof of concept that the $\gdpilp$ framework can be an invaluable tool to help understand and recover tractability results for various DP problems in these complicated settings. 

We show that $\dpss$, $\dpvs$, $\swp$ and $\adp$ are tractable for $\qsjtriangletwochain$ \Cref{eqn:qsj} under bag semantics.
We know from prior work that this query is hard for $\res$\cite{DBLP:conf/pods/FreireGIM20}\footnote{It reduces to the boolean SJ-chain query $Q()\datarule R(x,y),R(y,z)$, which is shown to be hard}, and hence can infer that it is hard for $\dpss$ and $\adp$ as well - since they are both generalizations of $\res$.
\begin{equation}\qsjtriangletwochain(a) \datarule R(x, a, b),
R(x, b, c), R(x, c, a) \cup R(x, e, f), R(x, f, g)\label{eqn:qsj}\end{equation}

\begin{restatable}{proposition}{sjqueryptime}
    For all database instances $\D$ under bag semantics: 
    (1) $\gdpdpvslpparam{\qsjtriangletwochain, \D} = \dpvs(\qsjtriangletwochain, \D, t)$ for an arbitrary $t$ in the view 
    (2) $\gdpswplpparam{\qsjtriangletwochain, \D} = \swp(\qsjtriangletwochain, \D)$ 
    \label{prop:sjqueryptime}
\end{restatable}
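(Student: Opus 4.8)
The plan is to follow the template of \cref{prop:dpvsptime} and \cref{prop:swpptime}: pose each problem as a $\gdpilp$ in the $\gdp$ framework (exactly as in \cref{sec:gdp-capture-prior-variants}), decompose the resulting program into independent blocks, prove that the LP relaxation of each block is integral and matches its ILP optimum, and then argue that the optima combine additively. The decisive structural observation is that the first attribute $x$ occurs existentially in the first position of every atom of both disjuncts of $\qsjtriangletwochain$. Hence two base tuples with distinct values in their first column can never appear together in a common witness, and the witness hypergraph (and therefore the constraint matrix of the ILP) splits into one block per value of $x$. Within the block for a fixed $x = x_0$, the relation behaves like a binary relation $R'(\cdot,\cdot) = R(x_0,\cdot,\cdot)$, so the two disjuncts reduce to a directed triangle $R'(a,b), R'(b,c), R'(c,a)$ and a directed $2$-chain, both with head variable $a$. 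Because the multiplicity of an output value $a_0$ under bag semantics is the sum over blocks of the number of its witnesses in each block, both the hard constraints and the objective separate over blocks, exactly as the connected-component decomposition does in the proof of \cref{prop:swpptime}; it therefore suffices to establish LP $=$ ILP inside a single block and then sum.

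\textbf{DP-VS.} For part (1) I would show that within each block the optimal $\dpvs$ intervention is \emph{side-effect free}: for the target value $a_0$ there is a set of base tuples whose deletion destroys every triangle witness and every chain witness of $a_0$ while leaving the multiplicity of every other output value unchanged. Given this, the argument of \cref{prop:dpvsptime} transfers verbatim: the $\gdp$ objective is a sum of nonnegative view variables, the target view variable is forced to $1$ by the user deletion constraint, so $\gdpdpvslpparam{\qsjtriangletwochain, \D} \ge 1$, while the side-effect-free solution certifies an integral optimum of exactly $1$ in each block; summing over blocks gives $\gdpdpvslpparam{\qsjtriangletwochain, \D} = \dpvs(\qsjtriangletwochain, \D, t)$. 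The construction of the side-effect-free deletion exploits the position pattern of $a$ in the atoms (it occupies a head-determining slot in the ``boundary'' atoms of the triangle and of the chain); the remaining verification is that no chosen tuple is simultaneously the sole support, in a different syntactic role, of some $a' \neq a_0$. Should side-effect freeness fail on some instance, the fallback is to show the block's simplified constraint matrix is balanced and to invoke the integrality results cited in \cref{sec:related-work}.

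\textbf{SWP.} For part (2) I would reuse the simplification carried out in the proof of \cref{prop:swpptime}: after switching to preservation variables $Y[t] = 1 - X[t]$, the block program asks to preserve, for every output value, at least one witness, where preserving a witness forces preservation of all of its tuples, and it minimizes the number of preserved tuples. I would exhibit, for each output value in the block, a single witness (triangle or chain) to preserve and then lower-bound the LP by the same counting argument as in \cref{prop:swpptime}, charging to each surviving atom at least one unit of preservation; this shows the LP optimum is no smaller than the integral witness-selection optimum, hence equal. Combining blocks is again additive: distinct $x$-blocks are tuple-disjoint, so the union of per-block preservation sets is feasible and its cost is the sum, yielding $\gdpswplpparam{\qsjtriangletwochain, \D} = \swp(\qsjtriangletwochain, \D)$.

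\textbf{Main obstacle.} The difficulty is entirely in the self-joins. Because $R$ occurs several times in each disjunct, a single base tuple $R'(p,q)$ can play several roles --- e.g.\ the opening, middle, or closing atom of a triangle, or either atom of the chain --- and therefore participates in the witnesses of several \emph{different} output values. This destroys the clean ``one tuple per relation per witness'' incidence that makes the self-join-free proofs of \cref{prop:dpvsptime} and \cref{prop:swpptime} go through, and it is exactly what could otherwise produce fractional LP optima (as in \cref{ex:smoothing}). The technical heart of the proof is thus the combinatorial claim that, for this particular triangle-plus-chain pattern, these shared roles never create such a structure: for $\dpvs$ that a side-effect-free deletion always exists, and for $\swp$ that one witness per output value can be selected so the covering LP stays integral. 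The bag-semantics bookkeeping (multiplicities adding across and within blocks) is a comparatively routine consequence of the $x$-block decomposition.
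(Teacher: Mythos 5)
Your block decomposition by the value of $x$ is exactly the paper's key structural observation, but you have misread which variable is the head, and that misreading both manufactures the obstacle you call the ``technical heart'' and leaves it unresolved. Despite the $(a)$ in the displayed rule, the head variable of $\qsjtriangletwochain$ is $x$ (the second disjunct contains no $a$, so the query would be unsafe otherwise, and the paper's proof explicitly writes the subqueries as $\qsjtriangletwochain_1(x)$ and $\qsjtriangletwochain_2(x)$). With head $x$, every atom contains the head variable in its first position, so each base tuple contributes to the witnesses of exactly \emph{one} output tuple, and your per-block ``shared roles across different output values'' problem does not arise at all. For $\dpvs$ the paper's whole argument is then one line: deleting every fact whose first coordinate matches the target output destroys all of its witnesses and touches no other output, so the optimum is $1$ and the LP is sandwiched exactly as in \cref{prop:dpvsptime}. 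Under your reading (head $a$), by contrast, the claim you defer --- that a side-effect-free deletion always exists --- is simply false: take $R' = \{(1,2),(2,3),(3,1)\}$ within a single $x$-block; the unique triangle yields outputs $1,2,3$, and deleting any edge to remove output $1$ also removes outputs $2$ and $3$. So your fallback to balancedness would have to carry all of part (1), and you have not supplied it.

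For part (2) you also omit the one genuinely query-specific step the paper uses: every output of the triangle disjunct is also an output of the $2$-chain disjunct (a triangle over a fixed $x$ contains a $2$-chain), so the triangle witnesses are dominated and it suffices to preserve one $2$-chain witness per output. This is what pins the per-block ILP optimum at the size of the cheapest chain witness and lets the counting argument (via the smoothing constraints) from \cref{prop:swpptime} close the gap; your ``charge each surviving atom one unit'' sketch implicitly needs it, since otherwise the preservation constraint also ranges over the larger triangle witnesses and the lower bound you obtain need not match the integral optimum. The bag-semantics bookkeeping and the remark that a self-joining tuple may occupy several atom positions of one witness are fine and consistent with the paper.
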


\begin{proof}[Proof \cref{prop:sjqueryptime}]
    We look at each problem in turn.

    For (1) DP-VS, we simply show that like in \cref{prop:dpvsptime}, the optimal solution of $\gdpdpvsilp$ is always $1$ and the LP relaxation thus cannot take a lower value. 
    In other words, a solution can be obtained that is side effect free. 
    If we simply delete all facts that contribute to the output tuple, we will obtain a side effect free solution since every relation contains a super set of the head variables of $\qsjtriangletwochain$.
    Thus, every tuple that is consistent with the output tuple to be deleted, simply cannot be consistent with or contribute to any other output tuple, and thus this solution is side effect free.
    The rest of the proof is identical to \cref{prop:dpvsptime}.

    For (2) SWP, we show that like in \cref{prop:swpptime}, each input tuple contributes to a single output tuple and thus each projection can be treated independently. 
    We also observe that the first sub-query of $\qsjtriangletwochain$ i.e. $\qsjtriangletwochain_1(x) \datarule R(x, a, b), R(x, b, c), R(x, c, a)$ \emph{is dominated by} the second sub-query i.e., $\qsjtriangletwochain_2(x) \datarule R(x, e, f), R(x, f, g)$. 
    This means that preserving all output tuples of $\qsjtriangletwochain_2$ will also preserve all output tuples of $\qsjtriangletwochain_1$.
    In other words, the constraints enforced to preserve the output tuples of $\qsjtriangletwochain_2$ automatically imply the constraints that are enforced to preserve the output tuples of $\qsjtriangletwochain_1$, and it suffices to reason about the preservation of the output tuples of $\qsjtriangletwochain_2$.

    We then also notice that each input tuple contributes to a single output tuple, and thus we can treat each projection independently (just like in \cref{prop:swpptime}).
    The difference here is that due to the self-joins in the query, a single input tuple can contribute to a witness ``multiple times''. 
    However, this does not make a difference in the phrasing of Propagation Constraint 4, since the constraint looks at each input tuple independently. 
    It may be a witness has fewer tuples than atoms in the query (since tuples may join with themselves), however, in terms of the ILP this just means that there are fewer constraints to enforce, and the ILP is still integral.
    Due to this fact, the remainder of the proof is identical to \cref{prop:swpptime}, and we can show that the LP relaxation of $\gdpswplp$ is integral for $\qsjtriangletwochain$.
\end{proof}

\section{Additional Experiments}

In addition to the four questions studied in \cref{sec:expts}, we look at two additional questions:
(Q5) What is the scalability of $\gdpilp$ for PTIME cases with respect to different input parameters such as the number of tuples, the number of relations, and the maximum domain size?
(Q6) What is the memory usage of $\gdpilp$ for various PTIME cases of DP problems?

\smallsection{(Q5) Scalability of $\gdpilp$ for changing parameters}
\Cref{fig:scalabilitydomain}
show the scalability of $\gdpilp$ for $\dpvs,\swp$ and $\adp$ over different maximum domain sizes 
($10^2, 10^3, 10^4$, $10^5$),
and
\cref{fig:scalabilityjoins}
for $k$-star queries with increasing number of joins 
($\qthreeray$, $\qfourray$, $\qfiveray$, $\qsixray$).
We see that the scalability in both cases is very well predicted by the number of witnesses, irrespecitve of the domain size or number of tuples.
Increasing the maximum domain size leads to a less dense instance that has fewer witnesses and is thus easier to solve than a more dense instance with the same number of tuples. 
Similarly, for changing the number of joins in the query.

\smallsection{(Q6) Memory consumption of $\gdpilp$}
We conducted an additional experiment \cref{fig:memory}
that shows the measured space consumption of $\gdpilp$
using the psutil library\footnote{\url{https://pypi.org/project/psutil/}}.
\Cref{fig:memory} shows the memory consumption of $\gdpilp$ for a $\dpvs, \swp$ and $\adp$ problems
as a function of the number of witnesses, parameterized for various maximum domain sizes and queries with increasing number of joins. 
We see that although the memory consumption is not guaranteed to be sublinear, in practice it is sublinear w.r.t.\ the number of witnesses.

\begin{figure}[H]
    \centering
    \begin{subfigure}[b]{0.02\textwidth}
        \begin{turn}{90}\textbf{\qquad\qquad\quad\circled{\footnotesize\normalfont V}$\dpvs$}\end{turn}
    \end{subfigure}
    \begin{subfigure}[b]{0.47\columnwidth}
        \includegraphics[scale=0.33]{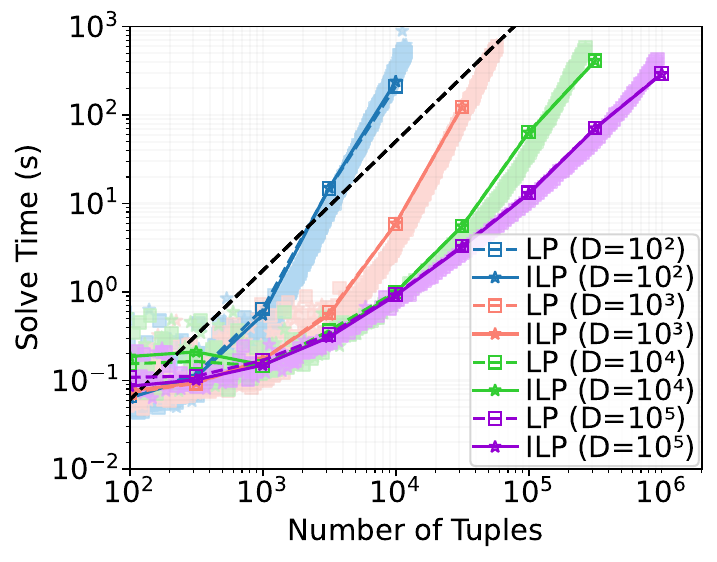}
        \caption{}
    \end{subfigure}
    \begin{subfigure}[b]{0.47\columnwidth}
        \includegraphics[scale=0.33]{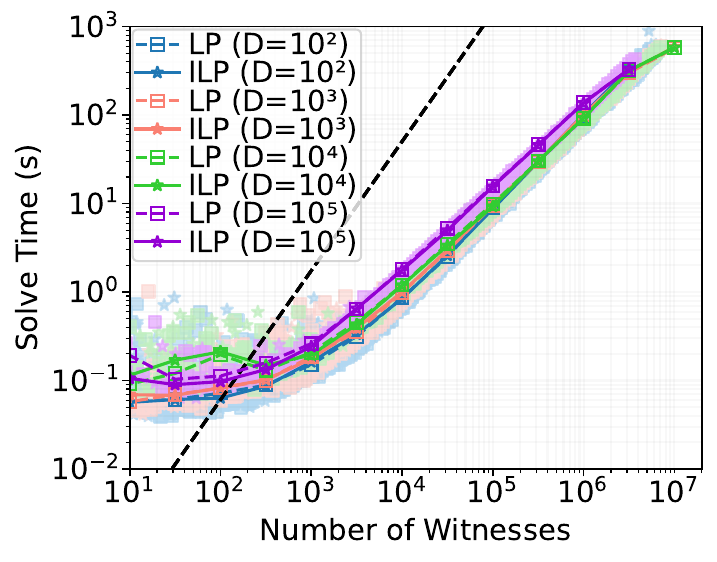}
        \caption{}
    \end{subfigure}
    \\
    \begin{subfigure}[b]{0.02\textwidth}
        \begin{turn}{90}\textbf{$\qquad\qquad\quad\circled{\footnotesize\normalfont S}\swp$}\end{turn}
    \end{subfigure}
    \begin{subfigure}[b]{0.47\columnwidth}
        \includegraphics[scale=0.33]{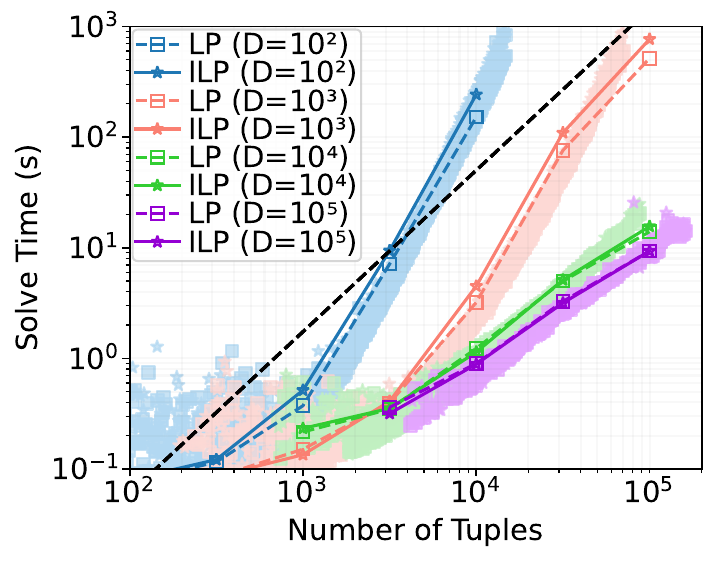}
        \caption{}
    \end{subfigure}
    \begin{subfigure}[b]{0.47\columnwidth}
        \includegraphics[scale=0.33]{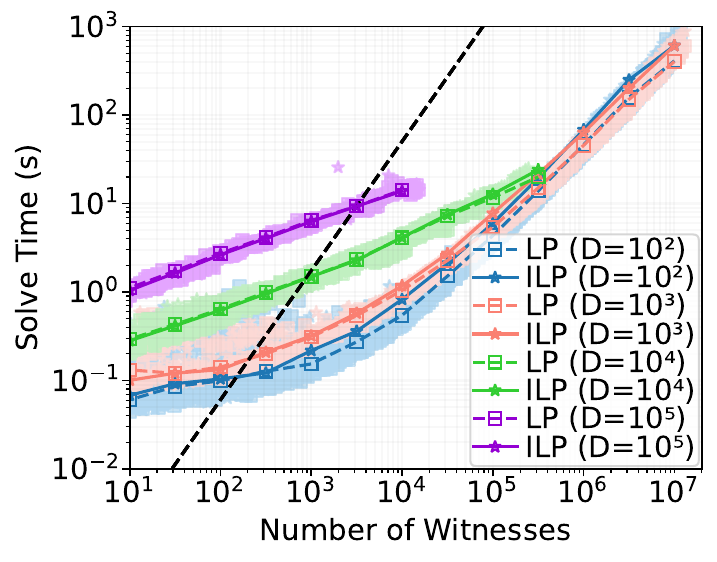}
        \caption{}
    \end{subfigure}
    \begin{subfigure}[b]{0.02\textwidth}
        \begin{turn}{90}\textbf{$\qquad\qquad\quad\circled{\footnotesize\normalfont A}\adp$}\end{turn}
    \end{subfigure}
    \begin{subfigure}[b]{0.47\columnwidth}
        \includegraphics[scale=0.33]{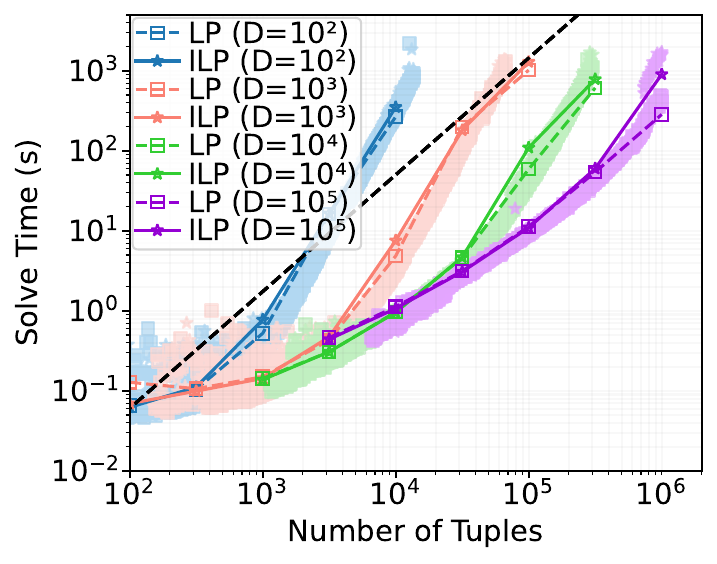}
        \caption{}
    \end{subfigure}
    \begin{subfigure}[b]{0.47\columnwidth}
        \includegraphics[scale=0.33]{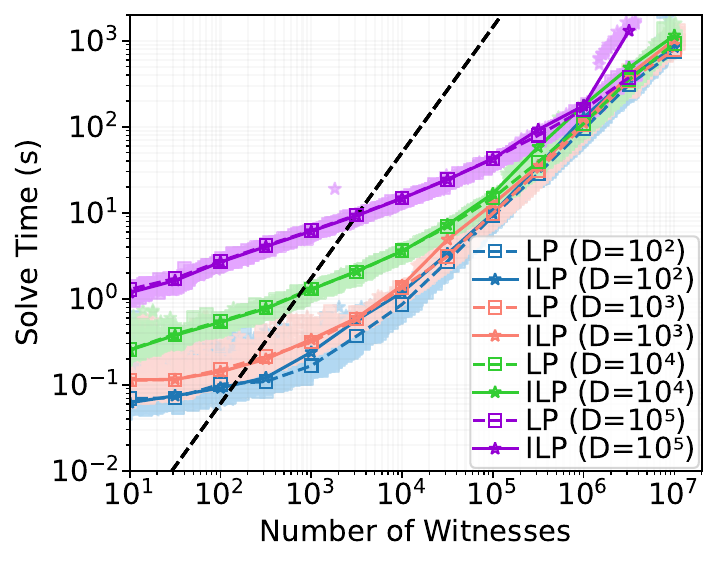}
        \caption{}
    \end{subfigure}
    \caption{
    (Q5a): Scalability of $\gdpilp$ for different domain sizes: we show the scalability with respect to the number of tuples (left) and the number of witnesses (right) for $\dpvs$ (top), $\swp$ (middle), and $\adp$ (bottom). We find that scalability of $\gdpilp$ is very well predicted by the number of witnesses, irrespective of the number of tuples or the domain size used. 
    }
    \label{fig:scalabilitydomain}
\end{figure}

\begin{figure}[H]
    \centering
    \begin{subfigure}[b]{0.02\textwidth}
        \begin{turn}{90}\textbf{\qquad\qquad\quad\circled{\footnotesize\normalfont V}$\dpvs$}\end{turn}
    \end{subfigure}
    \begin{subfigure}[b]{0.47\columnwidth}
        \includegraphics[scale=0.33]{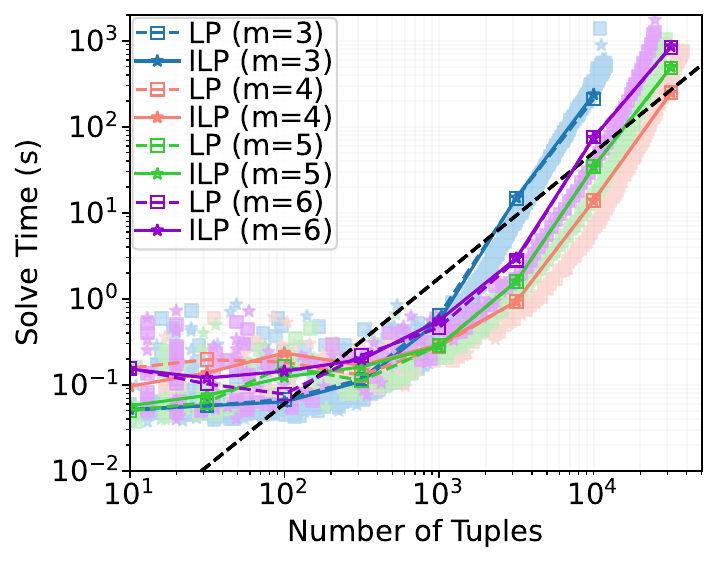}
        \caption{}
    \end{subfigure}
    \begin{subfigure}[b]{0.47\columnwidth}
        \includegraphics[scale=0.33]{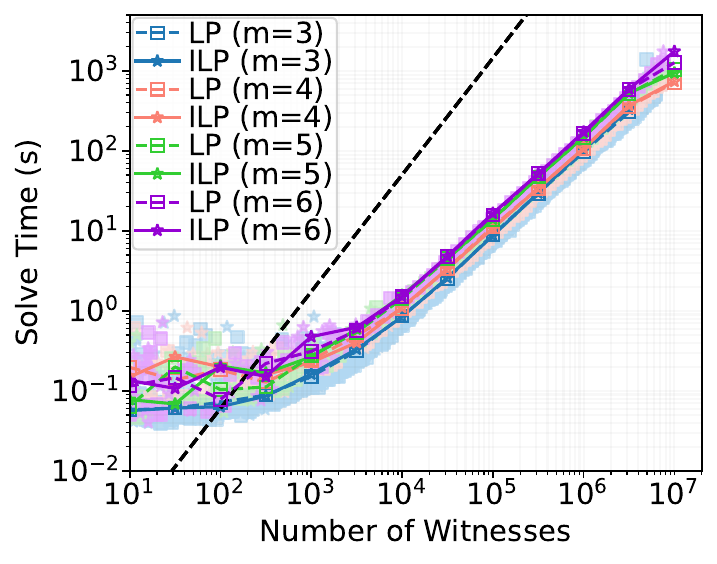}
        \caption{}
    \end{subfigure}
    \\
    \begin{subfigure}[b]{0.02\textwidth}
        \begin{turn}{90}\textbf{$\qquad\qquad\quad\circled{\footnotesize\normalfont S}\swp$}\end{turn}
    \end{subfigure}
    \begin{subfigure}[b]{0.47\columnwidth}
        \includegraphics[scale=0.33]{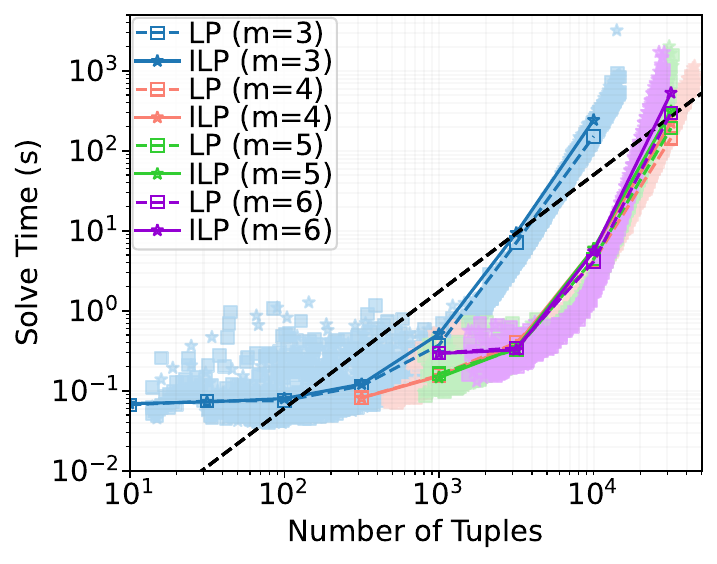}
        \caption{}
    \end{subfigure}
    \begin{subfigure}[b]{0.47\columnwidth}
        \includegraphics[scale=0.33]{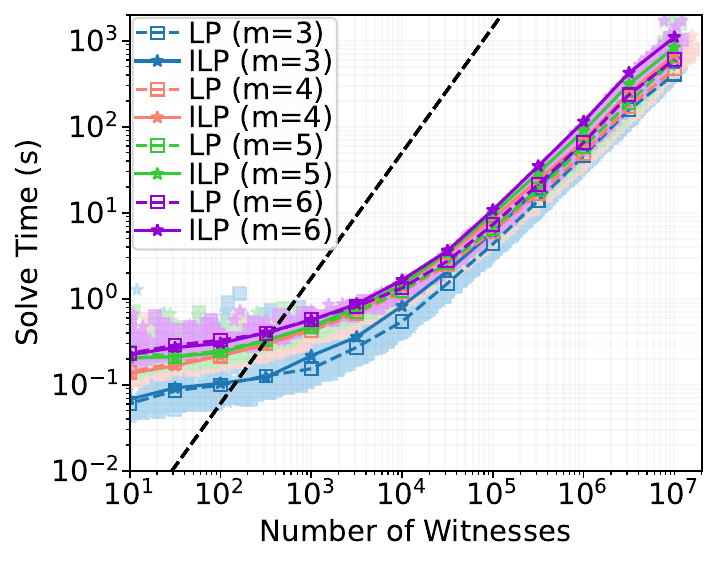}
        \caption{}
    \end{subfigure}
    \begin{subfigure}[b]{0.02\textwidth}
        \begin{turn}{90}\textbf{$\qquad\qquad\quad\circled{\footnotesize\normalfont A}\adp$}\end{turn}
    \end{subfigure}
    \begin{subfigure}[b]{0.47\columnwidth}
        \includegraphics[scale=0.33]{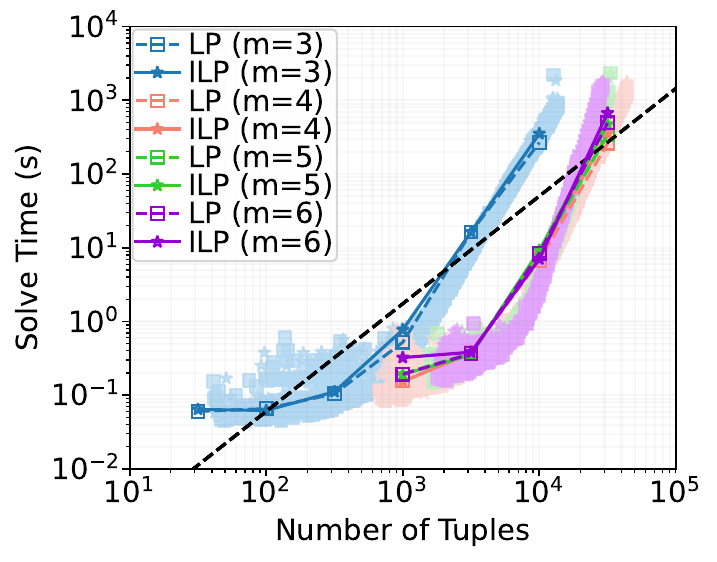}
        \caption{}
    \end{subfigure}
    \begin{subfigure}[b]{0.47\columnwidth}
        \includegraphics[scale=0.33]{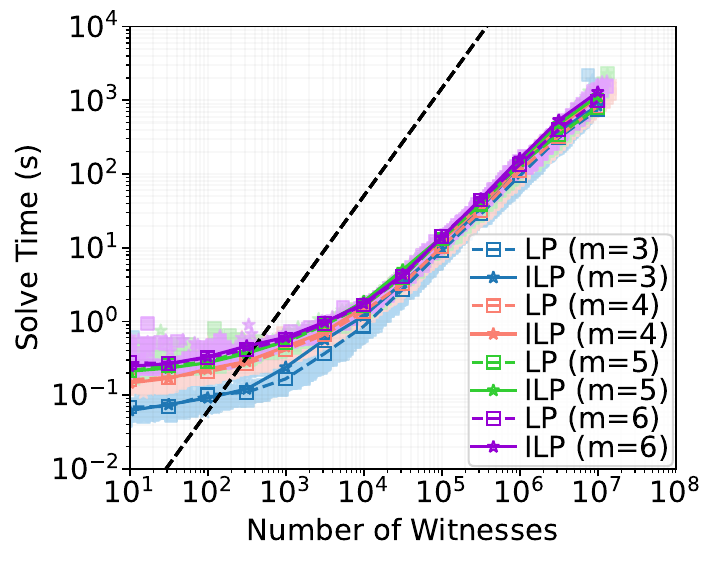}
        \caption{}
    \end{subfigure}
    \caption{
    (Q5b): Scalability of $\gdpilp$ for different query size: we show the scalability with respect to the number of tuples (left) and the number of witnesses (right) for $\dpvs$ (top), $\swp$ (middle), and $\adp$ (bottom). We find that scalability of $\gdpilp$ is very well predicted by the number of witnesses, irrespective of the number of tuples or the number of joins in the query (given by m).
    }
    \label{fig:scalabilityjoins}
\end{figure}

\begin{figure}[H]
    \centering
    \begin{subfigure}[b]{0.02\textwidth}
        \begin{turn}{90}\textbf{\qquad\qquad\quad\circled{\footnotesize\normalfont V}$\dpvs$}\end{turn}
    \end{subfigure}
    \begin{subfigure}[b]{0.47\columnwidth}
        \includegraphics[scale=0.33]{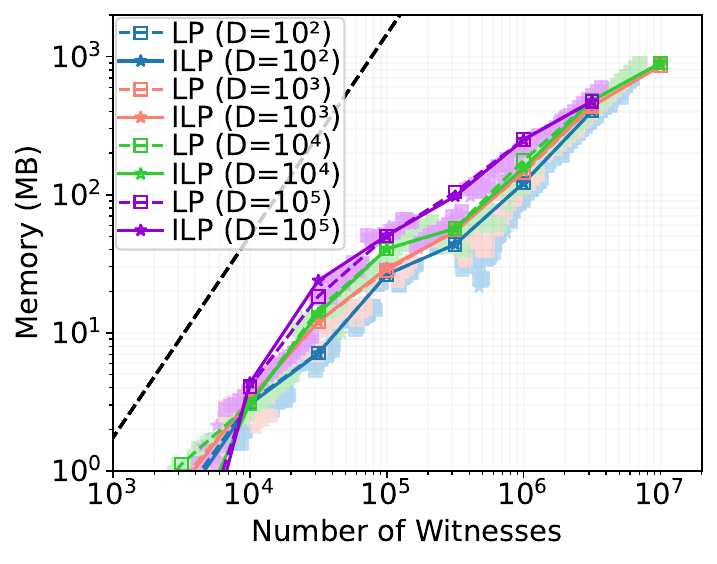}
        \caption{}
    \end{subfigure}
    \begin{subfigure}[b]{0.47\columnwidth}
        \includegraphics[scale=0.33]{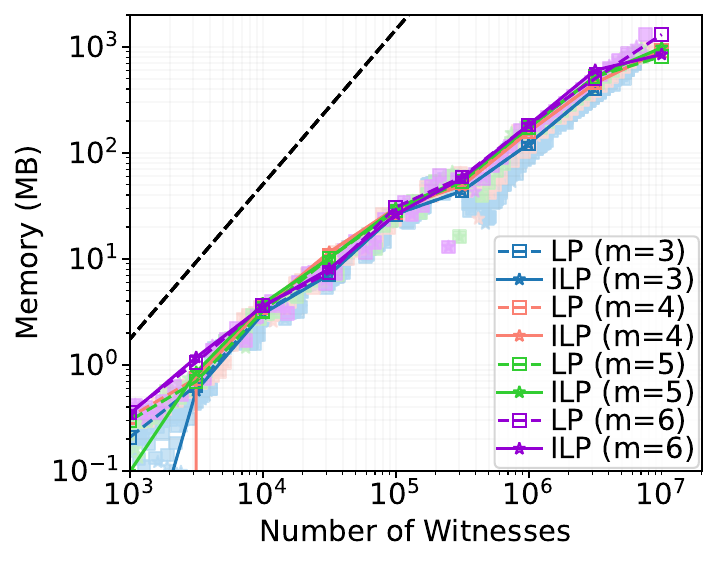}
        \caption{}
    \end{subfigure}
    \\
    \begin{subfigure}[b]{0.02\textwidth}
        \begin{turn}{90}\textbf{$\qquad\qquad\quad\circled{\footnotesize\normalfont S}\swp$}\end{turn}
    \end{subfigure}
    \begin{subfigure}[b]{0.47\columnwidth}
        \includegraphics[scale=0.33]{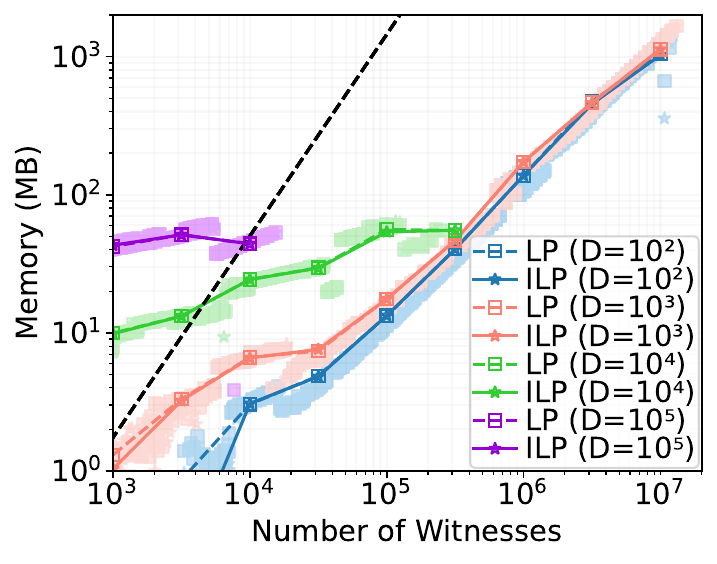}
        \caption{}
    \end{subfigure}
    \begin{subfigure}[b]{0.47\columnwidth}
        \includegraphics[scale=0.33]{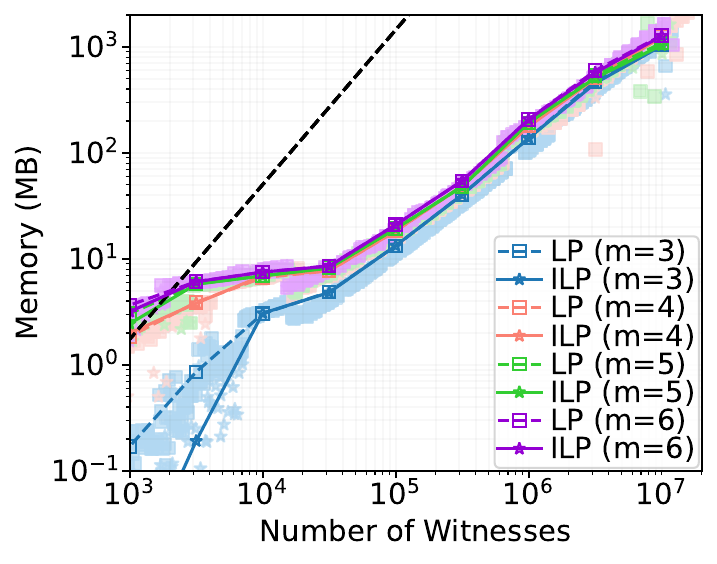}
        \caption{}
    \end{subfigure}
    \begin{subfigure}[b]{0.02\textwidth}
        \begin{turn}{90}\textbf{$\qquad\qquad\quad\circled{\footnotesize\normalfont A}\adp$}\end{turn}
    \end{subfigure}
    \begin{subfigure}[b]{0.47\columnwidth}
        \includegraphics[scale=0.33]{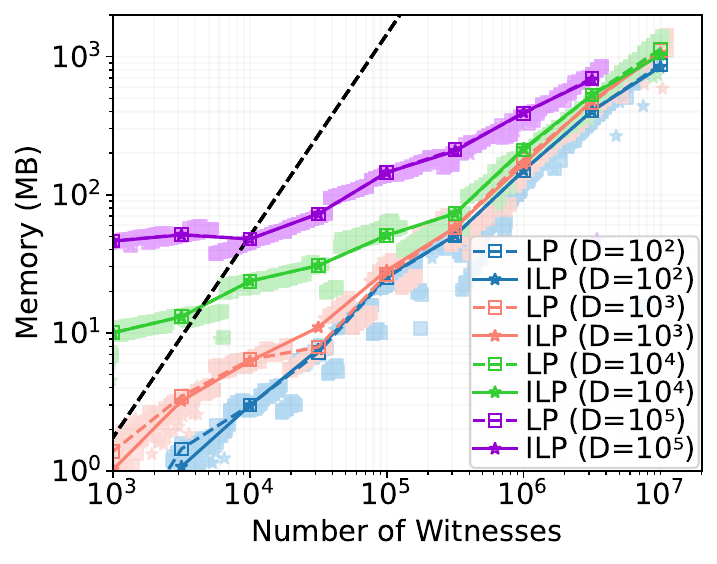}
        \caption{}
    \end{subfigure}
    \begin{subfigure}[b]{0.47\columnwidth}
        \includegraphics[scale=0.33]{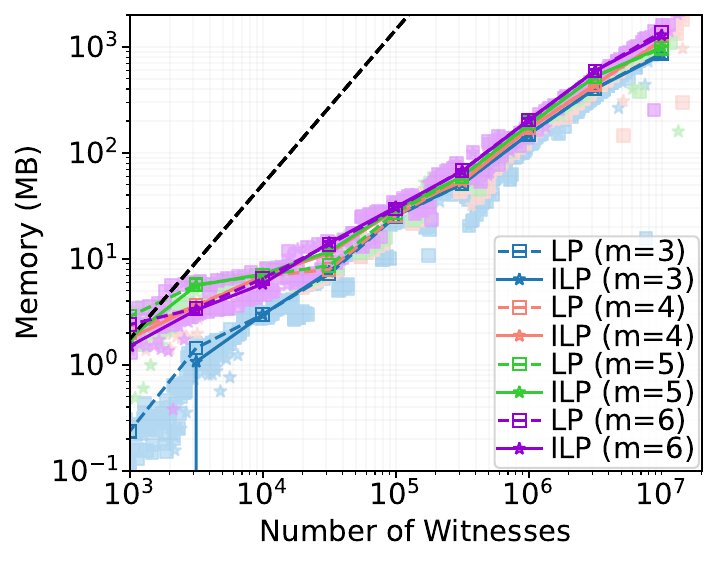}
        \caption{}
    \end{subfigure}
    \caption{
    (Q6): Memory consumption of $\gdpilp$ for three prior studied problems 
    \protect\circled{\footnotesize\normalfont V}~$\dpvs$, 
    \protect\circled{\footnotesize\normalfont S}~$\swp$, and 
    \protect\circled{\footnotesize\normalfont A}~$\adp$. 
    The left column shows the memory consumption over different instances with increasing domain sizes, and the right column shows the memory consumption over different queries with increasing number of joins.
    In all cases, $\gdpilp$ has sublinear memory consumption w.r.t.\ the number of witnesses.
    }
    \label{fig:memory}
\end{figure}           

\end{document}